\newtheorem{theorem}{Theorem}
\newtheorem{cor}{Corollary}
\newtheorem{lem}{Lemma}
\newcommand{\underb}{\underbrace}
\newcommand{\udl}{\underline}
\newcommand{\lv}{\lVert}
\newcommand{\rv}{\rVert}
\newcommand{\lf}{\left}
\newcommand{\ri}{\right}
\newcommand{\ol}{\overline}
\newcommand{\mrm}{\mathrm}
\newcommand{\cube}{\mathrm{cube}}
\newcommand{\eff}{\mathrm{eff}}
\newcommand{\rc}{\mathrm{rc}}
\newcommand{\tr}{\mathrm{tr}}
\newcommand{\ex}{\mathrm{ex}}
\newcommand{\lattice}{\mathrm{lattice}}
\newcommand{\packing}{\mathrm{packing}}
\newcommand{\beq}{\begin{equation}}
\newcommand{\eeq}{\end{equation}}
\newcommand{\SNR}{\mathsf{SNR}}
\newcommand{\singlecolumntype}{1}
\begin{document}

%
\title{Fundamental Limits of Infinite Constellations in MIMO Fading Channels}

\author{Yair Yona, and Meir Feder,~\IEEEmembership{Fellow,~IEEE}
\ifthenelse{\equal{\singlecolumntype}{1}} {\thanks{The material in
this paper was presented in part at the IEEE International Symposium
on Information\newline Theory (ISIT) 2011.}
\thanks{This research was
supported by THE ISRAEL SCIENCE FOUNDATION grant No. 634/09.}
\thanks{The authors are with the Department of Electrical Engineering --
Systems, Tel-Aviv University, Ramat-Aviv 69978, Israel (e-mails:
\{yairyo,meir\}@eng.tau.ac.il).}} {\thanks{The material in this
paper was presented in part at the IEEE International Symposium on
Information Theory (ISIT) 2011.}
\thanks{This research was supported by THE ISRAEL SCIENCE FOUNDATION
grant No. 634/09.}
\thanks{The authors are with the Department of Electrical Engineering --
Systems, Tel-Aviv University, Ramat-Aviv 69978, Israel (e-mails:
\{yairyo,meir\}@eng.tau.ac.il).}}}

\maketitle

\begin{abstract}
The fundamental and natural connection between the infinite
constellation (IC) dimension and the best diversity order it can
achieve is investigated in this paper. In the first part of this
work we develop an upper bound on the diversity order of IC's for
any dimension and any number of transmit and receive antennas. By
choosing the right dimensions, we prove in the second part of this
work that IC's in general and lattices in particular can achieve the
optimal diversity-multiplexing tradeoff of finite constellations.
This work gives a framework for designing lattices for
multiple-antenna channels using lattice decoding.
\end{abstract}

\IEEEpeerreviewmaketitle

\section{Introduction}
The use of multiple antennas in wireless communication has certain
inherent advantages. On one hand, using multiple antennas in fading
channels allows to increase the transmitted signal reliability, i.e.
diversity. For instance, diversity can be attained by transmitting
the same information on different paths between
transmitting-receiving antenna pairs with i.i.d Rayleigh fading
distribution. The number of independent paths used is the diversity
order of the transmitted scheme. On the other hand, the use of
multiple antennas increases the number of degrees of freedom
available by the channel. In
\cite{TelatarCapacityMIMO},\cite{FoschiniCapacityMIMO} the ergodic
channel capacity was obtained for multiple-input multiple-output
(MIMO) systems with $M$ transmit and $N$ receive antennas, where the
paths have i.i.d Rayleigh fading distribution. It was shown that for
large signal to noise ratios ($\SNR$), the capacity behaves as
$C(\SNR)\approx \min(M,N)\log(\SNR)$. The multiplexing gain is the
number of degrees of freedom utilized by the transmitted scheme.

For the quasi-static Rayleigh flat-fading channel, Zheng and Tse
\cite{TseDivMult2003} characterized the dependence between the
diversity order and the multiplexing gain, by deriving the optimal
tradeoff between diversity and multiplexing, i.e. for each
multiplexing gain the maximal diversity order was found. They showed
that the optimal diversity-multiplexing tradeoff (DMT) can be
attained by ensemble of i.i.d Gaussian codes, given that the block
length is greater or equal to $N+M-1$. For this case, the tradeoff
curve takes the form of the piecewise linear function that connects
the points $(N-l)(M-l)$, $l=0,1,\dots, \min(M,N)$.

Space-time codes are coding schemes designed for MIMO systems e.g.
see
\cite{CalderbankSpaceTimePerformanceCriterion},\cite{CalderbankSpaceTimeOrthogonal}
\cite{EliaExplicitSTC} and references therein. The design of
space-time codes in these works pursue various goals such as
maximizing the diversity order, maximizing the multiplexing gain, or
achieving the optimal DMT. El Gamal et al \cite{ElGamalLAST2004}
were the first to show that lattice coding and decoding achieve the
optimal DMT. They presented lattice space-time (LAST) codes. These
space time codes are subsets of an infinite lattice, where the
lattice dimensionality equals to the number of degrees of freedom
available by the channel, i.e. $\min(M,N)$, multiplied by the number
of channel uses. By using a random ensemble of nested lattices,
common randomness, minimum mean square error (MMSE) estimation
followed by lattice decoding and modulo lattice operation, they
showed that LAST codes can achieve the optimal DMT. It is worth
mentioning that the MMSE estimation and the modulo operation take in
a certain sense into account the finite code book.

There has been an extensive research on explicit coding schemes,
based on lattices, which are DMT optimal. Such an explicit coding
schemes that attain the optimal DMT for any number of transmit and
receive antennas were presented in \cite{EliaExplicitSTC}. In
addition it was shown in \cite{EliaExplicitSTC} that $M$ channel
uses are sufficient to obtain the optimal DMT. Another step towards
finding explicit space-time coding schemes that attain the optimal
DMT with low computational complexity was made by Jalden and Elia
\cite{EliaJaldenDMTOptLRLinearLaticeDecoding}. They considered
explicit coding schemes based on the intersection between an
underlying lattice and a shaping region. They showed that for the
cases where these coding schemes attain the optimal DMT using
maximum-likelihood (ML) decoding, they also attain it when using
MMSE estimation in the receiver, followed by lattice decoding. The
MMSE estimation relies on the power constraint, i.e. the shaping
region boundaries. In addition, it was shown in
\cite{EliaJaldenDMTOptLRLinearLaticeDecoding} that by applying
lattice reduction methods, the optimal DMT is attained when using
suboptimal linear lattice decoders that require linear complexity as
a function of the rate. This result applies to wide range of
explicit space-time codes such as golden-codes
\cite{BelfioreGoldenCodes}, perfect space-time codes
\cite{BelfiorePerfectCodes} and in general cyclic division algebra
based space-time codes \cite{EliaExplicitSTC}, and as this codes are
approximately universal
\cite{TavildarViswanathApproximatelyUniversal} it also applies to
every statistical characterization of the fading channel. Note that
these schemes take into consideration the finiteness of the codebook
in the decoder. In our work we refer to \emph{regular} lattice
decoding as decoding over the infinite lattice without taking into
consideration the finiteness of the codebook.

The work in \cite{ElGamalLAST2004} also includes for the case $N\ge
M$ a lower bound on the diversity order of LAST codes shaped into a
sphere when regular lattice decoder is employed in the receiver. For
sufficiently large block length it is shown that $d(r)\ge
(N-M+1)(M-r)$ where $r$ is the multiplexing gain and the lattice
dimension per channel use is $M$. Taherzadeh and Khandani showed in
\cite{TaherzadehLimitationsNaiveLatticeDecoding} that this is also
an upper bound on the diversity order of any LAST code shaped into a
sphere and decoded with \emph{regular} lattice decoding. These
results show that LAST codes together with regular lattice decoding
are suboptimal compared to the optimal DMT of power constrained
constellations.

Infinite constellations (IC's) are structures in the Euclidean space
that have no power constraint. In \cite{PoltirevJournal}, Poltyrev
analyzed the performance of IC's over the additive white Gaussian
noise (AWGN) channel. In this work we first extend the definitions
of diversity order and multiplexing gain to the case where there is
no power constraint. We also introduce a new term: the average
number of dimensions per channel use, which is essentially the IC
dimension divided by the number of channel uses. Then we extend the
methods used in \cite{PoltirevJournal} in order to derive an upper
bound on the diversity of any IC with certain average number of
dimensions per channel use, as a function of the multiplexing gain.
It turns out that for a given number of dimensions per channel use
the diversity is a straight line as a function of the multiplexing
gain, that depends on the number of transmit and receive antennas.
This analysis holds for $any$ $M$ and $N$, and also applies for
lattices with regular lattice decoding. We also find the average
number of dimensions per channel use for which the upper bounds
coincide with the optimal DMT of finite constellations. Finally, we
show that each segment in the optimal DMT is attained by a sequence
of lattices with a corresponding average number of dimensions per
channel use, when using regular lattice decoder, i.e. for each point
in the DMT of \cite{TseDivMult2003} there exists a lattice sequence
of certain dimension that achieves it with regular lattice decoding.
Hence, this work characterizes the best DMT IC's may attain for any
average number of dimensions per channel use, and also proves that
lattices can achieve the optimal DMT when \emph{regular} lattice
decoder is employed in the receiver, by adapting their
dimensionality. It is important to note that when the IC is a
lattice, we show that the multiplexing gain of infinite lattices and
finite constellations coincide.

This work gives a framework for designing lattices for
multiple-antenna channels using regular lattice decoding. It also
shows the fundamental and natural connection between the IC
dimension and its optimal diversity order. For instance, it is shown
that for the case $M=N=2$, the maximal diversity order of $4$ can be
achieved (with regular lattice decoding) by a lattice that has at
most $\frac{4}{3}$ average number of dimensions per channel use. On
the other hand the Alamouti scheme \cite{AlamoutiScheme}, that also
has maximal diversity order of $4$, utilizes only a single dimension
per channel use in this set up. Hence, there is still a room to
improve by a $\frac{1}{3}$ of a dimension per channel use. In
addition, while in \cite{ElGamalLAST2004},
\cite{EliaJaldenDMTOptLRLinearLaticeDecoding}, the MMSE estimation
improves the channel in such a manner that enables the lattice
decoder to attain the optimal DMT, this work shows that when
considering regular lattice decoding, reducing the lattice
dimensionality takes the role of MMSE estimation in the sense of
improving the channel such that the optimal DMT is obtained.
Finally, the analysis in this work gives another geometrical
interpretation to the optimal DMT.

The outline of the paper is as follows. In section
\ref{sec:BasicDefinitions} basic definitions for the fading channel
and IC's are given. Section \ref{sec:LowerBoundErrorProb} presents
for each channel realization a lower bound on the average decoding
error probability of any IC, and an upper bound on the DMT of any
IC. An upper bound on the error probability of ensemble of IC's for
each channel realization, a transmission scheme that attains the
optimal DMT, and some averaging arguments on how the optimal DMT is
attained by IC's, are all presented in section
\ref{sec:LowerBoundDiversityOrder}. Discussion on the results, that
addresses the difference between lattice constellations and full
dimension lattice based finite constellations, followed by a
geometrical interpretation to the optimal DMT, and a discussion on
the relation between the multiplexing gains of an IC and a finite
constellation, is presented in section
\ref{sec:LatticeVsLatticeBasedFC}. This discussion presents an
intuitive interpretation to our results and relies mainly on the
basic definitions given in section \ref{sec:BasicDefinitions}.

\section{Basic Definitions}\label{sec:BasicDefinitions}
We refer to the countable set $S=\{s_{1},s_{2},\dots\}$ in
$\mathbb{C}^{n}$ as infinite constellation (IC). Let
$\cube_{l}(a)\subset\mathbb{C}^{n}$ be a (probably rotated)
$l$-complex dimensional cube ($l\le n$) with edge of length $a$
centered around zero. An IC $S_{l}$ is $l$-complex dimensional if
there exists rotated $l$-complex dimensional cube $\cube_{l}(a)$
such that $S_{l}\subset\lim_{a\to\infty}\cube_{l}(a)$ and \emph{l}
is minimal. $M(S_{l},a)=|S_{l}\bigcap \cube_{l}(a)|$ is the number
of points of the IC $S_{l}$ inside $\cube_{l}(a)$. In
\cite{PoltirevJournal}, the $n$-complex dimensional IC density for
the AWGN channel was defined as the upper limit (the limit supremum)
of the ratio
$\gamma_{\mathrm{G}}=\limsup_{a\to\infty}\frac{M(S,a)}{a^{2n}}$ and
the volume to noise ratio (VNR) was given as
$\mu_{\mathrm{G}}=\frac{\gamma_{\mathrm{G}}^{-\frac{1}{n}}}{2\pi
e\sigma^{2}}$.

The Voronoi region of a point $x\in S_{l}$, denoted as $V(x)$, is
the set of points in $\lim_{a\to\infty}\cube_{l}(a)$ closer to $x$
than to any other point in the IC. The effective radius of the point
$x\in S_{l}$, denoted as $r_{\mathrm{eff}}(x)$, is the radius of the
$l$-complex dimensional ball that has the same volume as the Voronoi
region, i.e. $r_{\mathrm{eff}}(x)$ satisfies
\begin{equation}\label{eq:EffectiveRadius}
|V(x)|=\frac{\pi^{l} r_{\mathrm{eff}}^{2\cdot l}(x)}{\Gamma(l+1)}.
\end{equation}

A complex lattice $\Lambda$ is an IC that constitutes a discrete set
in $\mathbb{C}^{n}$, closed under addition. The Voronoi regions of
all lattice points are identical and satisfy
\begin{equation}
|V \lf(\udl{x}\ri)|=\gamma_{G}^{-1}\quad\forall \udl{x}\in\Lambda.
\end{equation}
Hence, for large dimension the VNR of a lattice, $\mu_{G}$,
approaches the ratio $\frac{r_{\eff}^{2}}{\sigma^{2}}$ where
$r_{\eff}$ is the lattice effective radius. Regular lattice decoder
finds the closest lattice point to an observation
$\udl{y}\in\mathbb{C}^{n}$, i.e. \emph{regular} lattice decoder
finds the solution to the optimization problem
\begin{equation}\label{eq:BasicDefinitionsRegularLatticeDecoding}
\arg \min_{\udl{x}\in\Lambda}\lVert\udl{y}-\udl{x}\rVert.
\end{equation}
Note that these definitions can be also extended in a straight
forward manner to an IC that constitutes a real lattice in
$\mathbb{R}^{2n}$. For instance when the first $n$ entries of each
lattice point are transmitted on the real part of the IC, and the
second $n$ entries of each lattice point are transmitted on the
imaginary part of the IC.

We consider a quasi static flat-fading channel with $M$ transmit and
$N$ receive antennas. We assume for this MIMO channel perfect
channel knowledge at the receiver and no channel knowledge at the
transmitter. The channel model is as follows:
\begin{equation} \label{eq:Channel Fading}
\underline{y}_{t}=H\cdot\underline{x}_{t}+\rho^{-\frac{1}{2}}\underline{n}_{t}\qquad
t=1,\dots, T
\end{equation}
where $\udl{x}_{t}$, $t=1,\dots, T$ is the transmitted signal,
$\underline{n}_{t}\sim CN(\underline{0},\frac{2}{2\pi e}I_{N})$ is
the additive noise where $CN$ denotes complex-normal, $I_{N}$ is the
$N$-dimensional unit matrix, and $\udl{y}_{t}\in\mathbb{C}^{N}$. $H$
is the fading matrix with $N$ rows and $M$ columns where
$h_{i,j}\sim CN(0,1)$, $1\le i \le N$, $1\le j \le M$, and
$\rho^{-\frac{1}{2}}$ is a scalar that multiplies each element of
$\underline{n}_{t}$, where $\rho$ plays the role of average $\SNR$
in the receive antenna for power constrained constellations that
satisfy $\frac{1}{T} \sum_{t=1}^{T}
E\{\lVert\udl{x}_{t}\rVert^{2}\}\le \frac{2}{2\pi e}$.

We also define the extended vector
$\udl{x}=\{\udl{x}_{1}^{\dagger},\dots,\udl{x}_{T}^{\dagger}\}^{\dagger}$.
Suppose $\udl{x}\in S_{l}\subset\mathbb{C}^{MT}$, where $S_{l}$ is
an IC with density
$\gamma_{tr}=\limsup_{a\to\infty}\frac{M(S_{l},a)}{a^{2\cdot l}}$
$\big(a^{2\cdot l}$ is the volume of $cube_{l}(a)\big)$. By defining
$H_{ex}$ as an $NT\times MT$ block diagonal matrix, where each block
on the diagonal equals $H$,
$\underline{n}_{\mathrm{ex}}=\rho^{-\frac{1}{2}}\cdot\{\underline{n}_{1}^{\dagger},
\dots,\underline{n}_{T}^{\dagger}\}^{\dagger}\in\mathbb{C}^{NT}$ and
$\udl{y}_{\mathrm{ex}}\in\mathbb{C}^{NT}$ we can rewrite the channel
model in \eqref{eq:Channel Fading} as
\begin{equation}\label{eq:ExtendedChannelModel}
\underline{y}_{\mathrm{ex}}=H_{\mrm{ex}}\cdot\underline{x}+\underline{n}_{\mathrm{ex}}.
\end{equation}

In the sequel we use $L$ to denote $min(M,N)$. We define as
$\sqrt{\lambda}_{i}$, $1\le i\le L$ the real valued, non-negative
singular values of $H$. We assume $\sqrt{\lambda}_{L}\ge
\dots\ge\sqrt{\lambda}_{1}>0$. Our analysis is done for large values
of $\rho$ (large VNR at the transmitter). We state that
$f(\rho)\dot{\ge}g(\rho)$ when $\lim_{\rho\to\infty}-\frac{\ln\left(
f\left(\rho\right)\right)}{\ln(\rho)}\le
-\frac{\ln\left(g\left(\rho\right)\right)}{\ln(\rho)}$, and also
define $\dot{\le}$, $\dot{=}$ in a similar manner by substituting
$\le$ with $\ge$, $=$ respectively.

We now turn to the IC definitions in the transmitter. We define the
average number of dimensions per channel use as the IC dimension
divided by the number of channel uses. We denote the average number
of dimensions per channel use by $K$. Let us consider a $KT$-complex
dimensional sequence of IC's $S_{KT}(\rho)$, where $K\le L$, and $T$
is the number of channel uses. First we define
$\gamma_{tr}=\rho^{rT}$ as the density of $S_{KT}(\rho)$ in the
transmitter. The IC multiplexing gain is defined as
\begin{equation}\label{eq:ICMGDefinition}
MG(r)=\lim_{\rho\to\infty}\frac{1}{T}\log_{\rho}(\gamma_{\mathrm{tr}}+1)=
\lim_{\rho\to\infty}\frac{1}{T}\log_{\rho}(\rho^{rT}+1).
\end{equation}
Note that $MG(r)=max(0,r)$, i.e. for $0\le r\le K$ the multiplexing
gain is $r$. Roughly speaking, $\gamma_{tr}=\rho^{rT}$ gives us the
number of points of $S_{KT}(\rho)$ within the $KT$-complex
dimensional region $cube_{KT}(1)$. In order to get the multiplexing
gain, we normalize the exponent of the number of points within
$cube_{KT}(1)$, $rT$, by the number of channel uses - $T$. Note that
the IC multiplexing gain, $r$, can be directly translated to finite
constellation multiplexing gain $r$ by considering the IC points
within a shaping region. For more details see
\ref{subseec:LatticeVsLatticeBasedFCICMGtoFCMG}. The VNR in the
transmitter is
\begin{equation}
\mu_{\mathrm{tr}}=\frac{\gamma_{\mathrm{tr}}^{-\frac{1}{KT}}}{2\pi
e\sigma^{2}}=\rho^{1-\frac{r}{K}}
\end{equation}
where $\sigma^{2}=\frac{\rho^{-1}}{2\pi e}$ is each dimension noise
variance. Now we can understand the role of the multiplexing gain
for IC's. The AWGN variance decreases as $\rho^{-1}$, where the IC
density increases as $\rho^{rT}$. When $r=0$ we get constant IC
density as a function of $\rho$, where the noise variance decreases,
i.e. we get the best error exponent. In this case the number of
points within $cube_{KT}(1)$ remains constant as a function of
$\rho$. On the other hand, when $r=K$, we get VNR $\mu_{tr}=1$, and
from \cite{PoltirevJournal} we know that it inflicts average error
probability that is bounded away from zero. In this case, the
increase in the number of IC points within $cube_{KT}(1)$ occurs at
maximal rate.

Now we turn to the IC definitions in the receiver. First we define
the set $H_{ex}\cdot cube_{KT}(a)$ as the multiplication of each
point in $cube_{KT}(a)$ with the matrix $H_{ex}$. In a similar
manner $S_{KT}^{'}=H_{ex}\cdot S_{KT}$. The set $H_{ex}\cdot
cube_{KT}(a)$ is almost surely $KT$-complex dimensional (where $K\le
L$) and in this case $M(S_{KT},a)=|S_{KT}\bigcap
\cube_{KT}(a)|=|S_{KT}^{'}\bigcap (H_{ex}\cdot\cube_{KT}(a))|$. We
define the receiver density as
$$\gamma_{\mathrm{rc}}=\limsup_{a\to\infty}\frac{M(S_{KT},a)}{\mathbf{Vol}(H_{ex\cdot}\cube_{KT}(a))}$$
i.e., the upper limit of the ratio of the number of IC points in
$H_{ex\cdot}\cube_{KT}(a)$, and the volume of
$H_{ex\cdot}\cube_{KT}(a)$. Based on the majorization property of a
matrix singular values \cite{JiangGMD}, we get that the volume of
the set $H_{ex}\cdot cube_{KT}(a)$ is smaller than
$a^{2KT}\cdot\lambda_{L}^{T}\dots\lambda_{L-B+1}^{T}\cdot\lambda_{L-B}^{\beta
T}$, assuming $K=B+\beta$ where $B\in\mathbb{N}$ and $0<\beta\le 1$,
i.e. the volume is smaller than the multiplication of the $B+1$
strongest singular values, raised to the power of the maximal amount
of channel uses each can take place in. Hence we get
\begin{equation}\label{eq:GammaRcUpperBound}
\gamma_{\mathrm{rc}}\ge\rho^{rT}\lambda_{L}^{-T}\dots\lambda_{L-B+1}^{-T}\cdot\lambda_{L-B}^{-\beta
T}
\end{equation}
and the receiver VNR is
\begin{equation}\label{eq:MuRc}
\mu_{\mathrm{rc}}\le\rho^{1-\frac{r}{K}}\cdot\lambda_{L}^{\frac{1}{K}}\dots\lambda_{L-B+1}^{\frac{1}{K}}\cdot\lambda_{L-B}^{\frac{\beta}{K}}.
\end{equation}
Note that for $N\ge M$ and $K=M$ we get
$\gamma_{\mathrm{rc}}=\rho^{rT}\cdot\prod_{i=1}^{M}\lambda_{i}^{-T}$
and
$\mu_{\mathrm{rc}}=\rho^{1-\frac{r}{M}}\cdot\prod_{i=1}^{M}\lambda_{i}^{\frac{1}{M}}$.
The average decoding error probability over the IC points of
$S_{KT}(\rho)$, for a certain channel realization $H$, is defined as
\begin{equation}\label{eq:AverageDecodingErrorProbability}
\overline{P_{e}}(H,\rho)=\limsup_{a\to\infty}\frac{\sum_{\underline{x}^{'}\in
S_{KT}^{'}\bigcap
(H_{ex}\cdot\cube_{KT}(a))}P_{e}(\underline{x}^{'},H,\rho)}{M(S_{KT},a)}
\end{equation} where $P_{e}(\underline{x}^{'},H,\rho)$ is the error
probability associated with $\udl{x}^{'}$. The average decoding
error probability of $S_{KT}(\rho)$ over all channel realizations is
$\overline{P_{e}}(\rho)=E_{H}\{\overline{P_{e}}(H,\rho)\}$. Hence
the $diversity$ $order$ equals
\begin{equation}\label{eq:DiversityOrder}
d=-\lim_{\rho\to\infty}\log_{\rho}(\overline{P_{e}}(\rho))
\end{equation}

\section{Upper Bound on the Diversity
Order}\label{sec:LowerBoundErrorProb} In this section we derive an
upper bound on the diversity order of any IC with average number of
dimensions per channel use $K$ and any value of $T$, $M$ and $N$. In
Theorem \ref{Th:LowerBoundChannelReal} we derive for each channel
realization a lower bound on the error probability of any IC with
$K$ average number of dimensions per channel use. In Theorem
\ref{Th:UpperBoundDiversityOrder} we derive an upper bound on the
DMT of any sequence of IC's with $K$ average number of dimensions
per channel use. Finally in Corollary \ref{Cor:MaximalDiverityOrder}
we show that by choosing the correct average number of dimensions
per channel use, the upper bound coincides with the optimal DMT of
finite constellations.

As in \cite{TseDivMult2003} and \cite{ElGamalLAST2004}, we also
define $\lambda_{i}=\rho^{-\alpha_{i}}$, $1\le i \le L$. When the
entries of the channel matrix $H$ are all i.i.d with PDF $CN(0,1)$,
the PDF of its singular values is of the form
$\rho^{-\sum_{i=1}^{L}(|N-M|+2i-1)\alpha_{i}}$ for large $\rho$
\cite{TseDivMult2003}, where following the definitions above
$0\le\alpha_{L}\le\dots \le\alpha_{1}$. \footnote{A generalization
of the Rayleigh fading channel is the Jacobi fading channel. The
optimal DMT for this channel was derived in
\cite{DarFederJacobiChannel}.} By assigning in
\eqref{eq:GammaRcUpperBound}, \eqref{eq:MuRc} respectively, we can
write
$$\gamma_{\mathrm{rc}}\ge\rho^{T(r+\sum_{i=0}^{B-1}\alpha_{L-i}+\beta\alpha_{L-B})}$$
and
$$\mu_{\mathrm{rc}}\le\rho^{1-\frac{1}{K}(r+\sum_{i=0}^{B-1}\alpha_{L-i}+\beta\alpha_{L-B})}.$$

\begin{theorem}\label{Th:LowerBoundChannelReal}
For any $KT$-complex dimensional IC $S_{KT}(\rho)$ with transmitter
density $\gamma_{\mathrm{tr}}=\rho^{rT}$ and channel realization
$\underline{\alpha}=(\alpha_{1},\dots,\alpha_{L})$, we have the
following lower bound on the average decoding error probability for
$0\le r\le K$
$$\overline{P_{e}}(H,\rho)> \frac{C(KT)}{4}e^{-\mu_{\mathrm{rc}}\cdot A(KT)+(KT-1)\ln
(\mu_{\mathrm{rc}})}$$ where
$A(KT)=e\cdot\Gamma(KT+1)^{\frac{1}{KT}}$ and
$C(KT)=\frac{e^{KT-\frac{3}{2}}\Gamma(KT+1)^{\frac{KT-1}{KT}}}{2\cdot\Gamma(KT)}$.
\end{theorem}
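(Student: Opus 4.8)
The plan is to adapt Poltyrev's sphere-packing lower bound for infinite constellations over the AWGN channel \cite{PoltirevJournal}, now applied separately for each channel realization $H$. The key observation is that once $H$ is fixed, the received constellation $S_{KT}^{'}=H_{\mathrm{ex}}\cdot S_{KT}$ is itself a $KT$-complex dimensional IC lying in the column space of $H_{\mathrm{ex}}$, with receiver density $\gamma_{\mathrm{rc}}$, while the projection of $\underline{n}_{\mathrm{ex}}$ onto that subspace is isotropic complex Gaussian with per-real-dimension variance $\sigma^{2}=\frac{\rho^{-1}}{2\pi e}$. Thus the problem reduces to bounding the average error probability of an IC over AWGN whose density-to-noise relation is captured entirely by the single scalar $\mu_{\mathrm{rc}}$, and it suffices to establish the stated bound as a function of $\mu_{\mathrm{rc}}$.

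First I would obtain a per-point lower bound. For a point $\underline{x}^{'}\in S_{KT}^{'}$ with Voronoi region $V(\underline{x}^{'})$, a nearest-point (lattice) decoding error occurs whenever the noise carries the received vector outside $V(\underline{x}^{'})$, so $P_{e}(\underline{x}^{'},H,\rho)=\Pr\{\underline{n}\notin V(\underline{x}^{'})-\underline{x}^{'}\}$, and any other decoder only increases this. Among all measurable bodies of a prescribed volume, the ball centred at the mean of the noise maximises the captured Gaussian mass (the bathtub/rearrangement principle); applying this to $V(\underline{x}^{'})-\underline{x}^{'}$, whose volume defines the effective radius $r_{\mathrm{eff}}(\underline{x}^{'})$ through \eqref{eq:EffectiveRadius} with $l=KT$, yields $P_{e}(\underline{x}^{'},H,\rho)\ge \Pr\{\lVert\underline{n}\rVert> r_{\mathrm{eff}}(\underline{x}^{'})\}$. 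Because the projected noise is isotropic, $\lVert\underline{n}\rVert^{2}/\sigma^{2}$ is a $\chi^{2}$ variable with $2KT$ degrees of freedom, so the right-hand side depends only on $r_{\mathrm{eff}}^{2}(\underline{x}^{'})/\sigma^{2}$.

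Next I would average over the constellation. Writing $g(v)$ for $\Pr\{\lVert\underline{n}\rVert> r_{\mathrm{eff}}\}$ when $r_{\mathrm{eff}}$ is the effective radius of a $KT$-complex dimensional ball of volume $v$, one checks that $g$ is a convex function of $v$. Since the Voronoi cells tile the received space, the average cell volume inside $H_{\mathrm{ex}}\cdot\cube_{KT}(a)$ tends, along the subsequence realising the $\limsup$ in the definition of $\gamma_{\mathrm{rc}}$, to $1/\gamma_{\mathrm{rc}}$; combining the per-point bound with Jensen's inequality applied to $g$ then gives $\overline{P_{e}}(H,\rho)\ge g(1/\gamma_{\mathrm{rc}})$. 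Evaluating the effective radius at volume $1/\gamma_{\mathrm{rc}}$ via \eqref{eq:EffectiveRadius} and substituting $\mu_{\mathrm{rc}}=\frac{\gamma_{\mathrm{rc}}^{-1/(KT)}}{2\pi e\sigma^{2}}$ produces the clean identity $r_{\mathrm{eff}}^{2}/\sigma^{2}=2A(KT)\,\mu_{\mathrm{rc}}$, reducing the whole statement to lower bounding the tail $\Pr\{\chi^{2}_{2KT}>2A(KT)\mu_{\mathrm{rc}}\}$.

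Finally I would estimate this $\chi^{2}$ tail from below. Writing the tail as $\int_{2A\mu_{\mathrm{rc}}}^{\infty}\frac{t^{KT-1}e^{-t/2}}{2^{KT}\Gamma(KT)}\,dt$ and bounding the integrand below on an interval anchored at the lower limit reproduces exactly the factor $\mu_{\mathrm{rc}}^{KT-1}e^{-A(KT)\mu_{\mathrm{rc}}}$, with a multiplicative constant that collapses to $C(KT)/4$ after inserting $A(KT)=e\,\Gamma(KT+1)^{1/(KT)}$ and absorbing the slack needed to keep the inequality strict. I expect the main obstacle to be the averaging step: justifying the convexity of $g$ and interchanging it with the $\limsup$ in the density definition, while correctly accommodating Voronoi cells that may be unbounded, is the delicate part, whereas the reduction to $\mu_{\mathrm{rc}}$ and the $\chi^{2}$ tail estimate are essentially bookkeeping once the effective-sphere identity is in place.
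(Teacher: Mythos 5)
Your per-point bound, your convexity claim, and your bookkeeping are all sound: the bathtub principle gives $P_{e}(\udl{x}^{'},H,\rho)\ge \Pr\{\lVert\udl{\tilde{n}}_{\mathrm{ex}}\rVert> r_{\mathrm{eff}}(\udl{x}^{'})\}$; writing $g(V)$ for this tail as a function of the cell volume $V$, one indeed has $g'(V)=-(2\pi\sigma^{2})^{-KT}e^{-r(V)^{2}/(2\sigma^{2})}$, which is increasing in $V$, so $g$ is convex; and the identity $r_{\mathrm{eff}}^{2}/\sigma^{2}=2A(KT)\mu_{\mathrm{rc}}$ together with your chi-square tail estimate reproduces the stated constants. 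For lattices, where all cells are congruent and no averaging is needed, your argument coincides with the paper's Part 1 (the ``scaled sphere packing bound'').

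The genuine gap is the averaging step for general IC's --- exactly the step you flag as delicate but do not resolve. The claim that the average Voronoi volume of the points inside $H_{\mathrm{ex}}\cdot\cube_{KT}(a)$ tends to $1/\gamma_{\mathrm{rc}}$ along the density-limsup subsequence is false in general: the cells of points inside the cube do not tile the cube, they can protrude arbitrarily far into empty regions of the IC, and points facing such regions can have cells of arbitrarily large (even infinite) volume. Concretely, take an IC made of dense clusters separated by empty gaps whose widths grow super-polynomially in the distance from the origin; the density limsup is set by the clusters, but along the very values of $a$ that realise it, the outermost points' cells extend halfway across the next gap, the average cell volume blows up, and $g(\mathrm{average})\to 0$, so Jensen returns a vacuous bound. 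The theorem is still true for such an IC (the bulk points have tiny cells and error probability near one), but your chain of inequalities cannot see this, because Jensen degrades with a few huge cells while the true average error probability does not. This is precisely why the paper's Appendix \ref{append:ProofFirstTheorem} does not average directly: it regularizes the IC by periodically tiling a finite section $C_{0}(\rho,H)=S^{'}_{KT}(\rho)\bigcap(H_{\mathrm{ex}}\cdot\cube_{KT}(b))$, which forces every Voronoi cell to be bounded and makes the pigeonhole step legitimate (some cell has volume at most the reciprocal density), then expurgates the worst half of the codewords to pass from average to maximal error probability, and closes with a contradiction argument. Without this regularization--expurgation device (or an equivalent one, cf.\ \cite{IngberZamirFederDensityIC}), your proof establishes the theorem for lattices but not for general IC's.
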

\begin{proof}
We divide the proof into two parts. In the first part we prove the
result for lattices, that constitute a symmetric structure for which
the Voronoi regions of different lattice points are identical. In
the second part we prove the result for general IC's with receiver
density $\gamma_{rc}$. As the second part of the proof is somewhat
more involved, we defer it to appendix
\ref{append:ProofFirstTheorem}. Note that we could have used the
tighter bounds of \cite{IngberZamirFederDensityIC}, but these bounds
are not needed for DMT. Instead we derive coarser and more
simplified upper bounds, which are sufficient for our purposes.

We begin by proving the result for lattices. Lattices constitute a
discrete subgroup of the Euclidean space, with the ordinary vector
addition operation. Consider a $KT$-complex dimensional lattice,
$S^{'}_{KT}(\rho)$, in the receiver with density $\gamma_{rc}$. The
lattice points have identical Voronoi regions up to a translation.
Hence, the volume of each Voronoi region equals
$$|V(x)|=\frac{1}{\gamma_{rc}}\quad\forall x\in S^{'}_{KT}(\rho).$$
According to the definition of the effective radius in
\eqref{eq:EffectiveRadius}, we get that $r_{\eff}(x)=
r_{\eff}(\gamma_{rc})=(\frac{\Gamma(KT+1)}{\gamma_{rc}\pi^{KT}})^\frac{1}{2KT}$,
$\forall x\in S^{'}_{KT}(\rho)$. Note that in lattices the
maximum-likelihood (ML) decoding error probability is identical for
all lattice points, i.e. the average and maximal error probabilities
are identical. It has been proven in \cite{PoltirevJournal},
\cite{TarokhUniversalBoundPerformanceLattices} that the error
probability of any lattice point in the receiver fulfils
$$P_{e}^{S^{'}_{KT}}>Pr(\lVert
\underline{\tilde{n}}_{\mathrm{ex}}\rVert \ge
r_{\mathrm{eff}}(\gamma_{rc}))$$ where $P_{e}^{S^{'}_{KT}}$ is the
ML decoding error probability of any lattice point, and
$\underline{\tilde{n}}_{\mathrm{ex}}$ is the effective noise in the
$KT$-complex dimensional hyperplane where $S^{'}_{KT}(\rho)$
resides. We find an explicit expression for the lower bound
\ifthenelse{\equal{\singlecolumntype}{1}}
{\begin{equation}\label{eq:LowerBoundCertainChannelRealizationForLattices}
\Pr\big(\lVert\underline{\tilde{n}}_{\mathrm{ex}}\rVert\ge
r_{\mathrm{eff}}(\gamma_{\mathrm{rc}})\big)>\Pr\big(\lVert\underline{\tilde{n}}_{\mathrm{ex}}\rVert\ge
r_{\mathrm{eff}}(\frac{\gamma_{\mathrm{rc}}}{2})\big)>
\int_{r_{\mathrm{eff}}^{2}}^{r_{\mathrm{eff}}^{2}+\sigma^{2}}\frac{r^{KT-1}e^{-\frac{r}{2\sigma^{2}}}}{\sigma^{2KT}2^{KT}
\Gamma(KT)}dr
\ge\frac{r_{\mathrm{eff}}^{2KT-2}e^{-\frac{r_{\mathrm{eff}}^{2}}{2\sigma^{2}}}}{\sigma^{2KT-2}2^{KT}
\Gamma(KT)\sqrt{e}}.
\end{equation}}
{\begin{align}\label{eq:LowerBoundCertainChannelRealizationForLattices}
\Pr\big(\lVert\underline{\tilde{n}}_{\mathrm{ex}}\rVert\ge
r_{\mathrm{eff}}(\gamma_{\mathrm{rc}})\big)&>\Pr\big(\lVert\underline{\tilde{n}}_{\mathrm{ex}}\rVert\ge
r_{\mathrm{eff}}(\frac{\gamma_{\mathrm{rc}}}{2})\big)>
\nonumber\\
\int_{r_{\mathrm{eff}}^{2}}^{r_{\mathrm{eff}}^{2}+\sigma^{2}}\frac{r^{KT-1}e^{-\frac{r}{2\sigma^{2}}}}{\sigma^{2KT}2^{KT}
\Gamma(KT)}&dr
\ge\frac{r_{\mathrm{eff}}^{2KT-2}e^{-\frac{r_{\mathrm{eff}}^{2}}{2\sigma^{2}}}}{\sigma^{2KT-2}2^{KT}
\Gamma(KT)\sqrt{e}}.
\end{align}}
By assigning
$r_{\eff}^{2}=(\frac{2\cdot\Gamma(KT+1)}{\gamma_{rc}\pi^{KT}})^\frac{1}{KT}$
we get
$$P_{e}^{S^{'}_{KT}}> C(KT)\cdot
e^{-\frac{\gamma_{\mathrm{rc}}^{-\frac{1}{KT}}}{2\pi
e\sigma^{2}}A(KT)+(KT-1)\ln
(\frac{\gamma_{\mathrm{rc}}^{-\frac{1}{KT}}}{2\pi e\sigma^{2}})}$$
and by assigning
$\mu_{rc}=\frac{\gamma_{\mathrm{rc}}^{-\frac{1}{KT}}}{2\pi
e\sigma^{2}}$ we get
\begin{equation}\label{eq:LowerBoundErrorProbWithVNRForLattices}
P_{e}^{S^{'}_{KT}}> \frac{C(KT)}{4}\cdot e^{-\mu_{rc}A(KT)+(KT-1)\ln
(\mu_{rc})}.
\end{equation}
Note that in
\eqref{eq:LowerBoundCertainChannelRealizationForLattices} we lower
bounded the error probability with $r_{\eff}(\frac{\gamma_{rc}}{2})$
instead of $r_{\eff}(\gamma_{rc})$, and also in
\eqref{eq:LowerBoundErrorProbWithVNRForLattices} we multiplied by
$\frac{1}{4}$, in order to be consistent with the general lower
bound for IC's shown in appendix \ref{append:ProofFirstTheorem}. For
lattices we have $\overline{P_{e}}(H,\rho)=P_{e}^{S^{'}_{KT}}$.
Essentially what we have shown here is a scaled sphere packing
bound.\footnote{Note that while Theorem
\ref{Th:LowerBoundChannelReal} refers to $KT$-complex dimensional
IC's, the lower bound derived in this theorem applies for any
$2KT$-real dimensional IC.}
\end{proof}

Next, we would like to use this lower bound to average over the
channel realizations and get an upper bound on the diversity order.

\begin{theorem}\label{Th:UpperBoundDiversityOrder}
The diversity order of any $KT$-complex dimensional sequence of IC's
$S_{KT}(\rho)$, with $K$ average number of dimensions per channel
use, is upper bounded by
$$d_{KT}(r)\le  d^{\ast}_{K}(r)=M\cdot N(1-\frac{r}{K})$$
for $0 < K\le\frac{M\cdot N}{N+M-1}$, and
$$d_{KT}(r)\le  d^{\ast}_{K}(r)=(M-l)(N-l)\frac{K}{K-l}(1-\frac{r}{K})$$
for $\frac{(M-l+1)(N-l+1)}{N+M-1-2(l-1)}+l-1< K\le
\frac{(M-l)(N-l)}{N+M-1-2\cdot l}+l$ and $l=1,\dots,L-1$. In all of these cases $0\le r \le K$.
\end{theorem}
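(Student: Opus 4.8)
The plan is to upper bound the diversity order by lower bounding the average error probability $\overline{P_{e}}(\rho)=E_{H}\{\overline{P_{e}}(H,\rho)\}$, and to show this lower bound is controlled by an outage-type event whose $\rho$-exponent equals the claimed $d^{\ast}_{K}(r)$. I would start from the per-realization bound of Theorem \ref{Th:LowerBoundChannelReal}, $\overline{P_{e}}(H,\rho)>\frac{C(KT)}{4}e^{-\mu_{\mathrm{rc}}A(KT)+(KT-1)\ln\mu_{\mathrm{rc}}}$, together with the universal estimate $\mu_{\mathrm{rc}}\le\rho^{\eta(\underline{\alpha})}$ with $\eta(\underline{\alpha})=1-\frac{1}{K}\big(r+\sum_{i=0}^{B-1}\alpha_{L-i}+\beta\alpha_{L-B}\big)$, which holds for \emph{every} IC of the given density since it follows from the volume bound \eqref{eq:GammaRcUpperBound}; hence the resulting diversity bound is universal. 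I then define the outage region $\mathcal{O}=\{\underline{\alpha}:\eta(\underline{\alpha})\le 0\}=\{\underline{\alpha}:\sum_{i=0}^{B-1}\alpha_{L-i}+\beta\alpha_{L-B}\ge K-r\}$, on which $\mu_{\mathrm{rc}}\le 1$.

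On $\mathcal{O}$ the VNR is at most one, so the conditional error probability is bounded below by a positive constant independent of $\rho$: this is precisely the regime where the scaled sphere-packing bound of Theorem \ref{Th:LowerBoundChannelReal} is non-vanishing, and evaluating it at $\mu_{\mathrm{rc}}=1$ gives, via Stirling applied to $A(KT)$ and $C(KT)$, a constant of order $e^{-1/2}/8$. The looseness of the bound for very small $\mu_{\mathrm{rc}}$ is harmless, because one may replace the IC by a sparser one of VNR exactly $1$, which can only decrease the error probability. Restricting the expectation to $\mathcal{O}$ then yields $\overline{P_{e}}(\rho)\ge\mathrm{const}\cdot\Pr(\underline{\alpha}\in\mathcal{O})\doteq\rho^{-d^{\ast}_{K}(r)}$, where using the singular-value density $\rho^{-\sum_{i=1}^{L}(|N-M|+2i-1)\alpha_{i}}$ and the Laplace principle as in \cite{TseDivMult2003},
$$d^{\ast}_{K}(r)=\min_{\underline{\alpha}\in\mathcal{O},\ \alpha_{1}\ge\cdots\ge\alpha_{L}\ge0}\ \sum_{i=1}^{L}(|N-M|+2i-1)\alpha_{i}.$$
Since enlarging the constraint only raises the objective, the minimum is attained on the boundary $\sum_{i=0}^{B-1}\alpha_{L-i}+\beta\alpha_{L-B}=K-r$.

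The core computation is to solve this linear program. I would pass to the gap variables $\delta_{j}=\alpha_{j}-\alpha_{j+1}\ge0$ (with $\delta_{L}=\alpha_{L}$), so that $\alpha_{i}=\sum_{j\ge i}\delta_{j}$ and the ordering constraints become simply $\delta_{j}\ge0$. The objective becomes $\sum_{j=1}^{L}C_{j}\delta_{j}$ with $C_{j}=\sum_{i=1}^{j}(|N-M|+2i-1)=j(j+|N-M|)$; writing $m=L-j$ one checks that $C_{L-m}=(M-m)(N-m)$, valid for both $N\ge M$ and $M>N$ since $L=\min(M,N)$, which already exhibits the $M\leftrightarrow N$ symmetry of the claim. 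A direct computation shows the boundary constraint becomes $\sum_{m=0}^{B}(K-m)\,\delta_{L-m}=K-r$, the higher gaps dropping out. The problem thus reduces to $\min\sum_{m=0}^{L-1}(M-m)(N-m)\,\delta_{L-m}$ subject to $\sum_{m=0}^{B}(K-m)\,\delta_{L-m}=K-r$ and $\delta_{L-m}\ge0$, a single-constraint LP whose optimum places all mass on the index $l=\arg\min_{0\le m\le B}\frac{(M-m)(N-m)}{K-m}$, giving $d^{\ast}_{K}(r)=(M-l)(N-l)\frac{K-r}{K-l}$, exactly the stated form (with $l=0$ recovering $MN(1-r/K)$).

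It remains to match the ranges of $K$ to the optimal $l$. The ratio $\phi(m)=\frac{(M-m)(N-m)}{K-m}$ is unimodal in $m$ (its only stationary point is $m=K-\sqrt{(M-K)(N-K)}$), so $l$ is optimal exactly when $\phi(l)\le\phi(l-1)$ and $\phi(l)\le\phi(l+1)$. Clearing denominators and using the identities $(M-l)(N-l)-(M-l-1)(N-l-1)=N+M-1-2l$ and $(M-l+1)(N-l+1)-(M-l)(N-l)=N+M+1-2l$, the first inequality yields the lower threshold $K>\frac{(M-l+1)(N-l+1)}{N+M-1-2(l-1)}+l-1$ and the second the upper threshold $K\le\frac{(M-l)(N-l)}{N+M-1-2l}+l$, matching the theorem. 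I expect the main obstacle to be the middle step: correctly extracting the outage region from the per-realization bound---in particular justifying the constant lower bound on the conditional error when $\mu_{\mathrm{rc}}\le 1$, where Theorem \ref{Th:LowerBoundChannelReal} is loose---and then solving the ordered $L$-dimensional minimization; the passage to gap variables is the decisive move that turns it into the transparent single-ratio selection producing the piecewise-linear thresholds.
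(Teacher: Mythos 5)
Your proposal is correct, and its probabilistic skeleton is exactly the paper's: start from Theorem \ref{Th:LowerBoundChannelReal}, use the universal VNR estimate from \eqref{eq:GammaRcUpperBound}, declare $\mu_{\mathrm{rc}}\le 1$ an outage (justifying the constant lower bound there by inflating/sparsifying the IC to VNR $1$, which is the paper's argument verbatim), restrict the expectation to the outage set, and reduce via the Laplace principle to minimizing $\sum_{i=1}^{L}(|N-M|+2i-1)\alpha_{i}$ on the boundary $\sum_{i=0}^{B-1}\alpha_{L-i}+\beta\alpha_{L-B}=K-r$. Where you genuinely diverge is in solving this linear program, which the paper relegates to Appendix \ref{Append:OptimizationProblemSolutionOfUpperBoundDivOrder}: there the authors eliminate $\alpha_{L}$ through the constraint, split into cases according to the sign of the coefficient $(M-B)(N-B)-\beta(N+M-1-2l)$, and invoke an induction lemma (Lemma \ref{Lem:LemmaOfTheOptimizationProblem}) plus an outer induction on $l$. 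Your change to gap variables $\delta_{j}=\alpha_{j}-\alpha_{j+1}$ is cleaner and more transparent: the ordering constraints become nonnegativity, the objective coefficients telescope to $(M-m)(N-m)$, the constraint coefficients to $(K-m)$, and the whole problem collapses to a single-constraint LP solved by the minimum-ratio rule, $d^{\ast}_{K}(r)=(K-r)\min_{0\le m\le B}\frac{(M-m)(N-m)}{K-m}$; the piecewise thresholds in $K$ then fall out of convexity of $\phi(m)=\frac{(M-m)(N-m)}{K-m}$ rather than from case-by-case induction, and the two cases of the theorem ($l=0$ and $l\ge 1$) are unified. One small point you should patch: your local-optimality test ``$\phi(l)\le\phi(l-1)$ and $\phi(l)\le\phi(l+1)$'' is ill-posed at the right endpoint $l=B$, since there $K-l-1=\beta-1\le 0$ and $\phi(l+1)$ has a nonpositive denominator (so cross-multiplying flips the inequality). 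This is harmless---when $l=B$ the index $l+1$ is infeasible anyway, and $K\le B+1=l+1\le\frac{(M-l)(N-l)}{N+M-1-2l}+l$ holds automatically because $(M-l-1)(N-l-1)\ge 0$---but a complete write-up must say so, since otherwise the claimed equivalence between the theorem's $K$-ranges and the argmin is not literally what you proved.
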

\begin{proof}
For any IC with VNR $\mu_{rc}$, assigning $\mu_{rc}^{'}>\mu_{rc}$ in
the lower bound from Theorem \ref{Th:LowerBoundChannelReal} also
gives a lower bound on the error probability
$$\overline{P_{e}}(H,\rho)> \frac{C(KT)}{4}e^{-\mu_{\mathrm{rc}}^{'}\cdot
A(KT)+(KT-1)\ln (\mu_{\mathrm{rc}}^{'})}.$$ It results from the fact
that inflating the IC into an IC with VNR $\mu_{rc}^{'}$ must
decrease the error probability, where
\ifthenelse{\equal{\singlecolumntype}{1}}
{$$\frac{C(KT)}{4}e^{-\mu_{\mathrm{rc}}^{'}\cdot A(KT)+(KT-1)\ln
(\mu_{\mathrm{rc}}^{'})}$$}
{$\frac{C(KT)}{4}e^{-\mu_{\mathrm{rc}}^{'}\cdot A(KT)+(KT-1)\ln
(\mu_{\mathrm{rc}}^{'})}$} is a lower bound on the error probability
of any IC with VNR $\mu_{rc}^{'}$. Hence, for the case $\mu_{rc}\le
1$ we can lower bound the error probability by assigning 1 in the
lower bound and get $\frac{C(KT)}{4}e^{-A(KT)}$, i.e. for
$\mu_{rc}\le 1$ the average decoding error probability is bounded
away from 0 for any value of $\rho$. We can give the event
$\mu_{rc}\le 1$ the interpretation of an outage event.

We would like to set a lower bound for the error probability for
each channel realization $\udl{\alpha}$, which we denote by
$P_{e}^{LB}(\rho,\udl{\alpha})$. We know that
$\mu_{\mathrm{rc}}\le\rho^{1-\frac{1}{K}(r+\sum_{i=0}^{B-1}\alpha_{L-i}+\beta\alpha_{L-B})}$.
For the case $\sum_{i=0}^{B-1}\alpha_{L-i}+\beta\alpha_{L-B}<K-r$,
we take
$$P_{e}^{LB}(\rho,\udl{\alpha})=\frac{C(KT)}{4}e^{-L(\rho,\udl{\alpha})\cdot
A(KT)+(KT-1)\ln (L(\rho,\udl{\alpha}))}$$ where
$L(\rho,\udl{\alpha})=\rho^{1-\frac{1}{K}(r+\sum_{i=0}^{B-1}\alpha_{L-i}+\beta\alpha_{L-B})}>1$.
For the case $\sum_{i=0}^{B-1}\alpha_{L-i}+\beta\alpha_{L-B}\ge K-r$
we get that $\mu_{rc}\le 1$, and we take
$$P_{e}^{LB}(\rho,\udl{\alpha})=\frac{C(KT)}{4}e^{-A(KT)}.$$

In order to find an upper bound on the diversity order, we would
like to average $P_{e}^{LB}(\rho,\udl{\alpha})$ over
the channel realizations. In our analysis we consider large values
of $\rho$, and so we calculate
\begin{equation}\label{eq:LowerBoundInt}
\ol{P_{e}}(\rho)\dot{>}\int_{\udl{\alpha}\ge 0}
P_{e}^{LB}(\rho,\udl{\alpha})\cdot
\rho^{-\sum_{i=1}^{L}(|N-M|+2i-1)\alpha_{i}}d\udl{\alpha}
\end{equation}
where $\udl{\alpha}\ge 0$ signifies the fact that
$\alpha_{1}\ge\dots \ge\alpha_{L}\ge 0$. By defining
$\mathcal{A}=\{\udl{\alpha}|\sum_{i=0}^{B-1}\alpha_{L-i}+\beta\alpha_{L-B}<K-r;\udl{\alpha}\ge
0\}$ and
$\ol{\mathcal{A}}=\{\udl{\alpha}|\sum_{i=0}^{B-1}\alpha_{L-i}+\beta\alpha_{L-B}\ge
K-r;\udl{\alpha}\ge 0\}$ we can split \eqref{eq:LowerBoundInt} into
2 terms
\ifthenelse{\equal{\singlecolumntype}{1}}
{\begin{equation}
\ol{P_{e}}(\rho)\dot{>}\int_{\udl{\alpha}\in \mathcal{A}}
P_{e}^{LB}(\rho,\udl{\alpha})\cdot
\rho^{-\sum_{i=1}^{L}(|N-M|+2i-1)\alpha_{i}}d\udl{\alpha}
+\int_{\udl{\alpha}\in \ol{\mathcal{A}}}
P_{e}^{LB}(\rho,\udl{\alpha})\cdot
\rho^{-\sum_{i=1}^{L}(|N-M|+2i-1)\alpha_{i}}d\udl{\alpha}.
\end{equation}}
{\begin{align}
\ol{P_{e}}(\rho)\dot{>}\int_{\udl{\alpha}\in \mathcal{A}}P_{e}^{LB}(\rho,\udl{\alpha})\cdot
\rho^{-\sum_{i=1}^{L}(|N-M|+2i-1)\alpha_{i}}d\udl{\alpha}
\nonumber\\
+\int_{\udl{\alpha}\in \ol{\mathcal{A}}}P_{e}^{LB}(\rho,\udl{\alpha})\cdot
\rho^{-\sum_{i=1}^{L}(|N-M|+2i-1)\alpha_{i}}d\udl{\alpha}.
\end{align}}
Hence
\begin{equation}
\ol{P_{e}}(\rho)\dot{>}\int_{\udl{\alpha}\in
\ol{\mathcal{A}}}P_{e}^{LB}(\rho,\udl{\alpha})\cdot
\rho^{-\sum_{i=1}^{L}(|N-M|+2i-1)\alpha_{i}}d\udl{\alpha}.
\end{equation}

In a similar manner to \cite{TseDivMult2003},
\cite{ElGamalLAST2004}, for very large $\rho$, we approximate the
average value by finding the most dominant exponential term in the
integral. For this we would like to find the minimal value of
$$\lim_{\rho\to\infty}-\log_{\rho}(P_{e}^{LB}(\rho,\udl{\alpha})\cdot
\rho^{-\sum_{i=1}^{L}(|N-M|+2i-1)\alpha_{i}})$$ for the case
$\udl{\alpha}\in \ol{\mathcal{A}}$. For $\udl{\alpha}\in
\ol{\mathcal{A}}$, we get that $P_{e}^{LB}(\rho,\udl{\alpha})$ is
bounded away from 0 for any value of $\rho$. Hence, in order to find
the most dominant error event we would like to find
$\min_{\udl{\alpha}}\sum_{i=1}^{L}(|N-M|+2i-1)\alpha_{i}$ given that
$\udl{\alpha}\in\ol{\mathcal{A}}$. The minimal value is achieved at
the boundary, i.e. for $\udl{\alpha}$ satisfying
$\sum_{i=0}^{B-1}\alpha_{L-i}+\beta\alpha_{L-B}=K-r$,
$\udl{\alpha}\ge 0$. Hence, for any $K\le L$ we state that
\begin{equation}\label{eq:DiversityOrderUpperBoundOptimizationProblem}
d_{KT}(r)\le
\min_{\udl{\alpha}}\sum_{i=1}^{L}(|N-M|+2i-1)\alpha_{i},\qquad 0\le
r\le K
\end{equation}
where $\sum_{i=0}^{B-1}\alpha_{L-i}+\beta\alpha_{L-B}=K-r$ and
$\alpha_{1}\ge\dots\ge\alpha_{L}\ge 0$. Basically this optimization
problem is a linear programming problem whose solution is as
follows. For $0 < K\le\frac{M\cdot N}{N+M-1}$ the solution is
$\alpha_{i}=1-\frac{r}{K}$, $i=1,\dots ,L$. For
$\frac{(M-l+1)(N-l+1)}{N+M-1-2(l-1)}+l-1< K\le
\frac{(M-l)(N-l)}{N+M-1-2\cdot l}+l$ and $l=1,\dots,L-1$  the
solution is $\alpha_{L}=\dots=\alpha_{L-l+1}=0$ and
$\alpha_{L-l}=\dots=\alpha_{1}=\frac{K-r}{K-l}$. The desired upper
is attained by substituting the optimal values of $\udl{\alpha}$ in
\eqref{eq:DiversityOrderUpperBoundOptimizationProblem}. The detailed
solution for the optimization problem is presented in appendix
\ref{Append:OptimizationProblemSolutionOfUpperBoundDivOrder}.
\end{proof}

From Theorem \ref{Th:UpperBoundDiversityOrder} we get an upper bound
on the diversity order by assuming transmission of the $KT$ complex
dimensions over the $B+1$ strongest singular values. This assumption
is equivalent to assuming \emph{beamforming} which may improve the
coding gain, but does not increase the diversity order. This
assumption allows us to derive a lower bound on the average decoding
error probability. However, we still get maximal diversity order of
$MN$ in this case.

Let us consider as an illustrative example the case of $M=N=2$. In
this case, for $0<K\le\frac{4}{3}$ we get $d_{K}^{\ast}(r)=
4(1-\frac{r}{K})$. For $\frac{4}{3}<K\le 2$ we get $d_{k}^{\ast}(r)=
\frac{K}{K-1}(1-\frac{r}{K})$. In both cases $0\le r\le K$. For this
set up we have two singular values and so
$\alpha_{1}\ge\alpha_{2}\ge 0$. The optimization problem is of the
form $\min_{\udl\alpha\ge 0}\alpha_{1}+3\alpha_{2}$, where for
$0<K\le 1$ the constraint is $\beta\alpha_{2}=K-r$, and for $1<K\le
2$ the constraint is $\alpha_{2}+\beta\alpha_{1}=K-r$. For the case
$0<K<\frac{4}{3}$ the optimization problem solution is
$\alpha_{1}=\alpha_{2}=1-\frac{r}{K}$, i.e. in this case the most
dominant error event occurs when both singular values are very
small. For the case $K=\frac{4}{3}$ the constraint is of the form
$\alpha_{2}+\frac{\alpha_{1}}{3}=\frac{4}{3}-r$, and the
optimization problem solution is achieved for both
$\alpha_{1}=\alpha_{2}=1-\frac{3r}{4}$ and $\alpha_{2}=0$,
$\alpha_{1}=4-3r$. For the case $\frac{4}{3}<K\le 2$ the
optimization problem solution is achieved for $\alpha_{2}=0$,
$\alpha_{1}=\frac{K-r}{K-1}$, i.e. one strong singular value and
another very weak singular value.

\begin{figure}[h!]
\centering \epsfig{figure=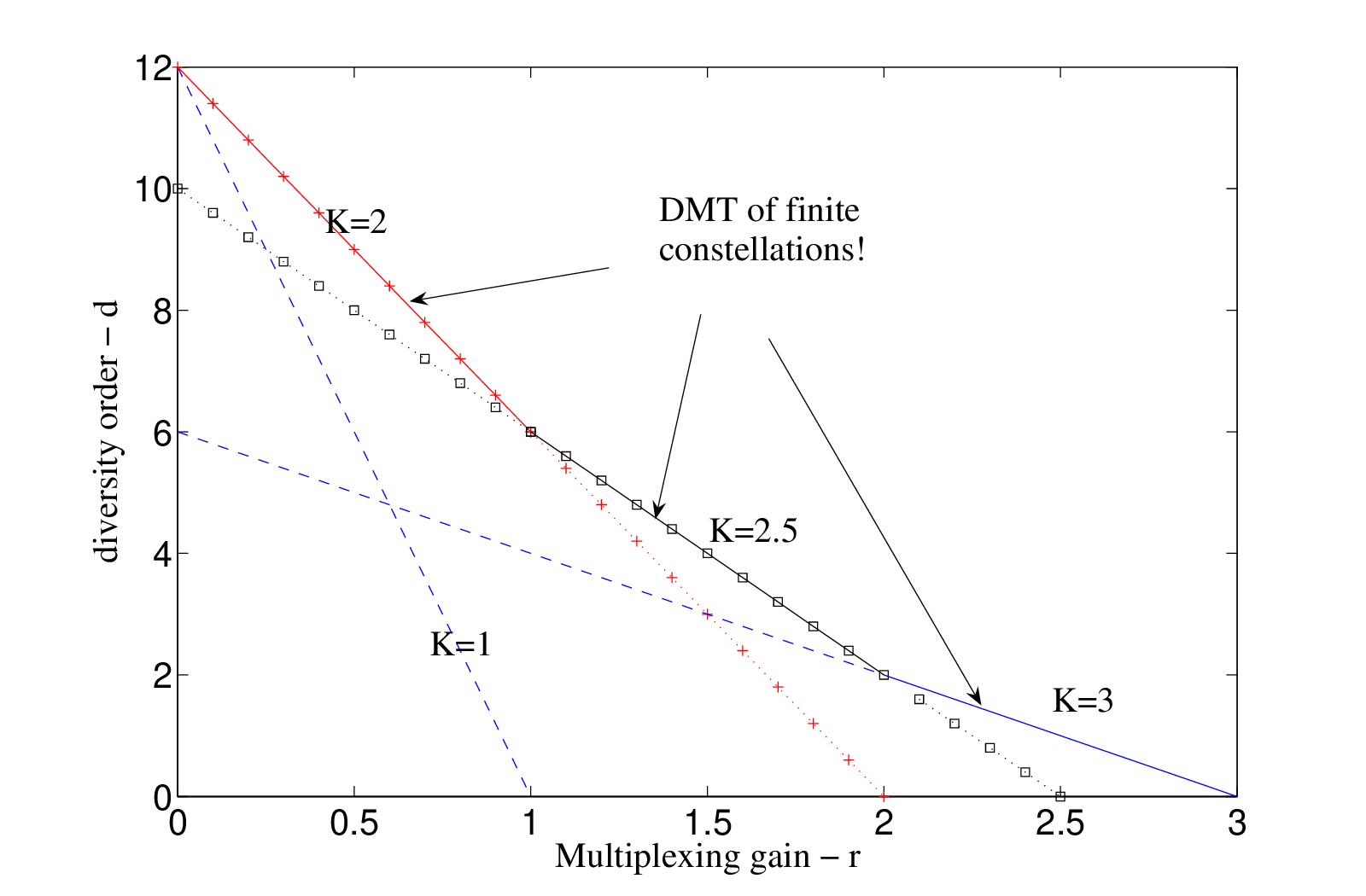,height=4.8cm} \caption{The
diversity order as a linear function of the multiplexing gain $r$
for $M=4$, $N=3$ and $K=1$, $2$, $2.5$ and $3$.} \label{fig:IC_DMT}
\end{figure}

\begin{cor}\label{Cor:EqualPointsUpperBoundDivOrder}
For $0 < K\le\frac{M\cdot N}{N+M-1}$ we get $d^{\ast}_{K}(0)=MN$.
For $\frac{(M-l+1)(N-l+1)}{N+M-1-2(l-1)}+l-1< K\le
\frac{(M-l)(N-l)}{N+M-1-2\cdot l}+l$, $l=1,\dots,L-1$ we get
$d^{\ast}_{K}(l)=(M-l)(N-l)$.
\end{cor}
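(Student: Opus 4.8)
The plan is to verify Corollary \ref{Cor:EqualPointsUpperBoundDivOrder} by direct substitution into the two branches of the upper bound $d^{\ast}_{K}(r)$ established in Theorem \ref{Th:UpperBoundDiversityOrder}. The corollary makes two claims: that on the first regime $d^{\ast}_{K}(0)=MN$, and that on the $l$-th regime $d^{\ast}_{K}(l)=(M-l)(N-l)$. Each claim is a statement about the value of the corresponding linear function at a single, specific multiplexing gain, so the work is purely evaluative — no new optimization is needed, since the hard linear-programming step has already been carried out inside Theorem \ref{Th:UpperBoundDiversityOrder}.

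**The first regime.**

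First I would treat the range $0 < K\le\frac{M\cdot N}{N+M-1}$. Here Theorem \ref{Th:UpperBoundDiversityOrder} gives
$$d^{\ast}_{K}(r)=M\cdot N\lf(1-\frac{r}{K}\ri).$$
Substituting $r=0$ immediately yields $d^{\ast}_{K}(0)=MN$, which is the first assertion. This also has the natural interpretation that at zero multiplexing gain the full spatial diversity $MN$ is recovered regardless of how the $K$ dimensions are allocated, consistent with the discussion following the theorem.

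**The second regime and the boundary check.**

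Next I would treat, for each fixed $l\in\{1,\dots,L-1\}$, the range $\frac{(M-l+1)(N-l+1)}{N+M-1-2(l-1)}+l-1< K\le \frac{(M-l)(N-l)}{N+M-1-2\cdot l}+l$, on which the theorem gives
$$d^{\ast}_{K}(r)=(M-l)(N-l)\frac{K}{K-l}\lf(1-\frac{r}{K}\ri).$$
Evaluating at $r=l$ I would compute
$$d^{\ast}_{K}(l)=(M-l)(N-l)\frac{K}{K-l}\cdot\frac{K-l}{K}=(M-l)(N-l),$$
where the factor $\frac{K}{K-l}(1-\frac{l}{K})$ collapses to $1$; this requires only that $K>l$, which indeed holds throughout the $l$-th regime since the lower endpoint exceeds $l-1$ and in fact exceeds $l$. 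This establishes the second assertion. I would note that $r=l$ is a legitimate choice since $l\le K$ on this regime, so it lies in the admissible range $0\le r\le K$ of the theorem.

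**Where the only subtlety lies.**

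The computations themselves are trivial; the one point that deserves a sentence of care is \emph{consistency across adjacent regimes}, i.e. that the endpoints of the corollary's conclusions match so that the overall upper bound is continuous in $K$. The interesting observation is that at $r=l$ every regime indexed by that same $l$ returns the identical value $(M-l)(N-l)$, independent of $K$ within the regime; and the regime breakpoints have been chosen in Theorem \ref{Th:UpperBoundDiversityOrder} precisely so that the piecewise description knits together into the optimal DMT corner points $(M-l)(N-l)$ of \cite{TseDivMult2003}. Thus the main (and only mild) obstacle is not a calculation but the bookkeeping of confirming that the value $(M-l)(N-l)$ agreed upon at $r=l$ is exactly the $l$-th corner point of the Zheng--Tse tradeoff, which is what gives the corollary its significance; verifying this geometric matching, rather than the algebra, is the conceptual content.
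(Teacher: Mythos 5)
Your proposal is correct and matches the paper's proof, which simply states that the result follows directly from the properties of $d_{K}^{\ast}(r)$ — i.e., precisely the substitution $r=0$ (resp.\ $r=l$) into the two branches of Theorem \ref{Th:UpperBoundDiversityOrder} that you carry out explicitly, including the needed observation that $K>l$ on the $l$-th regime so the factor $\frac{K}{K-l}\lf(1-\frac{l}{K}\ri)$ collapses to $1$. Your version is just a more detailed write-up of the same one-line evaluation.
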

\begin{proof}
The proof is straight forward from $d_{K}^{\ast}(r)$ properties.
\end{proof}

From Corollary \ref{Cor:EqualPointsUpperBoundDivOrder} we get that
the range of $K$ can be divided into segments, where for each
segment we have a set of straight lines, that are all equal at a
certain integer point. Note that at these points, we get the same
values as the optimal DMT for finite constellations.

\begin{cor}\label{Cor:MaximalDiverityOrder}
In the range $l\le r \le l+1$, the maximal possible diversity order
is achieved at dimension $K_{l}=\frac{(M-l)(N-l)}{N+M-1-2\cdot l}+l$
and equals
$$d^{\ast}_{K_{l}}(r)=(M-l)(N-l)\frac{K_{l}}{K_{l}-l}(1-\frac{r}{K_{l}})$$
$$=(M-l)(N-l)-(r-l)(N+M-2\cdot l-1)$$
where $l=0,\dots,L-1$. This expression equals to the optimal DMT of
finite constellations in this range.
\end{cor}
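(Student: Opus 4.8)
The plan is to split the statement into two logically independent verifications: (i) the purely algebraic claim that the two displayed expressions for $d^{\ast}_{K_{l}}(r)$ coincide, and (ii) the optimization claim that among all admissible dimensions $K$ the value $K=K_{l}$ maximizes the upper bound $d^{\ast}_{K}(r)$ of Theorem \ref{Th:UpperBoundDiversityOrder} throughout the interval $l\le r\le l+1$.

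For the algebraic identity I would begin from the segmented formula and write $d^{\ast}_{K_{l}}(r)=(M-l)(N-l)\frac{K_{l}}{K_{l}-l}(1-\frac{r}{K_{l}})=(M-l)(N-l)\frac{K_{l}-r}{K_{l}-l}$. The definition $K_{l}=\frac{(M-l)(N-l)}{N+M-1-2l}+l$ gives $K_{l}-l=\frac{(M-l)(N-l)}{N+M-1-2l}$, hence $\frac{(M-l)(N-l)}{K_{l}-l}=N+M-1-2l$. Substituting this and writing $K_{l}-r=(K_{l}-l)-(r-l)$ yields $d^{\ast}_{K_{l}}(r)=(N+M-1-2l)\big((K_{l}-l)-(r-l)\big)=(M-l)(N-l)-(r-l)(N+M-2l-1)$, which is exactly the second displayed line. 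En route one checks $K_{l}\ge l+1$, equivalently $(M-l-1)(N-l-1)\ge 0$, which holds since $l\le L-1$; this guarantees that every $r\le l+1$ lies in the admissible range $0\le r\le K_{l}$ where the formula is valid.

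For the optimization I would regard $d^{\ast}_{K}(r)$ as a function of $K$ for fixed $r$. On the $l'$-th segment it equals $(M-l')(N-l')\frac{K-r}{K-l'}$, whose derivative in $K$ has the sign of $r-l'$; thus the bound is nondecreasing in $K$ on every segment with $l'\le r$ and nonincreasing on every segment with $l'\ge r$. The segments are contiguous: substituting $l'+1$ into the left-endpoint formula of Theorem \ref{Th:UpperBoundDiversityOrder} shows that the left endpoint of segment $l'+1$ equals the right endpoint $K_{l'}$ of segment $l'$, and $d^{\ast}_{K}(r)$ is continuous across these breakpoints (Corollary \ref{Cor:EqualPointsUpperBoundDivOrder} pins the matching values at $r=l'$). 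Consequently, for $r\in[l,l+1]$ the bound increases in $K$ across all segments with $l'\le l\le r$ (up to $K=K_{l}$) and decreases across all segments with $l'\ge l+1\ge r$ (beyond $K_{l}$), so it is unimodal with peak at the right endpoint $K=K_{l}$ of the $l$-th segment.

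The main obstacle is the global character of the maximization: the monotonicity computation is elementary on each segment, but one must glue these pieces together across the entire admissible range of $K$, which requires verifying both the contiguity of the segment endpoints and the continuity of $d^{\ast}_{K}(r)$ at every breakpoint $K_{l'}$. Once these are established the increase-then-decrease shape is immediate, the maximizer is $K_{l}$, and plugging $K_{l}$ into the bound and invoking the algebraic identity of the previous paragraph completes the proof.
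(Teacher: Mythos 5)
Your proof is correct: the algebraic identity (via $K_{l}-l=\frac{(M-l)(N-l)}{N+M-1-2l}$) and the unimodality argument — segment-wise monotonicity governed by the sign of $r-l'$, contiguity of the segment endpoints, and continuity of $d^{\ast}_{K}(r)$ across the breakpoints $K_{l'}$ — are all sound, including the check $K_{l}\ge l+1$ ensuring $r\le K_{l}$ on the whole interval. The paper gives no details (its proof is the single line that the claim is ``straight forward from $d_{K}^{\ast}(r)$ properties''), so your proposal is precisely the routine verification the paper leaves implicit rather than a genuinely different route.
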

\begin{proof}
The proof is straight forward from $d_{K}^{\ast}(r)$ properties.
\end{proof}

From Corollary \ref{Cor:MaximalDiverityOrder} we can see that
$d_{K_{l}}^{\ast}(l)=(M-l)(N-l)$ and
$d_{K_{l}}^{\ast}(l+1)=(M-l-1)(N-l-1)$. We also know that
$d_{K_{l}}^{\ast}(r)$ is a straight line. Also, the optimal DMT for
finite constellations consists of a straight line in the range $l\le
r\le l+1$, that equals $(N-l)(M-l)$ when $r=l$ and $(M-l-1)(N-l-1)$
when $r=l+1$. Hence, in the range $l\le r\le l+1$ for
$K_{l}=\frac{(M-l)(N-l)}{N+M-1-2\cdot l}+l$, we get an upper bound
that equals to the optimal DMT of finite constellations presented in
\cite{TseDivMult2003}. Since for each $l=0,\dots, L-1$, we have such
$K_{l}$, the solution of
$$\max_{0\le K\le L}d_{K}^{\ast}(r)\quad 0\le r\le L$$
equals to the optimal DMT of finite constellations.

Figure \ref{fig:IC_DMT} illustrates the properties of
$d_{K}^{\ast}(r)$ following Corrolaries
\ref{Cor:EqualPointsUpperBoundDivOrder},
\ref{Cor:MaximalDiverityOrder}. We take the example of $M=4$, $N=3$.
For $0\le K\le 2$ we get upper bounds that have diversity order $12$
for $r=0$. We can see that in the range $0\le r\le 1$, the upper
bound of $K=2$ is maximal and equals to the optimal DMT of finite
constellations. In the range $2<K\le 2.5$ we can see that the upper
bounds have the same diversity order $6$ at $r=1$. In the range
$1\le r \le 2$, the upper bound of $K=2.5$ is maximal and equals to
the optimal DMT of finite constellations in this range. For
$2.5<K\le 3$, the upper bounds equal to $2$ at $r=2$. In the range
$2<r\le 3$, the upper bound of $K=3$ is maximal and again equals to
the optimal DMT of finite constellations in this range.

\begin{figure}[h!]
\centering \epsfig{figure=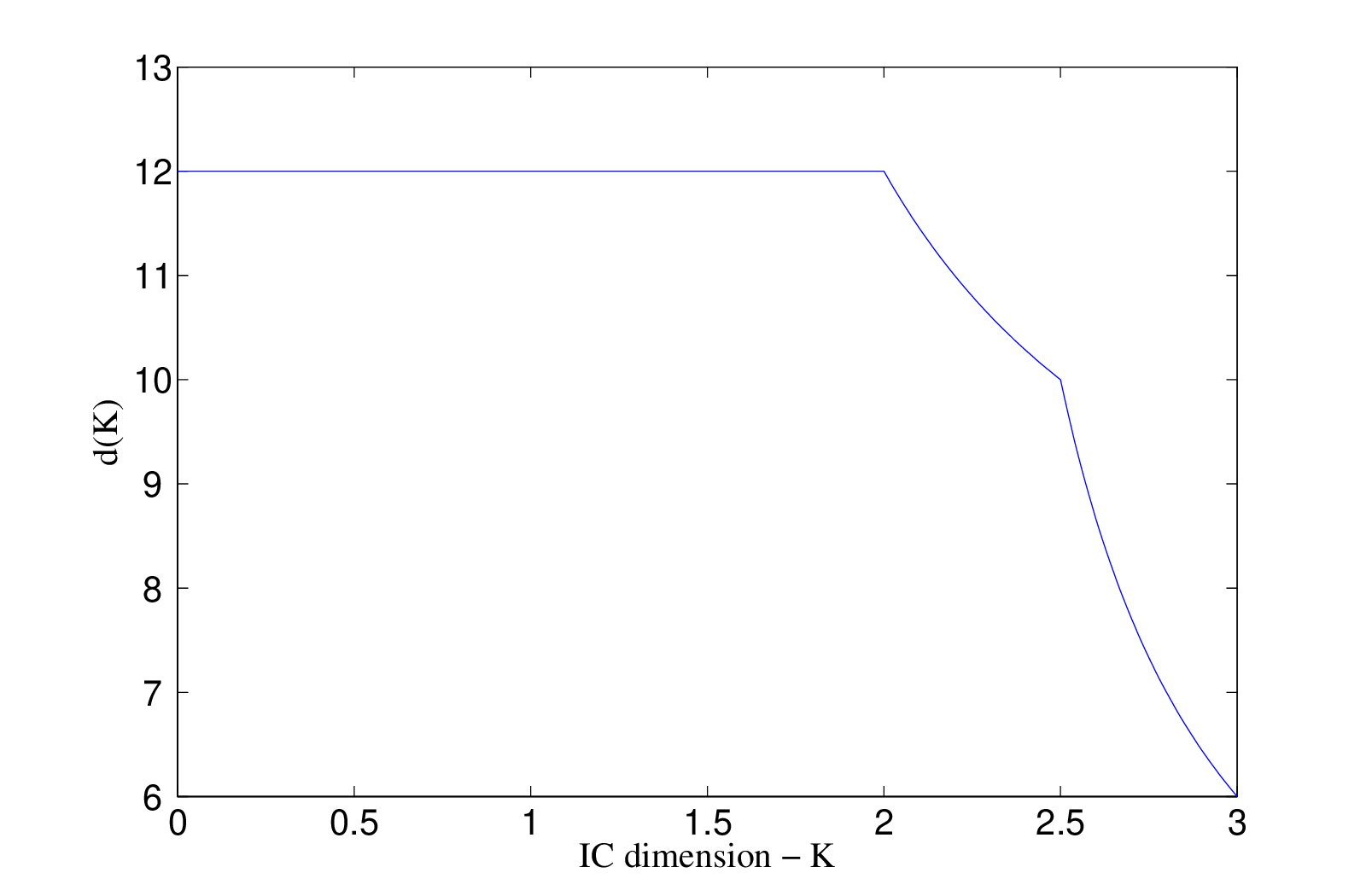,height=4.8cm}
\caption{$d_{K}^{\ast}(0)$ as a function of the IC dimensions per channel use $K$,
for $M=4$, $N=3$.} \label{fig:Diversity_MG_0}
\end{figure}

Figure \ref{fig:Diversity_MG_0} presents the maximal diversity order
that can be attained for different average number of dimensions per
channel use, for the case $M=4$ and $N=3$, i.e. the upper bound on
the diversity order for $r=0$, $d^{\ast}_{K}(0)$, where $0\le K\le
3$. In the range $0\le K\le 2$ we get $d^{\ast}_{K}(0)=12$. It
coincides with the result presented in Figure \ref{fig:IC_DMT},
where we showed that in this range the straight lines have the same
value for $r=0$. Hence, for IC's, one can use up to 2 average number
of dimensions per channel use without compromising the diversity
order. Starting from $K\ge 2$, the tradeoff starts to kick-in and
the maximal diversity order starts to reduce as we increase the
average number of dimensions per channel use. Also note that for
$K=3$ the diversity order is $6$ when $r=0$.

\section{Attaining the Best Diversity
Order}\label{sec:LowerBoundDiversityOrder} In this section we show
that the optimal DMT of finite constellations is achievable by a
sequence of IC's in general and lattices using regular lattice
decoding in particular. In subsection
\ref{subsec:TheTransmissionScheme} we present a transmission scheme
for any $M$ and $N$ that transmits an IC with
$K_{l}=\frac{(M-l)(N-l)}{N+M-1-2\cdot l}+l$ and $T_{l}=N+M-1-2\cdot
l$, $l=0,\dots,L-1$, where as previously defined $L=\min(M,N)$ and
$K_{l}$ is chosen based on the results in section
\ref{sec:LowerBoundErrorProb}. In subsection
\ref{subsec:TheEffectiveChannel} we present the effective channel
induced by this transmission scheme. Following that we extend the
methods presented in \cite{PoltirevJournal} and derive in Theorem
\ref{Th:UpperBoundErrorProb} for each channel realization an upper
bound on the average decoding error probability of ensemble of IC's.
By averaging the upper bound over the channel realizations, we show
in Theorem \ref{Th:LowerBoundDiversityOrder} that the proposed
transmission scheme attains the optimal DMT. In Theorem
\ref{Th:LowerBoundDiversityOrderLattices} we extend this result also
to lattices when employing regular lattice decoder. Finally, we
discuss power spreading technique over the transmit antennas for the
transmission scheme in subsection \ref{subsec:PeaktoAverageRatio},
and give some averaging arguments on the existence of sequence of
IC's that attain the optimal DMT in subsection
\ref{subsec:AveragingArguments}.

\subsection{The Transmission Scheme}\label{subsec:TheTransmissionScheme}
The transmission matrix $G_{l}$, $l=0,\dots,L-1$, has $M$ rows that
represent the transmission antennas, and $T_{l}=N+M-1-2\cdot l$
columns that represent the number of channel uses.

We begin by describing the transmission matrix structure in general for any $M$ and $N$.
\begin{enumerate}
\item For $N\ge M$ and
$K_{M-1}=\frac{M(N-M+1)}{N-M+1}=M$: the matrix $G_{M-1}$ has $N-M+1$
columns (channel uses). In the first column transmit symbols
$x_{1},\dots, x_{M}$ on the $M$ antennas, and in the $N-M+1$ column
transmit symbols $x_{M(N-M)+1},\dots, x_{M(N-M+1)}$ on the $M$
antennas.
\item For $M > N$ and
$K_{N-1}=\frac{N(M-N+1)}{M-N+1}=N$: the matrix $G_{N-1}$ has $M-N+1$
columns. In the first column transmit symbols $x_{1},\dots, x_{N}$
on antennas $1,\dots,N$ and in the $M-N+1$ column transmit symbols
$x_{N(M-N)+1},\dots, x_{N(M-N+1)}$ on antennas $M-N+1,\dots,M$.
\item For $K_{l}$, $l=0,\dots,L-2$: the matrix $G_{l}$ has $M+N-1-2\cdot l$ columns.
We add to $G_{l+1}$, the transmission scheme of $K_{l+1}$, two
columns in order to get $G_{l}$. In the first added column transmit
$l+1$ symbols on antennas $1,\dots,l+1$. In the second added column
transmit different $l+1$ symbols on antennas $M-l,\dots,M$.
\end{enumerate}

\emph{Example}: $M=4$, $N=3$. In this case the transmission
scheme for $K=3$, $2.5$ and $2$ ($G_{2}$, $G_{1}$ and $G_{0}$ respectively) is as follows:
\begin{equation}\label{eq:TheTransmissionSchemeExample3x4}
\underb{\underb{\underb{\left(\begin{array}{cc}
x_{1} & 0 \\
x_{2} & x_{4}\\
x_{3} & x_{5}\\
0   &   x_{6}
\end{array}\right.}_{K_{2}=\frac{6}{2}}
\left.\begin{array}{cc}
x_{7} & 0\\
x_{8} & 0\\
0 & x_{9}\\
0 & x_{10}
\end{array} \right.}_{K_{1}=\frac{10}{4}}
\left.\begin{array}{cc}
x_{11} & 0\\
0 & 0\\
0 & 0\\
0 & x_{12}
\end{array} \right)}_{K_{0}=\frac{12}{6}}.\end{equation}

\subsection{The Effective Channel}\label{subsec:TheEffectiveChannel}
Next we define the effective channel matrix induced by the
transmission scheme. In accordance with the channel model from
\eqref{eq:Channel Fading}, the multiplication $H\cdot G_{l}$ yields
a matrix with $N$ rows and $T_{l}$ columns, where each column equals
to $H\cdot\udl{x}_{t}$, $t=1\dots T_{l}$, as in \eqref{eq:Channel
Fading}. We are interested in transmitting $K_{l}T_{l}$-complex
dimensional IC with $K_{l}T_{l}$ complex symbols. Hence, in the
proposed transmission scheme, $G_{l}$ has exactly $K_{l}T_{l}$
non-zero complex entries that represent the $K_{l}T_{l}$-complex
dimensional IC within $\mathbb{C}^{MT_{l}}$. For each column of
$G_{l}$, denoted by $\udl{g}_{i}$, $i=1\dots T_{l}$, we define the
effective channel that $\udl{g}_{i}$ sees as $\widehat{H}_{i}$. It
consists of the columns of $H$ that correspond to the non-zero
entries of $\udl{g}_{i}$, i.e.
$H\cdot\udl{g}_{i}=\widehat{H}_{i}\cdot\udl{\widehat{g}}_{i}$, where
$\udl{\widehat{g}}_{i}$ equals the non-zero entries of
$\udl{g}_{i}$. As an example assume without loss of generality that
the first $l_{i}$ entries of $\udl{g}_{i}$ are not zero. In this
case $\widehat{H}_{i}$ is an $N\times l_{i}$ matrix equals to the
first $l_{i}$ columns of $H$. In accordance with
\eqref{eq:ExtendedChannelModel}, $H_{\eff}^{(l)}$ is an
$NT_{l}\times K_{l}T_{l}$ block diagonal matrix consisting of
$T_{l}$ blocks. Each block corresponds to the multiplication of $H$
with different column of $G_{l}$, i.e. $\widehat{H}_{i}$ is the
$i'th$ block of $H_{\eff}^{(l)}$. Note that in the effective matrix
$NT_{l}\ge K_{l}T_{l}$.

We would like to elaborate on the structure of the blocks of
$H_{\eff}^{(l)}$. For this reason we denote the columns of $H$ as
$\udl{h}_{i}$, $i=1,\dots, M$.
\begin{enumerate}
\item The case where $N\ge M$. For this case the transmission scheme has $N+M-1-2\cdot l$ columns. The first $N-M+1$ columns of $G_{l}$, $\udl{g}_{1},\dots,\udl{g}_{N-M+1}$, contain $M\cdot (N-M+1)$ different complex symbols, i.e. there are no zero entries in these columns. Hence, in this case the first $N-M+1$ blocks of $H_{\eff}^{(l)}$ are
\begin{equation}\label{eq:HeffBlocksForTheCaseNgeM1}
\widehat{H}_{i}=H\qquad i=1,\cdots, N-M+1.
\end{equation}
After the first $N-M+1$ columns we have $M-1-l$ pairs of columns. For each pair we have
\begin{equation}\label{eq:HeffBlocksForTheCaseNgeM2}
\widehat{H}_{N-M+2k}=\{\udl{h}_{1},\dots,\udl{h}_{M-k}\}
\end{equation}
and
\begin{equation}\label{eq:HeffBlocksForTheCaseNgeM3}
\widehat{H}_{N-M+2k+1}=\{\udl{h}_{k+1},\dots,\udl{h}_{M}\}
\end{equation}
where $k=1,\dots, M-1-l$.
\item The case where $M>N$. Again the transmission scheme has $N+M-1-2\cdot l$ columns. By the definition of the first $M-N+1$ columns of $G_{l}$, we get that
\begin{equation}\label{eq:HeffBlocksForTheCaseMgN1}
\widehat{H}_{i}=\{\udl{h}_{i},\dots, \udl{h}_{N+i-1}\}\qquad
i=1,\cdots, M-N+1.
\end{equation}
We have additional $N-1-l$ pairs of columns in $G_{l}$. For each of these pairs we get
\begin{equation}\label{eq:HeffBlocksForTheCaseMgN2}
\widehat{H}_{M-N+2k}=\{\udl{h}_{1},\dots,\udl{h}_{N-k}\}
\end{equation}
and
\begin{equation}\label{eq:HeffBlocksForTheCaseMgN3}
\widehat{H}_{M-N+2k+1}=\{\udl{h}_{M-N+k+1},\dots,\udl{h}_{M}\}
\end{equation}
where $k=1,\dots, N-1-l$.
\end{enumerate}
\emph{Example}: consider $M=4$, $N=3$ as presented in
\eqref{eq:TheTransmissionSchemeExample3x4}. In this case $l=0,1,2$
and we have $K_{2}=3$, $K_{1}=2.5$ and $K_{0}=2$ respectively.
\begin{enumerate}
\item $K_{2}=3$: $H_{\eff}^{(2)}$ is generated from the
multiplication of the $3\times 4$ matrix $H$ with the first two
columns of the transmission matrix. In this case $H_{\eff}^{(2)}$ is
a $6\times 6$ block diagonal matrix, consisting of two blocks. Each
block is a $3\times 3$ matrix. We get that
$\widehat{H}_{1}=\{\udl{h}_{1},\udl{h}_{2},\udl{h}_{3}\}$ and
$\widehat{H}_{2}=\{\udl{h}_{2},\udl{h}_{3},\udl{h}_{4}\}$.
\item $K_{1}=\frac{10}{4}=2.5$: $H_{\eff}^{(1)}$ is a $12\times 10$ block
diagonal matrix consisting of 4 blocks. The first two blocks are
identical to the blocks of $H_{\eff}^{(2)}$. The additional two
blocks (multiplication with columns 3-4) are $3\times 2$ matrices.
We get that $\widehat{H}_{3}=\{\udl{h}_{1},\udl{h}_{2}\}$ and
$\widehat{H}_{4}=\{\udl{h}_{3},\udl{h}_{4}\}$.
\item $K_{0}=2$: $H_{\eff}^{(0)}$ consists of six blocks. In
this case the last two blocks are $3\times 1$ vectors. We get that
$\widehat{H}_{5}=\udl{h}_{1}$ and $\widehat{H}_{6}=\udl{h}_{4}$.
\end{enumerate}
We present $H_{\eff}^{(0)}$ of our example in equation
\eqref{eq:Heff0TransmissionMatrix}. Note that
$\udl{h}_{i}\in\mathbb{C}^{3}$ for $1\le i\le 4$, and $\udl{0}$ is a
$3\times 1$ vector.

\begin{figure*}
\begin{equation}\label{eq:Heff0TransmissionMatrix}
H_{\eff}^{(0)}=\left(\begin{array}{cccccc}
\udl{h}_{1} & \udl{h}_{2} &\udl{h}_{3}  &\udl{0} &\udl{0} &\udl{0}\\
\udl{0} &\udl{0} &\udl{0} &\udl{h}_{2} & \udl{h}_{3} &\udl{h}_{4}\\
\udl{0} &\udl{0} &\udl{0} &\udl{0} &\udl{0} &\udl{0}\\
\udl{0} &\udl{0} &\udl{0} &\udl{0} &\udl{0} &\udl{0}\\
\udl{0} &\udl{0} &\udl{0} &\udl{0} &\udl{0} &\udl{0}\\
\udl{0} &\udl{0} &\udl{0} &\udl{0} &\udl{0} &\udl{0}
\end{array}\right.
\left.\begin{array}{cccc}
\udl{0} &\udl{0} &\udl{0} &\udl{0}\\
\udl{0} &\udl{0} &\udl{0} &\udl{0}\\
\udl{h}_{1} &\udl{h}_{2} &\udl{0} &\udl{0}\\
\udl{0} &\udl{0} &\udl{h}_{3} &\udl{h}_{4}\\
\udl{0} &\udl{0} &\udl{0} &\udl{0}\\
\udl{0} &\udl{0} &\udl{0} &\udl{0}
\end{array} \right.
\left.\begin{array}{cc}
\udl{0} &\udl{0}\\
\udl{0} &\udl{0}\\
\udl{0} &\udl{0}\\
\udl{0} &\udl{0}\\
\udl{h}_{1} &\udl{0}\\
\udl{0} &\udl{h}_{4}
\end{array} \right)
\end{equation}
\end{figure*}

From the sequential construction of the blocks of $H_{\eff}^{(l)}$
\eqref{eq:HeffBlocksForTheCaseNgeM1}-\eqref{eq:HeffBlocksForTheCaseNgeM3},
\eqref{eq:HeffBlocksForTheCaseMgN1}-\eqref{eq:HeffBlocksForTheCaseMgN3}
it is easy to see that when two columns of $H$ occur in a certain
block of $H_{\eff}^{(l)}$, the columns of $H$ between them must also
occur in the same block, i.e. if $\udl{h}_{1}$, $\udl{h}_{5}$ occur
in a certain block, then $\udl{h}_{2},\udl{h}_{3},\udl{h}_{4}$ also
occur in the same block. Next we prove a property of the
transmission scheme $G_{l}$, that relates to the number of
occurrences of the columns of $H$ in the blocks of $H_{\eff}^{(l)}$.
For each set of columns in $H$, we give an upper bound on the amount
of its appearances in different blocks.

\begin{lem}\label{Lem:LemmaOfTheOccurrencesOfEachColumn}
Consider the transmission scheme $G_{l}$, $l=0,\dots L-1$. In case
$0\le i-j<L$, the columns $\udl{h}_{j},\dots,\udl{h}_{i}$ may occur
together in at most $N-i+j$ blocks of $H_{\eff}^{(l)}$. In case
$i-j\ge L$ they can not occur together in any block of
$H_{\eff}^{(l)}$.
\end{lem}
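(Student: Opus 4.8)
The plan is to exploit the structural observation, made just before the lemma, that every block $\widehat{H}_m$ of $H_{\eff}^{(l)}$ is a set of \emph{consecutive} columns of $H$; hence each block can be recorded as an interval $[a_m,b_m]$ of column indices, and the columns $\udl{h}_j,\dots,\udl{h}_i$ occur together in block $m$ exactly when $a_m\le j$ and $b_m\ge i$. The whole statement thus reduces to tallying blocks whose interval contains $[j,i]$. I would first reduce to the case $l=0$: since $G_l$ is built by \emph{adding} two columns to $G_{l+1}$, the multiset of block-intervals of $H_{\eff}^{(l)}$ is contained in that of $H_{\eff}^{(0)}$ for every $l$ (the full/diagonal blocks are common to all $l$, and the paired blocks for $G_l$ are a sub-multiset of those for $G_0$), and deleting blocks can only decrease the number that contain $[j,i]$. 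So it suffices to prove the bound $N-i+j$ at $l=0$, where the most blocks are present.

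The second statement falls out immediately from this viewpoint. Every block has at most $L$ columns: for $N\ge M$ the largest blocks are the full matrix $H$ with $M=L$ columns, for $M>N$ the largest are the diagonal windows $\{\udl{h}_m,\dots,\udl{h}_{N+m-1}\}$ with $N=L$ columns, and all paired blocks are strictly narrower. A block containing $\udl{h}_j,\dots,\udl{h}_i$ must have width $b_m-a_m+1\ge i-j+1$, so $i-j\le L-1$; equivalently, when $i-j\ge L$ no block can contain them.

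For the first statement at $l=0$ I would list the block intervals explicitly and count those containing $[j,i]$. For $N\ge M$ the blocks are the $N-M+1$ copies of $[1,M]$, the ``left'' intervals $[1,M-k]$ and the ``right'' intervals $[k+1,M]$ for $k=1,\dots,M-1$; the full blocks always contain $[j,i]$, the left blocks do so for $k\le M-i$ (contributing $M-i$), and the right blocks for $k\le j-1$ (contributing $j-1$), summing to exactly $N-i+j$. For $M>N$ the blocks are the diagonal windows $[m,N+m-1]$ for $m=1,\dots,M-N+1$, together with the ``left'' intervals $[1,N-k]$ and ``right'' intervals $[M-N+k+1,M]$ for $k=1,\dots,N-1$. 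Here a window $[m,N+m-1]$ contains $[j,i]$ precisely when $\max(1,i-N+1)\le m\le\min(M-N+1,j)$, and the hypothesis $i-j<L=N$ guarantees $i-N+1\le j$, so this range is nonempty.

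The main work, and the only real obstacle, is evaluating this windowed count together with the left/right contributions when the endpoints of the range are clamped by the boundaries $1$ and $M-N+1$. I would split into the four cases according to whether $i\le N$ (so that $\max(1,i-N+1)=1$) and whether $j\le M-N+1$ (so that $\min(M-N+1,j)=j$). In each case the number of diagonal blocks, plus $\max(0,N-i)$ left blocks, plus $\max(0,\,j-M+N-1)$ right blocks, sums to exactly $N-i+j$: the boundary terms picked up by the left/right intervals precisely compensate for the windows lost to clamping. Since $l=0$ already attains $N-i+j$ and every other $l$ uses a sub-multiset of these blocks, the bound $N-i+j$ holds for all $l=0,\dots,L-1$.
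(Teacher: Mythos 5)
Your proposal is correct and follows essentially the same route as the paper's proof: both enumerate the block intervals produced by the construction \eqref{eq:HeffBlocksForTheCaseNgeM1}--\eqref{eq:HeffBlocksForTheCaseNgeM3}, \eqref{eq:HeffBlocksForTheCaseMgN1}--\eqref{eq:HeffBlocksForTheCaseMgN3}, count those containing $\{\udl{h}_{j},\dots,\udl{h}_{i}\}$, and split the $M>N$ case into the same four subcases according to whether $i\le N$ and whether $j\ge M-N+1$, with the second claim following from the maximal block width. The only cosmetic difference is that you first reduce to $l=0$ via the sub-multiset observation and then count exactly, whereas the paper bounds the count directly for general $l$ (the smaller ranges $k=1,\dots,M-1-l$ or $N-1-l$ only help the upper bound).
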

\begin{proof}
See appendix \ref{Append:NumberEachColumnOccurrences}.
\end{proof}

\subsection{Upper Bound on The Error Probability}

Next we would like to derive an upper bound on the average decoding
error probability of ensemble of $K_{l}T_{l}$-complex dimensional
IC, for each channel realization. We define
$|H_{\eff}^{(l)\dagger}H_{\eff}^{(l)}|=\rho^{-\sum_{i=1}^{K_{l}T_{l}}\eta_{i}}$
, where $\rho^{-\frac{\eta_{i}}{2}}$ is the $i'th$ singular value of
$H_{\eff}^{(l)}$, $1\le i\le K_{l}T_{l}$. We also define
$\udl{\eta}=(\eta_{1},\dots,\eta_{K_{l}T_{l}})^{T}$. Note that
$NT_{l}\ge K_{l}T_{l}$.

\begin{theorem}\label{Th:UpperBoundErrorProb}
There exists a sequence of $K_{l}T_{l}$-complex dimensional IC's,
with channel realization $H_{\eff}^{(l)}$ and a receiver VNR
$\mu_{rc}=\rho^{1-\frac{r}{K_{l}}-\frac{\sum_{i=1}^{K_{l}T_{l}}{\eta_{i}}}{K_{l}T_{l}}}$,
that has an average decoding error probability
\ifthenelse{\equal{\singlecolumntype}{1}}
{$$\ol{P_{e}}(H_{\eff}^{(l)},\rho)=\ol{P_{e}}(\udl{\eta},\rho)\le
D(K_{l}T_{l})\rho^{-T_{l}(K_{l}-r)+\sum_{i=1}^{K_{l}T_{l}}\eta_{i}}
=D(K_{l}T_{l})\rho^{-T_{l}(K_{l}-r)}\cdot|H_{\eff}^{(l)\dagger}H_{\eff}^{(l)}|^{-1}$$}
{\begin{align}
\nonumber\ol{P_{e}}(H_{\eff}^{(l)},\rho)=\ol{P_{e}}(\udl{\eta},\rho)\le
D(K_{l}T_{l})\rho^{-T_{l}(K_{l}-r)+\sum_{i=1}^{K_{l}T_{l}}\eta_{i}}\nonumber\\
=D(K_{l}T_{l})\rho^{-T_{l}(K_{l}-r)}\cdot|H_{\eff}^{(l)\dagger}H_{\eff}^{(l)}|^{-1}\nonumber
\end{align}}
where $D(K_{l}T_{l})$ is a constant independent of $\rho$, and
$\eta_{i}\ge 0$ for every $1\le i\le K_{l}T_{l}$.
\end{theorem}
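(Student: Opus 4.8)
The plan is to extend the random-coding argument of \cite{PoltirevJournal} to the effective channel $H_{\eff}^{(l)}$ by combining a union bound over pairwise error events with a Minkowski--Hlawka / Siegel type averaging over an ensemble of IC's. First I would fix the channel realization $\udl{\eta}$ and work in the receiver space. Since every point of the receiver IC $S'_{K_{l}T_{l}}=H_{\eff}^{(l)}S_{K_{l}T_{l}}$ lies in the $K_{l}T_{l}$-complex dimensional image subspace of $H_{\eff}^{(l)}$, nearest-point (lattice) decoding reduces to comparisons within that subspace: the component of $\udl{n}_{ex}$ orthogonal to the signal subspace cancels in every pairwise comparison, so the relevant effective noise is the projection of $\udl{n}_{ex}$ onto the subspace, which is again complex Gaussian with per-real-dimension variance $\sigma^{2}=\frac{\rho^{-1}}{2\pi e}$. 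Consequently the pairwise error probability between a point and a competitor at receiver distance $\|\Delta\|$ is the one-dimensional Gaussian tail evaluated at $\|\Delta\|/(2\sigma)$, and the error probability of any point is upper bounded by the sum of these tails over all other points of $S'_{K_{l}T_{l}}$.

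Next I would average this union bound over a Poltyrev ensemble of IC's. The transmit density $\gamma_{tr}=\rho^{rT_{l}}$ is mapped by $H_{\eff}^{(l)}$ to the receiver density $\gamma_{rc}=\gamma_{tr}\,|H_{\eff}^{(l)\dagger}H_{\eff}^{(l)}|^{-1}=\rho^{rT_{l}+\sum_{i}\eta_{i}}$, since the real $2K_{l}T_{l}$-dimensional volume is scaled by the product of the squared singular values $|H_{\eff}^{(l)\dagger}H_{\eff}^{(l)}|=\rho^{-\sum_{i}\eta_{i}}$. Applying the Siegel mean-value (Minkowski--Hlawka) identity to the ensemble of receiver IC's of density $\gamma_{rc}$ replaces the expected sum over competitors by $\gamma_{rc}$ times the integral of the pairwise tail over the whole $2K_{l}T_{l}$-real-dimensional subspace, so that $E_{\mathrm{ens}}[\ol{P_{e}}(\udl{\eta},\rho)]\le \gamma_{rc}\int_{\mathbb{R}^{2K_{l}T_{l}}}\Pr\big(g\ge\|z\|/(2\sigma)\big)\,dz$, where $g$ is a standard Gaussian.

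The integral I would evaluate by the substitution $z=2\sigma w$, which pulls out a factor $(2\sigma)^{2K_{l}T_{l}}$ and leaves a finite constant $\int_{\mathbb{R}^{2K_{l}T_{l}}}\Pr(g\ge\|w\|)\,dw$ (finite because the Gaussian tail decays faster than any polynomial grows). Collecting the constants into $D(K_{l}T_{l})$ and using $\sigma^{2K_{l}T_{l}}=(2\pi e)^{-K_{l}T_{l}}\rho^{-K_{l}T_{l}}$ yields $E_{\mathrm{ens}}[\ol{P_{e}}(\udl{\eta},\rho)]\le D(K_{l}T_{l})\,\gamma_{rc}\,\sigma^{2K_{l}T_{l}}=D(K_{l}T_{l})\,\rho^{-T_{l}(K_{l}-r)+\sum_{i}\eta_{i}}$, which is exactly $D(K_{l}T_{l})\,\rho^{-T_{l}(K_{l}-r)}|H_{\eff}^{(l)\dagger}H_{\eff}^{(l)}|^{-1}$ and equals $D(K_{l}T_{l})\,\mu_{rc}^{-K_{l}T_{l}}$. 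Since the ensemble average obeys the bound, at least one IC in the ensemble does as well, and letting $\rho$ vary produces the asserted sequence; the non-negativity $\eta_{i}\ge0$ is inherited from the singular-value definition $\lambda_{i}=\rho^{-\alpha_{i}}$ with $\alpha_{i}\ge0$.

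The main obstacle I anticipate is bookkeeping the dimension count rather than any deep estimate: one must be careful that the IC occupies only the $K_{l}T_{l}$-complex dimensional image of $H_{\eff}^{(l)}$ inside $\mathbb{C}^{NT_{l}}$, so that the Siegel averaging integrates over $\mathbb{R}^{2K_{l}T_{l}}$ (producing the exponent $-K_{l}T_{l}$ on $\rho$ through $\sigma^{2K_{l}T_{l}}$) and not over the ambient $\mathbb{R}^{2NT_{l}}$; an incorrect count would spoil the matching of the $\rho$-exponent. A secondary point is to confirm that the union bound is not too loose for the DMT: near the outage boundary $\mu_{rc}\approx1$ the bound tends to a constant, consistent with error bounded away from zero, while for $\mu_{rc}\gg1$ it decays polynomially, and this is precisely the behavior needed when the bound is later averaged over $\udl{\eta}$.
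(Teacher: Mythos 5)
Your proposal is correct in substance but takes a genuinely different route from the paper's proof of this theorem. The paper does not invoke Minkowski--Hlawka here: it draws an ensemble of \emph{finite} constellations uniformly inside $\cube_{K_{l}T_{l}}(b)$, applies Poltyrev's truncated bound (a sphere term $\Pr(\lVert\udl{\tilde{n}}_{\ex}\rVert\ge R_{\eff})$ plus pairwise terms restricted to $Ball(\udl{x}^{'},2R)$), splits the channel realizations into the non-outage set $\mathcal{A}$ and the outage set $\ol{\mathcal{A}}$, and only then extends each finite constellation to an IC by tiling with $(b+b^{'})\mathbb{Z}^{2K_{l}T_{l}}$, verifying that neither the density nor the error bound is spoiled by the replication. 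Your untruncated union bound combined with the Siegel mean-value identity collapses all of this into a single computation, $\gamma_{\rc}\int \Pr\bigl(g\ge\lVert z\rVert/(2\sigma)\bigr)dz\propto\gamma_{\rc}\,\sigma^{2K_{l}T_{l}}$, and it holds uniformly over $\udl{\eta}\ge 0$ with no outage case-split; your dimension bookkeeping (projection of the noise onto the $K_{l}T_{l}$-complex-dimensional image subspace, density scaling by $|H_{\eff}^{(l)\dagger}H_{\eff}^{(l)}|^{-1}$) and the resulting exponent are exactly right. What you have written is essentially the argument the paper reserves for the lattice version (Theorem \ref{Th:LowerBoundDiversityOrderLattices}, Appendix \ref{Append:LatticesDiversityOrder}); the paper's own proof of this theorem is more laborious but more elementary, and it produces general (non-lattice) IC's from a transmitter-side uniform ensemble.

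Two technical points must be repaired for your version to be airtight. First, the Minkowski--Hlawka--Siegel theorem, as stated and used in the paper, applies to Riemann integrable functions that vanish outside a bounded region; your integrand $\Pr\bigl(g\ge\lVert z\rVert/(2\sigma)\bigr)$ has unbounded support. You must either keep Poltyrev's truncation at radius $2R_{\eff}$ together with the separate tail term (as the paper does in Appendix \ref{Append:LatticesDiversityOrder}) or justify the extension of the identity to nonnegative integrable functions by monotone convergence of truncations. Second, the ensemble must be defined in the transmitter, independently of the channel realization: Theorem \ref{Th:LowerBoundDiversityOrder} later averages the per-realization bound over $H$ and extracts a single sequence of IC's, which is impossible if the Siegel ensemble is re-randomized in the receiver for each $\udl{\eta}$. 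This is fixable---the image of a fixed transmitter ensemble of density $\gamma_{\tr}$ under the invertible map $H_{\eff}^{(l)}$ is again an ensemble of density $\gamma_{\rc}$ to which the identity applies, and the paper handles the same issue by pulling the decoding problem back to the transmitter via $\bigl(H_{\eff}^{(l)\dagger}H_{\eff}^{(l)}\bigr)^{-1}H_{\eff}^{(l)\dagger}$---but it needs to be said explicitly, since your closing existence argument (``at least one IC in the ensemble'') is per realization and per $\rho$, and by itself does not yield one sequence that is good for all realizations simultaneously.
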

\begin{proof}
We base our proof on the techniques developed by Poltyrev
\cite{PoltirevJournal} for the AWGN channel. However, the channel
considered here is colored. In spite of that, we show that what
affects the average decoding error probability is the singular
values product, which is encapsulated by the receiver VNR,
$\mu_{rc}$. This observation enables us to facilitate this colored
channel analysis. The full proof in appendix
\ref{Append:UpperBoundErrorProb}.
\end{proof}
By averaging arguments we know that there exists a sequence of IC's that
satisfies these requirements.

\subsection{Achieving the Optimal DMT}
In this subsection we calculate the DMT of the proposed transmission
scheme. We upper bound the determinant of the effective channel
inverse, $|H_{\eff}^{(l)\dagger}H_{\eff}^{(l)}|^{-1}$, based on the
effective channel properties presented in subsection
\ref{subsec:TheEffectiveChannel}. In Theorem
\ref{Th:UpperBoundErrorProb} we showed that the upper bound on the
error probability depends on this determinant. Hence, the upper
bound on the determinant gives us a new upper bound on the average
decoding error probability. We average the new upper bound over all
channel realizations and get the DMT of the transmission scheme.

The channel matrix $H$ consists of $N\cdot M$ i.i.d entries, where
each entry has distribution $h_{i,j}\sim CN(0,1)$. Without loss of
generality we consider the case where the columns of $H$ are drawn
sequentially from left to right, i.e. $\udl{h}_{1}$ is drawn first,
then $\udl{h}_{2}$ is drawn et cetera. Column $\udl{h}_{j}$ is an
$N$-dimensional vector. Given
$\udl{h}_{\max(1,j-N+1)},\dots,\udl{h}_{j-1}$, we can write
$$\udl{h}_{j}=\Theta(\udl{h}_{\max(1,j-N+1)},\dots,\udl{h}_{j-1})\cdot\udl{\widetilde{h}}_{j}$$
where $\Theta(\cdot)$ is an $N\times N$ unitary matrix.
$\Theta(\cdot)$ is chosen such that:
\begin{enumerate}
\item The first element of $\udl{\widetilde{h}}_{j}$, $\widetilde{h}_{1,j}$,  is
in the direction of $\udl{h}_{j-1}$.
\item The second element, $\widetilde{h}_{2,j}$, is in the direction orthogonal to
$\udl{h}_{j-1}$, in the hyperplane spanned by
$\{\udl{h}_{j-1},\udl{h}_{j-2}\}$.
\item Element $\widetilde{h}_{\min(j,N)-1,j}$ is in the direction
orthogonal to the hyperplane spanned by
$\{\udl{h}_{\max(2,j-N+2)},\dots,\udl{h}_{j-1}\}$ inside the
hyperplane spanned by
$\{\udl{h}_{\max(1,j-N+1)},\dots,\udl{h}_{j-1}\}$.
\item The rest of the $N-\min(j,N)+1$
elements are in directions orthogonal to the hyperplane
$\{\udl{h}_{\max(1,j-N+1)},\dots,\udl{h}_{j-1}\}$.
\end{enumerate}
Note that $\widetilde{h}_{i,j}$, $1\le i\le N$, $1\le j\le M$ are
i.i.d random variables with distribution $CN(0,1)$. Let us denote by
$\udl{h}_{j\perp j-1,\dots,j-k}$ the component of $\udl{h}_{j}$
which resides in the $N-k$ subspace which is perpendicular to the
space spanned by $\{\udl{h}_{j-1},\dots,\udl{h}_{j-k}\}$. In this
case we get
\begin{equation}\label{eq:TheValueOfOrthogonalValuesToHyperplane}
\lv\udl{h}_{j\perp
j-1,\dots,j-k}\rv^{2}=\sum_{i=k+1}^{N}|\widetilde{h}_{i,j}|^2\quad
1\le k\le\min(j,N)-1.
\end{equation}

If we assign $|\widetilde{h}_{i,j}|^{2}=\rho^{-\xi_{i,j}}$, we get
that the probability density function (PDF) of $\xi_{i,f}$ is
\begin{equation}\label{eq:TheXiOriginalPDF}
f(\xi_{i,j})=C\cdot\log{\rho}\cdot\rho^{-\xi_{i,j}}\cdot
e^{-\rho^{-\xi_{i,j}}}
\end{equation}
where $C$ is a normalization factor. In our analysis we assume a
very large value for $\rho$. Hence we can neglect events where
$\xi_{i,j}<0$ since in this case the PDF \eqref{eq:TheXiOriginalPDF}
decreases exponentially as a function of $\rho$. For a very large
$\rho$, $\xi_{i,j}\ge 0$, $1\le i\le N$ and $1\le j\le M$, the PDF
takes the following form
\begin{equation}\label{eq:ThePDFOfMagnitudeExponent}
f(\xi_{i,j})\propto \rho^{-\xi_{i,j}} \qquad \xi_{i,j}\ge 0.
\end{equation}
In this case by assigning in
\eqref{eq:TheValueOfOrthogonalValuesToHyperplane} the vector
$\udl{\xi}_{j}=(\xi_{1,j},\dots,\xi_{N,j})^{T}$, whose PDF is
proportional to $\rho^{-\sum_{i=1}^{N}\xi_{i,j}}$, we get
\begin{equation}\label{eq:TheContributionOfhjInTheGeneralCase1}
\lv\udl{h}_{j\perp
j-1,\dots,j-k}\rv^{2}\dot{=}\rho^{-\min_{s\in\{k+1,\dots,N\}}\xi_{s,j}}=\rho^{-a(k,\udl{\xi}_{j})}
\end{equation}
where $1\le k\le \min(j,L)-1$ and
$a(k,\udl{\xi}_{j})=\min_{s\in\{k+1,\dots,N\}}\xi_{s,j}$. In
addition
\begin{equation}\label{eq:TheContributionOfhjInTheGeneralCase2}
\lv\udl{h}_{j}\rv^{2}\dot{=}\rho^{-\min_{s\in\{1,\dots,N\}}\xi_{s,j}}=\rho^{-a(0,\udl{\xi}_{j})}.
\end{equation}
Note that
\begin{equation}\label{eq:InequalityOfTheNumberOfContributions}
a(\min(j,L)-1,\udl{\xi}_{j})\ge\dots\ge a(0,\udl{\xi}_{j})\ge 0.
\end{equation}

Next we wish to quantify the contribution of a certain column in the
channel matrix, $\udl{h}_{j}$, to the determinant
$|H_{\eff}^{(l)\dagger}H_{\eff}^{(l)}|$. $H_{\eff}^{(l)}$ is a block
diagonal matrix. Hence the determinant of
$|H_{\eff}^{(l)\dagger}H_{\eff}^{(l)}|$ can be expressed as
\begin{equation}
|H_{\eff}^{(l)\dagger}H_{\eff}^{(l)}|=\prod_{i=1}^{T_{l}}|\widehat{H}_{i}^{\dagger}\widehat{H}_{i}|.
\end{equation}
Assume
$\widehat{H}_{i}=(\udl{\widehat{h}}_{1},\dots,\udl{\widehat{h}}_{m})$,
i.e. $\widehat{H}_{i}$ has $m$ columns. In this case we can state
that the determinant
$$|\widehat{H}_{i}^{\dagger}\widehat{H}_{i}|=\lv\udl{\widehat{h}}_{1}\rv^{2}\lv\udl{\widehat{h}}_{2\perp
1}\rv^{2}\dots\lv\udl{\widehat{h}}_{m\perp m-1,\dots,1}\rv^{2}.$$
Note that $\widehat{H}_{i}$ also has more rows than columns. The
columns of $\widehat{H}_{i}$ are subset of the columns of the
channel matrix $H$. Hence we are interested in the blocks where
$\udl{h}_{j}$ occurs. We know that the contribution of $\udl{h}_{j}$
to those determinants can be quantified by taking into account the
columns to its left in each block. We consider two cases:
\begin{itemize}
\item The case $N\ge M$. In this case we can see
from
\eqref{eq:HeffBlocksForTheCaseNgeM1}-\eqref{eq:HeffBlocksForTheCaseNgeM3}
that $\udl{h}_{j}$ may occur with
$\{\udl{h}_{1},\dots,\udl{h}_{j-1}\}$ to its left in different blocks.
\item The
case $M>N$. In this case we can see from
\eqref{eq:HeffBlocksForTheCaseMgN1}-\eqref{eq:HeffBlocksForTheCaseMgN3}
that $\udl{h}_{j}$ may occur only with
$\{\udl{h}_{\max(1,j-N+1)},\dots,\udl{h}_{j-1}\}$ to its left in different
blocks.
\end{itemize}

Based on \eqref{eq:TheContributionOfhjInTheGeneralCase1} and
\eqref{eq:TheContributionOfhjInTheGeneralCase2} we can quantify the
contribution of $\udl{h}_{j}$ to
$|H_{\eff}^{(l)\dagger}H_{\eff}^{(l)}|$ by
\ifthenelse{\equal{\singlecolumntype}{1}}
{\begin{equation}\label{eq:TheContributionOfhjIntermsofbetta}
\lv\udl{h}_{j}\rv^{2b_{j}(0)}\prod_{k=1}^{\min(j,L)-1}\lv\udl{h}_{j\perp j-1,\dots,j-k}\rv^{2b_{j}(k)}\dot{=}
\rho^{-\sum_{k=0}^{\min(j,L)-1}b_{j}(k)a(k,\udl{\xi}_{j})}
\end{equation}}
{\begin{align}\label{eq:TheContributionOfhjIntermsofbetta}
\lv\udl{h}_{j}\rv^{2b_{j}(0)}\prod_{k=1}^{\min(j,L)-1}\lv\udl{h}_{j\perp j-1,\dots,j-k}\rv^{2b_{j}(k)}\dot{=}\nonumber\\
\rho^{-\sum_{k=0}^{\min(j,L)-1}b_{j}(k)a(k,\udl{\xi}_{j})}
\end{align}}
where $b_{j}(k)$ is the number of occurrences of $\udl{h}_{j}$ in
the blocks of $H_{\eff}^{(l)}$, with only
$\{\udl{h}_{j-1},\dots,\udl{h}_{j-k}\}$ to its left. $b_{j}(0)$ is
the number of occurrences of $\udl{h}_{j}$ with no columns to its
left. Note that from the definition of the transmission scheme we
get that for $l=0$, $b_{j}(k)>0$ for $1\le k\le\min(j,L)-1$.

In the following theorem we calculate the DMT of the proposed
transmission scheme.

\begin{theorem}\label{Th:LowerBoundDiversityOrder}
There exists a sequence of $K_{l}T_{l}$-complex dimensional IC's
with transmitter density $\gamma_{tr}=\rho^{rT_{l}}$ and $T_{l}$
channel uses that has diversity order
$$d_{K_{l}T_{l}}(r)\ge (M-l)(N-l)-(r-l)(N+M-2\cdot l-1)$$
where $0\le r\le K_{l}$ and $l=0,\dots,L-1$. In the range $l\le r\le
l+1$ this lower bound coincides with the optimal DMT of finite
constellations.
\end{theorem}
\begin{proof}
The proof outline is as follows. The upper bound on the error
probability from Theorem \ref{Th:UpperBoundErrorProb} depends on
$|H_{\eff}^{(l)\dagger}H_{\eff}^{(l)}|^{-1}$. We upper bound this
determinant value and average over different realizations of
$H_{\eff}^{(l)}$ in order to find the diversity order of the
transmission matrix $G_{l}$. We begin by lower bounding
$|H_{\eff}^{(l)\dagger}H_{\eff}^{(l)}|$. Based on the sequential
structure of $G_{l}$, we lower bound the contribution of a certain
column of $H$, $\udl{h}_{j}$, $1\le j\le M$ to the determinant. This
gives us a new upper bound on the error probability for each channel
realization. We average the new upper bound on the error
probability, by averaging over $\udl{\widetilde{h}}_{1},\dots,
\udl{\widetilde{h}}_{M}$. From this averaging we get the required
DMT. The full proof is in appendix
\ref{Append:LowerBoundDiversityOrder}
\end{proof}

The diversity order attained in Theorem
\ref{Th:LowerBoundDiversityOrder} for $K_{l}$, $T_{l}$ coincides
with the optimal DMT of finite constellations in the range $l\le
r\le l+1$. Hence, by considering $0\le l\le L-1$, we can attain the
optimal DMT with $L$ sequences of IC's.

We present as an illustrative example the case of $M=N=2$. Let us
consider the case where $l=0$. In this case $K_{0}=\frac{4}{3}$, and
$T_{0}=3$, i.e. we transmit $4$-complex dimensional IC. The
transmission scheme diversity order in this case is $4-3r$, $0\le
r\le \frac{4}{3}$. In this case the effective channel matrix,
$H_{\eff}^{(0)}$, consists of three blocks:
$\widehat{H}_{1}=(\udl{h}_{1},\udl{h}_{2})$,
$\widehat{H}_{2}=\udl{h}_{1}$ and $\widehat{H}_{3}=\udl{h}_{2}$.
According to our definitions
$$|\widehat{H}_{1}^{\dagger}\widehat{H}_{1}|=\lv\udl{h}_{1}\rv^{2}\cdot\lv\udl{h}_{2\perp
1}\rv^{2}=\rho^{-\min(\xi_{1,1},\xi_{2,1})}\cdot\rho^{-\xi_{2,2}}$$
and also $\lv\udl{h}_{1}\rv^{2}=\rho^{-\min(\xi_{1,1},\xi_{2,1})}$,
$\lv\udl{h}_{2}\rv^{2}=\rho^{-\min(\xi_{1,2},\xi_{2,2})}$. In
accordance with  \eqref{eq:UpperBoundErrorProbDueExact2Ranges} we
divide the integral into two terms. In the first term we solve the
optimization problem \ifthenelse{\equal{\singlecolumntype}{1}}
{\begin{equation} \min_{\xi_{\udl{i},\udl{j}}\in
\mathcal{A}}(4-3r)-(\xi_{2,2}+2\cdot\min{\big(\xi_{1,1},\xi_{2,1})}
+\min{(\xi_{1,2},\xi_{2,2}})\big)+\sum_{i=1}^{2}\sum_{j=1}^{2}\xi_{i,j}.
\end{equation}}
{\begin{align}
\min_{\xi_{\udl{i},\udl{j}}\in \mathcal{A}}(4-3r)-(\xi_{2,2}+2\cdot\min{\big(\xi_{1,1},\xi_{2,1})}\nonumber\\
+\min{(\xi_{1,2},\xi_{2,2}})\big)+\sum_{i=1}^{2}\sum_{j=1}^{2}\xi_{i,j}.\nonumber
\end{align}}
One solution to this problem is $\xi_{i,j}=0$ for $1\le i\le 2$, $1\le j\le 2$. In this case we get an exponential term that equals $4-3r$. For the second integral we solve the optimization problem
$$\min_{\xi_{\udl{i},\udl{j}}\in \ol{\mathcal{A}}}\sum_{i=1}^{2}\sum_{j=1}^{2}\xi_{i,j}.$$
In this case the optimization problem solution is
$\sum_{i=1}^{2}\sum_{j=1}^{2}\xi_{i,j}=4-3r$. Hence, all together,
we get a diversity order that equals $4-3r$, that coincides with the
optimal DMT of finite constellations in the range $0\le r\le 1$.

In the next theorem we prove the existence of a sequence of lattices that
has the same lower bound as in Theorem
\ref{Th:LowerBoundDiversityOrder}.

\begin{theorem}\label{Th:LowerBoundDiversityOrderLattices}
There exists a sequence of $2K_{l}T_{l}$-real dimensional lattices
with transmitter density $\gamma_{tr}=\rho^{rT_{l}}$ and $T_{l}$
channel uses, that attains a diversity order
$$d_{K_{l}T_{l}}(r)\ge (M-l)(N-l)-(r-l)(N+M-2\cdot l-1)$$
where $0\le r\le K_{l}$ and $l=0,\dots,L-1$.
\end{theorem}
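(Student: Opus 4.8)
The plan is to repeat the achievability argument of Theorem~\ref{Th:LowerBoundDiversityOrder} almost verbatim, replacing the random ensemble of finite constellations drawn uniformly inside $\cube_{K_{l}T_{l}}(b)$ by a random ensemble of \emph{lattices} of the same density. The crucial observation is that the only property of the random ensemble used in the proof of Theorem~\ref{Th:UpperBoundErrorProb} is the first-moment estimate \eqref{eq:AverageNumberPoints}, namely that the expected number of ensemble points inside a ring $Ring(\udl{x}^{'},i\Delta)$ equals the density times the volume of the ring. This is exactly the content of the Minkowski--Hlawka--Siegel theorem for random lattices: for a suitably normalized ensemble of $2K_{l}T_{l}$-real dimensional lattices of density $\gamma_{tr}=\rho^{rT_{l}}$, the expected number of nonzero lattice points inside any bounded measurable set $S$ equals $\gamma_{tr}\cdot\mathbf{Vol}(S)$, which is precisely the tool Poltyrev~\cite{PoltirevJournal} used to prove the achievability of the Poltyrev limit by lattices over the AWGN channel.

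First I would fix $\rho$ and draw a lattice $\Lambda$ from the Minkowski--Hlawka ensemble in the transmit space $\mathbb{C}^{K_{l}T_{l}}$ with density $\gamma_{tr}=\rho^{rT_{l}}$, and push it through the effective channel to obtain $H_{\eff}^{(l)}\cdot\Lambda$, which is almost surely a genuine lattice living in the $K_{l}T_{l}$-complex dimensional image subspace with receiver density $\gamma_{rc}=\rho^{rT_{l}+\sum_{i}\eta_{i}}$. Because a lattice already tiles its subspace, the periodic-extension step of Theorem~\ref{Th:UpperBoundErrorProb} (the spacing parameter $b^{'}$) is unnecessary; moreover, by the translational symmetry of the lattice the average decoding error probability $\ol{P_{e}}(H_{\eff}^{(l)},\rho)$ equals the ML error probability of a single lattice point, for every channel realization. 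Running the union bound \eqref{eq:MLPointErrorProbUpperBound} and averaging over the Minkowski--Hlawka ensemble --- using its first-moment property in place of \eqref{eq:AverageNumberPoints} --- reproduces, for each fixed $\udl{\eta}$, the per-realization bound $D(K_{l}T_{l})\rho^{-T_{l}(K_{l}-r)+\sum_{i}\eta_{i}}$ of Theorem~\ref{Th:UpperBoundErrorProb}.

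Next I would average over the channel. Substituting the determinant lower bound \eqref{eq:LowerNoundOnThedeterminantHeff}, equivalently the inverse-determinant bound \eqref{eq:UpperBoundInverseDetByhjTerms}, and invoking Lemma~\ref{lem:ErrorIntegrandAlwaysNegative} exactly as in Theorem~\ref{Th:LowerBoundDiversityOrder}, the ensemble-and-channel averaged error probability satisfies $E_{H}E_{\Lambda}\{\ol{P_{e}}\}\dot{\le}\rho^{-T_{l}(K_{l}-r)}$, since the dominant exponent of both the $\mathcal{A}$ and $\ol{\mathcal{A}}$ integrals equals $T_{l}(K_{l}-r)=(M-l)(N-l)-(r-l)(N+M-2l-1)$. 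By Fubini this equals $E_{\Lambda}E_{H}\{\ol{P_{e}}\}$, so a standard averaging argument yields, for every $\rho$, a specific lattice $\Lambda(\rho)$ in the ensemble whose channel-averaged error probability is at most a constant multiple of $\rho^{-T_{l}(K_{l}-r)}$. Letting $\rho$ vary produces the desired sequence of $2K_{l}T_{l}$-real dimensional lattices.

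The main obstacle I anticipate is the transportation of the Minkowski--Hlawka ensemble through the effective channel: one must verify that pushing a density-$\gamma_{tr}$ random lattice through the full-column-rank (though ambient-rank-deficient) block-diagonal map $H_{\eff}^{(l)}$ yields an ensemble in the image subspace whose first-moment statistics are still those of a density-$\gamma_{rc}$ Minkowski--Hlawka ensemble, so that the point-counting estimate underlying \eqref{eq:AverageNumberPoints} remains valid there. Once this is established the remaining steps are identical to the general IC case, since the error-probability analysis depends on the lattice only through its density and on the channel only through $\mu_{rc}$, i.e.\ through $|H_{\eff}^{(l)\dagger}H_{\eff}^{(l)}|$.
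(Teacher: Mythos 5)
Your proposal is correct and follows essentially the same route as the paper: a Minkowski--Hlawka--Siegel ensemble of $2K_lT_l$-real dimensional lattices fixed at the transmitter, the observation that only the first-moment point-counting property is needed to reproduce the per-realization bound of Theorem~\ref{Th:UpperBoundErrorProb}, the identical channel averaging of Theorem~\ref{Th:LowerBoundDiversityOrder}, and a Fubini/averaging argument to extract one lattice per $\rho$. The ``transportation'' obstacle you flag is precisely what the paper resolves, and in the way you anticipate: it pulls the error functional back to the transmitter via the zero-forcing transform \eqref{eq:TheEffectiveChannelModelWithTheTransmitter} (harmless under regular lattice decoding, yielding colored noise and an ellipsoidal truncation region) and verifies the change-of-variables identity $\gamma_{\mathrm{tr}}\int g_{\rc}=\gamma_{\rc}\int f_{\rc}$ in \eqref{eq:EqualityBetweenElipseAndBallIntegral}, so that a single probability measure serves all channel realizations --- which is exactly the push-forward/first-moment claim in your last paragraph.
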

\begin{proof}
See appendix \ref{Append:LatticesDiversityOrder}
\end{proof}
Note that we considered a $2K_{l}T_{l}$-real dimensional lattice,
where the lattice first $K_{l}T_{l}$ dimensions are spread over the
real part of the non-zero entries of $G_{l}$, and the other
$K_{l}T_{l}$ dimensions of the lattice are spread on the imaginary
part of the non-zero entries of $G_{l}$. This does not necessarily
yields a $K_{l}T_{l}$-complex dimensional lattice in the transission
scheme. Considering the $2K_{l}T_{l}$-real dimensional lattice
enables us to use the \emph{Minkowski-Hlawaka-Siegel} Theorem
\cite{PoltirevJournal},\cite{LekkerkerkerGeomety}, and prove Theorem
\ref{Th:LowerBoundDiversityOrderLattices}.

\subsection{Power Spreading}\label{subsec:PeaktoAverageRatio}
For practical reasons, such as power peak to average ratio, one may
prefer to have a transmission scheme that spreads the transmitted
power equally over time and space. The transmitting matrix $G_{l}$
contains exactly $K_{l}T_{l}$ non-zero entries, where the rest of
the entries are zero. In order to spread the power more equally over
time and space we use the following unitary operations
$$U_{L} G_{l}U_{R}.$$
$U_{L}$ is an $M\times M$ unitary matrix that spreads each column of
$G_{l}$, i.e. spreads over space. $U_{R}$ is a $T_{l}\times T_{l}$
unitary matrix that spreads each raw of $G_{l}$, i.e. spreads over
time. As the distribution of $H$ and $H\cdot U_{L}$ are identical,
multiplying $U_{L}$ with $G_{l}$ gives exactly the same performance.
Based on the notations from $\eqref{eq:Channel Fading}$ we can state
that
$$G_{l}\cdot U_{R}=\big(\udl{x}_{1},\dots,\udl{x}_{T_{l}}\big)$$
where $\big(\udl{x}_{1},\dots,\udl{x}_{T_{l}}\big)$ are the channel inputs.
In the receiver we can state that the received signals are $\big(\udl{y}_{1},\dots,\udl{y}_{T_{l}}\big)$. By multiplying with $U_{R}^{\dagger}$ we get
$$\big(\udl{y}_{1},\dots,\udl{y}_{T_{l}}\big)\cdot U_{R}^{\dagger}=G_{l}+\big(\udl{n}_{1},\dots,\udl{n}_{T_{l}}\big)U_{R}^{\dagger}.$$
The distribution of $\big(\udl{n}_{1},\dots,\udl{n}_{T_{l}}\big)$ is
identical to the distribution of
$\big(\udl{n}_{1},\dots,\udl{n}_{T_{l}}\big)U_{R}^{\dagger}$. Hence,
multiplying $G_{l}$ with $U_{R}$ gives also exactly the same
performance. For instance, in order to achieve full diversity and
spread the power more uniformly, we take $G_{0}$ and duplicate its
structure $s$ times to create the transmission scheme $G_{0}^{(s)}$.
In this case the transmission matrix $G_{0}^{(s)}$ consists of
$sK_{0}T_{0}$ complex non-zero entries, i.e we transmit an
$sK_{0}T_{0}$ complex dimensional IC within the $sMT_{0}$ complex
space. $G_{0}^{(s)}$ is an $M\times sT_{0}$ dimensional matrix, that
has exactly the same diversity order as $G_{0}$ (it duplicates the
structure of $G_{0}$ $s$ times). Each row of $G_{0}^{(s)}$ has
exactly $sN$ non-zero entries. We define $U_{R}^{(s)}$ as
$sT_{0}\times sT_{0}$ unitary matrix. For large enough $s$, the
multiplication $G_{0}^{(s)}\cdot U_{R}^{(s)}$ spreads the power more
uniformly over space and time, and still achieves full diversity.
\footnote{It can be shown that replacing $U_{L}$ and $U_{R}$ with
any other two invertible matrices still yields transmission scheme
that attains the optimal DMT. It extends the set of subspaces in
$\mathbb{C}^{MT}$ that attain the optimal DMT. It also alludes that
alongside the proposed transmission matrix
\ref{subsec:TheTransmissionScheme}, there are many other options to
attain the optimal DMT.}

\subsection{Averaging Arguments}\label{subsec:AveragingArguments}
In this subsection we show that there exist $L$ sequences of
lattices that attain the optimal DMT, where each sequence of the $L$
sequences attains a different segment on the optimal DMT curve. In
addition we show that there exists a single IC that attains the
optimal DMT by diluting its points and adapting its dimensionality.

As a consequence of Theorem \ref{Th:UpperBoundErrorProb} and Theorem
\ref{Th:LowerBoundDiversityOrder} we can state the following
\begin{cor}\label{Cor:SequenceICAttainsTheEntireLine}
 Consider a sequence of $KT$-complex dimensional IC's $S_{KT}(\rho)$ with density $\gamma_{tr}=1$, that attains diversity order $d$. This sequence of IC's  also attains diversity order $d(1-\frac{r}{K})$ when the sequence density is scaled to $\gamma_{tr}=\rho^{rT}$.
\end{cor}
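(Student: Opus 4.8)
The plan is to reduce the claim to a single scaling identity for the receiver VNR and then transport the diversity exponent through it. The guiding observation is that, for every fixed channel realization $H$, the average decoding error probability is controlled (to exponential order) by the receiver VNR $\mu_{rc}$ alone: the lower bound of Theorem~\ref{Th:LowerBoundChannelReal} is a function of $\mu_{rc}$, and the achievability bound of Theorem~\ref{Th:UpperBoundErrorProb} can be written as $\ol{P_{e}}(H,\rho)\dot{\le}\mu_{rc}^{-KT}$. Hence it suffices to track how $\mu_{rc}$ responds to rescaling the transmitter density from $\gamma_{tr}=1$ to $\gamma_{tr}=\rho^{rT}$.

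First I would isolate the density dependence of $\mu_{rc}$. Since the number of IC points is preserved under the linear map $H_{ex}$, the receiver and transmitter densities satisfy $\gamma_{rc}=\gamma_{tr}\cdot\phi(H)$, where $\phi(H)=a^{2KT}/\mathbf{Vol}(H_{ex}\cdot\cube_{KT}(a))$ depends only on the singular values of $H_{ex}$ and is independent of both $\rho$ and $\gamma_{tr}$. With $\sigma^{2}=\frac{\rho^{-1}}{2\pi e}$ this yields
\beq
\mu_{rc}=\frac{\gamma_{rc}^{-\frac{1}{KT}}}{2\pi e\sigma^{2}}=\gamma_{tr}^{-\frac{1}{KT}}\cdot\phi(H)^{-\frac{1}{KT}}\cdot\rho .
\eeq
For $\gamma_{tr}=1$ set $\mu_{rc}^{(0)}(\rho,H)=\phi(H)^{-\frac{1}{KT}}\rho$; for $\gamma_{tr}=\rho^{rT}$ one has $\gamma_{tr}^{-\frac{1}{KT}}=\rho^{-\frac{r}{K}}$, so
\beq
\mu_{rc}^{(r)}(\rho,H)=\phi(H)^{-\frac{1}{KT}}\rho^{1-\frac{r}{K}}=\mu_{rc}^{(0)}\big(\rho^{1-\frac{r}{K}},H\big).
\eeq
Thus, realization by realization, densifying to $\rho^{rT}$ is identical to retaining density $1$ while replacing the SNR parameter $\rho$ by $\tilde{\rho}=\rho^{1-\frac{r}{K}}$. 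Equivalently, the rescaled constellation $\rho^{-\frac{r}{2K}}S_{KT}(\rho)$ has density $\rho^{rT}$, and because ML decoding is invariant under a common scaling of the received points and the noise, the rescaled problem at $\rho$ is exactly the density-$1$ problem with per-dimension noise variance $\rho^{\frac{r}{K}}\sigma^{2}=\frac{\rho^{-(1-\frac{r}{K})}}{2\pi e}$, i.e.\ the density-$1$ problem at $\tilde{\rho}$.

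It remains to average over $H$. The fading law (i.i.d.\ $CN(0,1)$ entries) does not depend on $\rho$, so the integration measure is common to both problems. Combining this with the fact that $\ol{P_{e}}(H,\rho)$ is, to exponential order, a function of $\mu_{rc}$, the per-realization identity above integrates to $\ol{P_{e}}^{(r)}(\rho)\dot{=}\ol{P_{e}}^{(0)}(\rho^{1-\frac{r}{K}})$. By hypothesis the density-$1$ sequence attains diversity $d$, i.e.\ $\ol{P_{e}}^{(0)}(\rho)\dot{=}\rho^{-d}$. Using $\log_{\rho}(\cdot)=(1-\frac{r}{K})\log_{\tilde{\rho}}(\cdot)$ with $\tilde{\rho}=\rho^{1-\frac{r}{K}}$ then gives $\ol{P_{e}}^{(r)}(\rho)\dot{=}\rho^{-d(1-\frac{r}{K})}$, i.e.\ diversity order $d(1-\frac{r}{K})$, as claimed.

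The main obstacle is justifying that the averaged exponent transports cleanly, i.e.\ that nothing beyond $\mu_{rc}$ enters the error probability at the exponential scale and that the dominant-exponent (Laplace-type) evaluation of the $H$-integral commutes with the substitution $\rho\mapsto\tilde{\rho}$. This is precisely where Theorems~\ref{Th:LowerBoundChannelReal} and~\ref{Th:UpperBoundErrorProb} carry the load: they sandwich $\ol{P_{e}}(H,\rho)$ between two monotone functions of $\mu_{rc}$ of matching exponential order, so the outage region $\{\mu_{rc}\le 1\}$ and the dominant error event transform consistently under the substitution. One should also note that $r<K$ is needed for $\tilde{\rho}\to\infty$; the endpoint $r=K$ forces $\mu_{rc}\dot{=}1$ and zero diversity, in agreement with the formula.
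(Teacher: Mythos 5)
The working core of your argument --- scale $S_{KT}(\rho)$ by $\rho^{-\frac{r}{2K}}$ to obtain density $\rho^{rT}$, use the invariance of ML (regular lattice) decoding under a common scaling of constellation and noise to identify the scaled problem at SNR $\rho$ with the density-one problem at $\tilde{\rho}=\rho^{1-\frac{r}{K}}$, and then average over $H$ using the fact that the fading law does not depend on $\rho$ --- is exactly the paper's proof, and it is sound. Note that this identification is an \emph{exact} equality for every realization $H$, so after averaging one gets $\ol{P_{e}}^{(r)}(\rho)=\ol{P_{e}}^{(0)}(\tilde{\rho})$ with equality, not merely $\dot{=}$; the exponent bookkeeping $\log_{\rho}(\cdot)=(1-\frac{r}{K})\log_{\tilde{\rho}}(\cdot)$ then yields diversity $d(1-\frac{r}{K})$ exactly as you conclude.

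What is not correct is the scaffolding you wrap around this step, and which your final paragraph claims ``carries the load'': the assertion that for each fixed $H$ the error probability is pinned down to exponential order by $\mu_{\rc}$ alone, sandwiched between Theorem~\ref{Th:LowerBoundChannelReal} and Theorem~\ref{Th:UpperBoundErrorProb}. Theorem~\ref{Th:UpperBoundErrorProb} is an existence statement for a particular random ensemble transmitted over the scheme $G_{l}$; it provides no upper bound whatsoever for the \emph{arbitrary} sequence $S_{KT}(\rho)$ hypothesized in the corollary. And no such upper bound can exist: an IC of the prescribed density whose points come in arbitrarily close pairs has error probability bounded away from zero regardless of $\mu_{\rc}$, so the per-realization error probability is not a function of the VNR, even to exponential order --- only the sphere-packing lower bound of Theorem~\ref{Th:LowerBoundChannelReal} is universal. (Moreover, that lower bound decays exponentially in $\mu_{\rc}$ while your claimed upper bound $\mu_{\rc}^{-KT}$ is polynomial in $\mu_{\rc}$, so the two do not have ``matching exponential order'' realization by realization; they only meet at the DMT level after averaging over the outage-type events.) Fortunately, none of this is needed: because the scaling identity is exact, it integrates over $H$ directly, and your proof becomes complete and correct --- and identical to the paper's --- once every appeal to VNR sufficiency is deleted.
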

\begin{proof}
The proof is in appendix
\ref{append:SequenceICAttainsTheEntireLine}.
\end{proof}

\begin{cor}
The optimal DMT is attained by exactly $L$ sequences of
 $2K_{l}T_{l}$-real dimensional lattices, $l=0,\dots, L-1$, where
each sequence attains different segment of the optimal DMT.
\end{cor}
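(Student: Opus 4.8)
The plan is to combine the converse of Theorem~\ref{Th:UpperBoundDiversityOrder} (summarized in Corollaries~\ref{Cor:EqualPointsUpperBoundDivOrder} and~\ref{Cor:MaximalDiverityOrder}) with the lattice achievability of Theorem~\ref{Th:LowerBoundDiversityOrderLattices}, and to argue that the $L$ lattice sequences --- one for each $l=0,\dots,L-1$ --- exactly tile the Zheng--Tse optimal DMT, segment by segment. First I would recall that the optimal DMT of finite constellations \cite{TseDivMult2003} is the piecewise-linear concave curve interpolating the points $\big(l,(M-l)(N-l)\big)$, $l=0,1,\dots,L$; it consists of precisely $L$ linear pieces, the $l$-th piece living over $l\le r\le l+1$ with slope $-(N+M-2l-1)$.

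For the achievability direction, for each $l$ I would invoke Theorem~\ref{Th:LowerBoundDiversityOrderLattices} to obtain a sequence of $2K_lT_l$-real dimensional lattices, with $K_l=\frac{(M-l)(N-l)}{N+M-1-2l}+l$ and $T_l=N+M-1-2l$, attaining $d_{K_lT_l}(r)=(M-l)(N-l)-(r-l)(N+M-2l-1)$ for $0\le r\le K_l$. I would first check that this range contains the segment $[l,l+1]$: writing $a=M-l$ and $b=N-l$ (both $\ge 1$ since $l\le L-1$), the inequality $K_l\ge l+1$ reduces to $ab\ge a+b-1$, i.e. $(a-1)(b-1)\ge 0$, which always holds. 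Evaluating the line at the endpoints gives $d_{K_lT_l}(l)=(M-l)(N-l)$ and $d_{K_lT_l}(l+1)=(M-l)(N-l)-(N+M-2l-1)=(M-l-1)(N-l-1)$, exactly the two endpoints of the $l$-th Zheng--Tse piece. Since both functions are affine on $[l,l+1]$ and agree at the endpoints, they coincide on the whole segment, so the $l$-th lattice sequence attains the optimal DMT over $[l,l+1]$; letting $l$ run over $0,\dots,L-1$ then covers all of $[0,L]$, and the optimal DMT is recovered as the upper envelope $\max_{0\le K\le L}d_K^{\ast}(r)$ already noted after Corollary~\ref{Cor:MaximalDiverityOrder}.

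For the converse count --- the word \emph{exactly} --- I would use that Theorem~\ref{Th:UpperBoundDiversityOrder} forces the attainable diversity of any fixed-dimension IC sequence to lie on a single straight line $d_K^{\ast}(r)$ in $r$. Because the optimal DMT is \emph{strictly} concave (the slopes $-(N+M-2l-1)$ strictly increase with $l$, by steps of $2$), a single straight line can coincide with it on at most one of its $L$ linear pieces. Hence no single lattice sequence can cover two distinct segments, so at least $L$ sequences are required; together with the achievability above this shows that exactly $L$ sequences are both necessary and sufficient, each responsible for one segment.

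The only mildly delicate point, and the one I would treat most carefully, is the necessity half of ``exactly $L$'': I must make the strict-concavity argument airtight, checking that the breakpoint slopes are genuinely distinct so that a single affine $d_K^{\ast}(r)$ cannot match two pieces, and that the $L$ dimensions $K_l$ are themselves distinct --- which follows because each $K_l$ is the right endpoint of the disjoint $K$-bands appearing in Theorem~\ref{Th:UpperBoundDiversityOrder}. Everything else is the endpoint-matching computation already prepared by Corollaries~\ref{Cor:EqualPointsUpperBoundDivOrder} and~\ref{Cor:MaximalDiverityOrder}, so the remaining work is purely bookkeeping.
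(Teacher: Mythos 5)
Your endpoint-matching computations are correct (including the check $K_l\ge l+1$ via $(a-1)(b-1)\ge 0$), but there is a genuine gap in your achievability half: you read Theorem~\ref{Th:LowerBoundDiversityOrderLattices} as producing, for each $l$, \emph{one} lattice sequence that attains the line $T_l(K_l-r)$ simultaneously for all $0\le r\le K_l$. The theorem does not give that. Its ensemble (Appendix~E) is built for a \emph{fixed} multiplexing gain --- every lattice in it has density $\gamma_{\mathrm{tr}}=\rho^{rT_l}$, which depends on $r$ --- so the averaging argument yields, for each fixed $r$, \emph{some} sequence attaining the corresponding DMT point, a priori a different sequence for each $r$. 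Since the whole content of the corollary is that $L$ \emph{fixed} sequences suffice (under your reading one would need a continuum of sequences, one per multiplexing gain), this quantifier order is precisely what has to be bridged. The paper bridges it by taking the Theorem~\ref{Th:LowerBoundDiversityOrderLattices} sequence only at $r=0$ (density $\gamma_{\mathrm{tr}}=1$, diversity $(M-l)(N-l)+l(N+M-2l-1)=K_lT_l$) and then invoking Corollary~\ref{Cor:SequenceICAttainsTheEntireLine}: scaling that one base sequence by $\rho^{-r/(2K_l)}$ produces, for every $r$, a sequence of density $\rho^{rT_l}$ whose diversity is $K_lT_l\big(1-\frac{r}{K_l}\big)=T_l(K_l-r)$, so a single sequence traces out the entire line and hence the entire segment $[l,l+1]$. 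Your proposal never uses this scaling corollary, so as written it does not establish ``one sequence per segment''; the fix is one sentence, but it is the sentence the corollary exists for.

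On the other hand, your necessity argument for the word \emph{exactly} is correct and goes beyond what the paper proves --- the paper's own proof only exhibits $L$ sequences that suffice and never argues minimality. Your argument can be made airtight as follows: if a single $K$-dimensional sequence (together with its scalings) attained two distinct segments $l\ne l'$, then its diversity equals the optimal DMT there, while Theorem~\ref{Th:UpperBoundDiversityOrder} bounds it by the affine function $d_K^{\ast}(r)$, and $d_K^{\ast}(r)\le \max_{K'}d_{K'}^{\ast}(r)$, which is the optimal DMT; hence $d_K^{\ast}$ would have to \emph{equal} the DMT on both segments, which is impossible because the segment slopes $-(N+M-2l-1)$ are pairwise distinct and $d_K^{\ast}$ is affine. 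This strengthening is worth keeping, but it only counts once the achievability gap above is repaired.
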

\begin{proof}
From Theorem \ref{Th:LowerBoundDiversityOrderLattices} we know that
there exists a $2K_{l}T_{l}$-real dimensional sequence of lattices
with density $\gamma_{tr}=1$ that attains diversity
$(M-l)(N-l)+l(N+M-2\cdot l-1)$. Hence, based on Corollary
\ref{Cor:SequenceICAttainsTheEntireLine} we can scale this
$2K_{l}T_{l}$-real dimensional sequence of lattices into a sequence
of lattices with density $\gamma_{tr}=\rho^{rT_{l}}$, and a
diversity order $(M-l)(N-l)-(r-l)(N+M-2\cdot l-1)$, i.e. the
sequence of lattices attains the optimal DMT line in the range $l\le
r\le l+1$. The optimal DMT is the maximal value of the $L$ lines,
for each $0\le r\le L$. Hence, there exist $L$ sequences of lattices
that attain the optimal DMT.
\end{proof}

Next, we show that there exists a single sequence of IC's that
attains the optimal DMT. The optimal DMT consists of $L$ segments of
straight lines. Each segment is attained by reducing the IC's
dimensionality to the correct dimension, and diluting their points
to get the desired density. Note that in Theorem
\ref{Th:LowerBoundDiversityOrder} we showed that for each
multiplexing gain, $r$, there exists a sequence of IC's that attains
the optimal DMT. On the other hand, in Corollary
\ref{cor:SingleICAttainsTheOptimalDMT} we show that a single
sequence of IC's attains the optimal DMT for any $r$, by adapting
its dimensionality and diluting its points. Also note that
$K_{0}T_{0}>K_{1}T_{1}>\cdots>K_{L-1}T_{L-1}$.
\begin{cor}\label{cor:SingleICAttainsTheOptimalDMT}
There exists a single sequence of $K_{0}T_{0}$-complex dimensional
IC's, that attains the $L$ segments of the optimal DMT:
$$(M-l)(N-l)-(r-l)(N+M-2\cdot l-1)\quad 0\le r\le K_{l}$$
where $l=0,\cdots,L-1$. The $l'th$ segment is attained by reducing
the IC's complex dimensionality to $K_{l}T_{l}$, and by diluting
their points to get density $\gamma_{tr}=\rho^{T_{l}r}$.
\end{cor}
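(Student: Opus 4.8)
The plan is to reduce the statement to the per-segment results already established and then glue the $L$ constructions into a single sequence by an averaging argument. First I would invoke Corollary \ref{Cor:SequenceICAttainsTheEntireLine} to strip away the density scaling: it suffices to produce, for each $l$, a \emph{density-one} IC sequence of complex dimension $K_l T_l$ whose diversity through the effective channel $H_{\eff}^{(l)}$ is at least $d_l=(M-l)(N-l)+l(N+M-2l-1)$, since scaling such a sequence to $\gamma_{tr}=\rho^{T_l r}$ then produces exactly the $l$-th segment $(M-l)(N-l)-(r-l)(N+M-2l-1)$. Theorem \ref{Th:LowerBoundDiversityOrder} (read at $r=0$) already guarantees that such a density-one sequence exists for each individual $l$. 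The remaining task is therefore one of simultaneity: to exhibit a single $K_0 T_0$-complex dimensional sequence $S(\rho)$ whose reduction to the $K_l T_l$ coordinates carrying the nonzero entries of $G_l$ — equivalently, restricting to the first $T_l$ channel uses and zeroing the $K_0 T_0-K_l T_l$ symbols added in columns $T_l+1,\dots,T_0$ — is a good density-one IC for scheme $G_l$, for \emph{every} $l=0,\dots,L-1$ at once.

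To obtain simultaneity I would work with the random ensemble already used in the proofs of Theorems \ref{Th:UpperBoundErrorProb} and \ref{Th:LowerBoundDiversityOrderLattices}, but arranged on the nested coordinate structure of $G_0$. Because $G_{l+1}$ is the first $T_{l+1}$ columns of $G_l$, we have $G_0\supset G_1\supset\cdots\supset G_{L-1}$, so the coordinate subspaces satisfy $V_{L-1}\subset\cdots\subset V_0$ with $\dim V_l=K_l T_l$, and the reduction to $V_l$ maps through the first $T_l$ diagonal blocks of $H_{\eff}^{(0)}$, which are precisely the blocks of $H_{\eff}^{(l)}$. For each fixed $l$, Theorem \ref{Th:UpperBoundErrorProb} controls the \emph{ensemble-average} decoding error probability, and averaging it over the channel exactly as in Theorem \ref{Th:LowerBoundDiversityOrder} shows that the ensemble-average diversity of the $V_l$-reduction is at least $d_l$. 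By Markov's inequality the fraction of ensemble members whose $V_l$-reduction fails to reach diversity $d_l-\epsilon$ can be driven below $1/L$.

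Since there are only finitely many segments, I would finish with a union bound over $l=0,\dots,L-1$: discarding the $L$ bad sets still leaves a positive-measure set of ensemble members, so at least one master sequence $S(\rho)$ has all $L$ of its reductions simultaneously good. Diluting the $V_l$-reduction of this single $S(\rho)$ to density $\rho^{T_l r}$ and applying Corollary \ref{Cor:SequenceICAttainsTheEntireLine} then yields the $l$-th segment for each $l$, and taking the maximum over $l$ recovers the entire optimal DMT from one sequence, as claimed.

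The step I expect to be the main obstacle is guaranteeing that the $V_l$-reductions are drawn from an ensemble for which the averaging of Theorem \ref{Th:LowerBoundDiversityOrder} genuinely delivers diversity $d_l$. The Minkowski--Hlawka--Siegel/Poltyrev averaging was set up for a full random lattice in the ambient dimension, whereas here the $V_l$-reduction is a coordinate restriction of a single higher-dimensional object and hence lives in a constrained, nested (product-like) sub-ensemble. Verifying that this restricted ensemble still attains the per-segment diversity — in particular that the block-diagonal structure of $H_{\eff}^{(l)}$ does not create dominant error events confined to the added coordinate layers — is the technical heart of the argument; once this compatibility is checked, the reduction-plus-union-bound bookkeeping above goes through.
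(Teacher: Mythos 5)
Your architecture is exactly the paper's: reduce to density-one sequences per segment via Theorem \ref{Th:LowerBoundDiversityOrder} and Corollary \ref{Cor:SequenceICAttainsTheEntireLine}, exploit the nested column structure of the $G_l$'s, and get simultaneity by Markov's inequality plus a union bound over the $L$ segments (the paper gets $\Pr(I_{0,\rho}\cap\dots\cap I_{L-1,\rho})\ge \tfrac{1}{L+1}$ with threshold $(L+1)A_l(\rho)\rho^{-d_{K_l}}$). However, the step you flag as ``the technical heart'' and leave open is precisely where your description of the reduction goes wrong, so as written there is a gap. Restricting a density-one $K_0T_0$-complex dimensional IC to the $K_lT_l$ coordinates of $G_l$ (``zeroing'' the extra symbols) does \emph{not} give a density-one IC: the projection of roughly $b^{2K_0T_0}$ points per $\cube_{K_0T_0}(b)$ into $\cube_{K_lT_l}(b)$ has density of order $b^{2(K_0T_0-K_lT_l)}\to\infty$, so Theorem \ref{Th:UpperBoundErrorProb} and the channel averaging of Theorem \ref{Th:LowerBoundDiversityOrder} cannot be applied to it at multiplexing gain zero. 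The reduction must include dilution at the ensemble level, not only at the final scaling stage: in the paper, each ensemble member is a set of i.i.d.\ uniform points in $\cube_{K_0T_0}(b)$ tiled by $(b+b')\mathbb{Z}^{2K_0T_0}$, and the $K_lT_l$-dimensional object is obtained by keeping only the \emph{first} $\lfloor b^{2K_lT_l}\rfloor$ points, taking their components inside $\cube_{K_lT_l}(b)$, and re-tiling by $(b+b')\mathbb{Z}^{2K_lT_l}$.

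Once the reduction is defined this way, the obstacle you anticipate disappears for free, because the ensemble relevant to this corollary is the i.i.d.\ uniform ensemble of Theorem \ref{Th:UpperBoundErrorProb}, not the Minkowski--Hlawka--Siegel lattice ensemble of Theorem \ref{Th:LowerBoundDiversityOrderLattices} (the corollary is about IC's, so lattices are not needed here). Coordinate components of points drawn uniformly in the big cube are themselves uniform in the small cube, so the derived $K_lT_l$-dimensional ensemble is distributed \emph{identically} to a fresh uniform ensemble of $\lfloor b^{2K_lT_l}\rfloor$ points in $\cube_{K_lT_l}(b)$; no new argument about ``constrained, nested sub-ensembles'' or about dominant error events in the added coordinate layers is required --- Theorem \ref{Th:UpperBoundErrorProb} and the averaging of Theorem \ref{Th:LowerBoundDiversityOrder} apply verbatim to each reduced ensemble. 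The only bookkeeping is that the single choice of $b,b'$ (the paper takes $b=\sqrt{\tfrac{K_0T_0}{\pi e}}\rho^{\frac{K_0T_0}{2}+2\epsilon}$ and $b'=\sqrt{\tfrac{K_0T_0}{\pi e}}\rho^{\frac{K_0T_0}{2}+\epsilon}$) exceeds the thresholds required by Theorem \ref{Th:UpperBoundErrorProb} in every dimension $K_lT_l<K_0T_0$ simultaneously. With the reduction corrected in this way, your Markov-plus-union-bound conclusion and the final rescaling via Corollary \ref{Cor:SequenceICAttainsTheEntireLine} go through exactly as in the paper.
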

\begin{proof}
See Appendix \ref{append:SingleICAttainsTheOptimalDMT}.
\end{proof}

\section{Discussion}\label{sec:LatticeVsLatticeBasedFC}
In this section we discuss the results presented in the paper. We
begin by explaining why full dimension lattice based coding schemes
such as Golden-codes \cite{BelfioreGoldenCodes}, perfect codes
\cite{BelfiorePerfectCodes} and other cyclic-division algebra based
space-time codes \cite{EliaExplicitSTC} which were shown to attain
the optimal DMT, are sub-optimal when regular lattice decoder
\eqref{eq:BasicDefinitionsRegularLatticeDecoding} is employed in the
receiver. In addition, we explain why using the MMSE estimation in
the receiver enables these schemes to attain the optimal DMT.
Afterwards, based on our results, we give another geometrical
interpretation to the optimal DMT. Finally, since in practice a
finite codebook is transmitted, we show that given a lattice with
multiplexing gain $r$ as defined for IC's in
\eqref{eq:ICMGDefinition}, a finite constellation with multiplexing
gain $r$ as defined in \cite{TseDivMult2003} can also be carved from
it.

\subsection{Lattice Constellations Vs. Full Dimension Lattice Based Finite Constellations}\label{subsec:LatticeVsLatticeBasedFC}
In order to demonstrate that full dimension lattice based coding
schemes with regular lattice decoding are sub-optimal let us
consider Golden-codes transmitted over a channel with $M=N=2$ where
$T=2$. For large $\rho$ the channel singular values PDF is
proportional to $\rho^{-\alpha_{1}-3\alpha_{2}}$, where
$\alpha_{1}\ge\alpha_{2}\ge 0$. A Golden-code of a certain rate is
carved from a $4$-complex dimensional lattice. We show that when
performing regular lattice decoding in the receiver the maximal
diversity order that can be attained for $r=0$ is $2$. This is in
contrast to ML decoding or alternatively MMSE estimation followed by
lattice decoding \cite{ElGamalLAST2004},
\cite{EliaJaldenDMTOptLRLinearLaticeDecoding} for which the maximal
diversity order equals 4.

We begin by showing why the maximal diversity order of a Golden-code
is 2 when performing regular lattice decoding. In the receiver, the
squared effective radius of the effective lattice induced by the
channel realization equals \eqref{eq:EffectiveRadius}
\begin{equation}
r_{\eff}^{2}
\dot{=}\rho^{-\frac{\alpha_{1}+\alpha_{2}}{2}}\dot{=}\gamma_{\rc}^{-\frac{1}{4}}.
\end{equation}
For lattices $r_{\eff}\ge
r_{\packing}=\frac{d_{\min}^{\lf(\lattice\ri)}}{2}$, where
$r_{\packing}$, $d_{\min}^{\lf(\lattice\ri)}$ are the packing radius
and the minimal distance of the lattice respectively. Hence, we get
\begin{equation}\label{eq:MinimalDistanceFadingLattices}
\lf(\frac{d_{\min}^{\lf(\lattice\ri)}}{2}\ri)^{2}\dot{\le}
\rho^{-\frac{\alpha_{1}+\alpha_{2}}{2}}.
\end{equation}
When the squared minimal distance is in the order of the additive
noise variance, $\rho^{-1}$, the error probability will not decrease
with $\rho$. This will happen for instance when $\alpha_{2}=0$ and
$\alpha_{1}=2$. This event occurs for large $\rho$ with probability
proportional to $\rho^{-2}$. Hence, in this case the diversity order
is 2. Note that for the 4-complex dimensional lattice we get
\eqref{eq:MuRc}
\begin{equation}\label{eq:VNRMinimalDistanceRatioFadingLattices}
\mu_{\rc}\dot{=}\frac{r_{\eff}^{2}}{\rho^{-1}}\dot{=}\rho^{1-\frac{\alpha_{1}+\alpha_{2}}{2}}.
\end{equation}
Therefore, the event where the squared effective radius is in the
order of the noise variance is equivalent to $\mu_{\rc}\dot{=}1$
which is the outage event for lattices, presented in Theorem
\ref{Th:UpperBoundDiversityOrder}.

From equation \eqref{eq:MinimalDistanceFadingLattices} we get that
the minimal distance for each channel realization of the
\emph{entire} lattice, induces diversity order 2. On the other hand,
when the decoder only considers the words within the \emph{finite}
codebook, the non-vanishing determinant (NVD) property combined with
the boundaries of the codebook leads to a lower bound on the minimal
distance of the Golden-code for each channel realization, that is
larger than the expression in
\eqref{eq:MinimalDistanceFadingLattices}, and enables to attain
diversity order 4 \cite{EliaExplicitSTC}.

The fact that considering the entire lattice leads to smaller
minimal distance is not surprising since the multiplication of the
transmitted lattice with the channel realization leads to scaling of
this lattice in the direction of the channel singular values. When
considering the infinite lattice, the scaling may reduce the
distance between points that were very far in the transmitted
lattice. These points are not necessarily part of the finite
codebook and therefore does not effect the minimal distance of the
finite Golden-code but do effect the minimal distance of the
lattice.

MMSE estimation followed by lattice decoding will also lead to
diversity order 4. Translating the arguments presented in
\cite{ElGamalLAST2004},
\cite{EliaJaldenDMTOptLRLinearLaticeDecoding} to our setting leads
to VNR
\begin{equation}
\tilde{\mu}_{\rc}\dot{=}\rho^{\frac{\lf(1-\alpha_{1}\ri)^{+}+\lf(1-\alpha_{2}\ri)^{+}}{2}}
\end{equation}
where $\lf(x\ri)^{+}=x$ for $x\ge 0$ and zero else. This expression
is larger than the expression in
\eqref{eq:VNRMinimalDistanceRatioFadingLattices} and implies that
the MMSE estimation, that takes into account the transmitted power,
also improves the minimal distance for each channel realization.
However, the improvement in VNR (and minimal distance) comes at the
expense of a self additive noise that depends on the transmitted
codeword. Under the assumption that the transmitted codewords are
not too far from the origin the variance of the effective noise is
small enough to allow attaining the optimal DMT. For instance
Golden-code codewords are from a bounded shaping region, which
enables to attain diversity order 4. Note that for the entire
lattice, the farther the lattice point is from the origin, the
larger the effective noise variance is. This eventually leads to
poor error performance for lattice points far enough from the
origin.

Our work shows that transmitting a lattice with average number of
dimensions per channel use $K=\frac{4}{3}$ and performing regular
lattice decoding in the receiver leads to VNR
\begin{equation}
\mu_{\rc}\dot{=}\rho^{1-\frac{\alpha_{1}}{4}-\frac{3\alpha_{2}}{4}}
\end{equation}
which is also larger than
\eqref{eq:VNRMinimalDistanceRatioFadingLattices} and enables to
attain diversity order 4 (in fact it attains the optimal DMT in the
range $0\le r\le 1$). Hence, from our work we can see that reducing
the lattice dimensionality increases the \emph{lattice} minimal
distance to such an extent that enables to attain the optimal DMT
when performing regular lattice decoding. In this sense reducing the
lattice dimensionality takes the role of MMSE estimation. It is also
interesting to note that MMSE estimation followed by lattice
decoding yields good error performance for lattice points close
enough to the origin (for instance lattice points within the shaping
region), and bad performance for lattice points very far from the
origin. On the other hand, regular lattice decoding yields the same
performance for all lattice points inside or outside the shaping
region. An illustrative example that shows how reduced dimension
assists in increasing the minimal distance compared to full
dimension lattice is presented in Figure
\ref{fig:FiniteVsInfiniteConstellations1}.

\begin{figure}
\centering \subfigure[Finite constellation: In this case even when
$h_{2}$ is small it is possible to decode.]{
\setlength{\unitlength}{1bp}%
\begin{picture}(392.48, 107.73)(0,0)
\put(0,0){\includegraphics{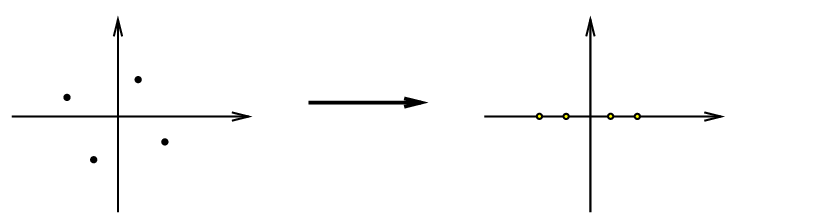}}
\put(285.63,95.79){\fontsize{8.03}{9.64}\selectfont $h_{2}x_{2}$}
  \put(339.45,59.05){\fontsize{8.03}{9.64}\selectfont $h_{1}x_{1}$}
  \put(58.85,95.79){\fontsize{8.03}{9.64}\selectfont $x_{2}$}
  \put(112.68,59.05){\fontsize{8.03}{9.64}\selectfont $x_{1}$}
  \end{picture}%
} \subfigure[Full dimensional infinite constellaion: In this case
due to the infiniteness of the constellation when $h_{2}$ is very
small it is impossible to decode.]{
  \setlength{\unitlength}{1bp}%
  \begin{picture}(395.32, 107.73)(0,0)
  \put(0,0){\includegraphics{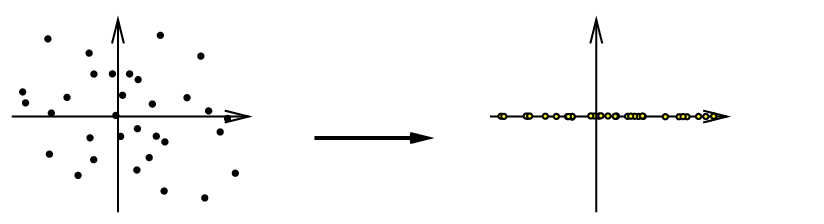}}
  \put(58.85,95.79){\fontsize{8.03}{9.64}\selectfont $x_{2}$}
  \put(112.68,59.05){\fontsize{8.03}{9.64}\selectfont $x_{1}$}
  \put(288.46,95.79){\fontsize{8.03}{9.64}\selectfont $h_{2}x_{2}$}
  \put(342.29,59.05){\fontsize{8.03}{9.64}\selectfont $h_{1}x_{1}$}
  \end{picture}%
} \subfigure[Infinite constellaion with reduced dimension: In this
case even when $h_{2}$ is very small it is possible to decode.]{
  \setlength{\unitlength}{1bp}%
  \begin{picture}(392.48, 107.73)(0,0)
  \put(0,0){\includegraphics{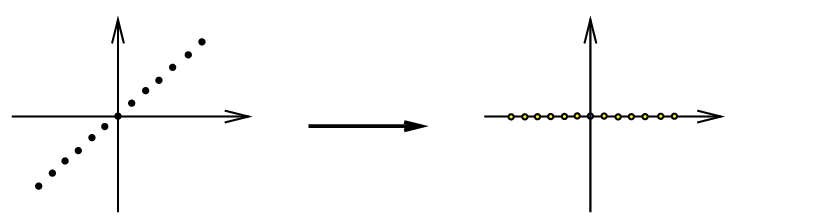}}
  \put(58.85,95.79){\fontsize{8.03}{9.64}\selectfont $x_{2}$}
  \put(112.68,59.05){\fontsize{8.03}{9.64}\selectfont $x_{1}$}
  \put(285.63,95.79){\fontsize{8.03}{9.64}\selectfont $h_{2}x_{2}$}
  \put(339.45,59.05){\fontsize{8.03}{9.64}\selectfont $h_{1}x_{1}$}
  \end{picture}%
}\caption{Illustrative example for the case $M=2$, $N=2$ of the
significance of reducing dimensions when considering regular lattice
decoding. For this example we assume that the realization of $H$ is
diagonal, where the diagonal elements are $h_{1}$ and
$h_{2}$.}\label{fig:FiniteVsInfiniteConstellations1}
\end{figure}

\subsection{Geometrical Interpretation of the Optimal DMT, for IC's}
In this subsection we give a geometrical interpretation of the
optimal DMT, based on allocation of lattice dimensions. This is a
qualitative discussion and the exact results appear in sections
\ref{sec:LowerBoundErrorProb}, \ref{sec:LowerBoundDiversityOrder}.

First from our results we can see that for a sequence of lattices
with certain number of dimensions per channel use the DMT is a
straight line as a function of the multiplexing gain (see Corollary
\ref{Cor:SequenceICAttainsTheEntireLine}). It results from the fact
that for lattices changing the multiplexing gain is equivalent to
scaling each dimension by $\rho^{-\frac{r}{2K}}$. Assume that the
sequence of lattices attains for multiplexing gain $r=0$ diversity
order $d$, i.e. the error probability decays as $\rho^{-d}$. In this
case scaling each dimension by $\rho^{-\frac{r}{2K}}$ leads to error
probability that decays as $\rho^{-d \lf(1-\frac{r}{K}\ri)}$. This
behavior results from the fact that the lattice decoder takes into
consideration all the lattice points. Hence, the scaling merely
replaces $\rho$ with $\rho^{1-\frac{r}{K}}$ in the error probability
expression. The optimal DMT is a piecewise linear function. We get
that each line corresponds to a sequence of lattices with certain
number of dimensions per channel use.

Next we wish to give the reasoning for the average number of
dimensions per channel use required to achieve each line in the
optimal DMT. For simplicity let us consider the case $M=N=3$. We
begin by considering the straight line in the range $0\le r\le 1$.
In this range the optimal DMT equals $9-5\cdot r$. We wish to show
why the average number of dimensions per channel use that enables to
attain this straight line equals $\frac{9}{5}$. For large $\rho$ the
channel singular values PDF is of the form of
$\rho^{-\alpha_{1}-3\alpha_{2}-5\alpha_{3}}$, where
$\alpha_{1}\ge\alpha_{2}\ge\alpha_{3}\ge 0$. When the transmission
scheme spreads over $T$ channel uses, the equivalent channel matrix,
$H_{\ex}$,  presented in \eqref{eq:ExtendedChannelModel} has $3T$
singular values. Each singular value of $H$ occurs $T$ times in the
singular values of $H_{\ex}$. Assume each complex dimension of the
lattice is transmitted on a certain singular value of $H_{\ex}$. Let
us denote by $T_{i}$ the number of dimensions transmitted on the
singular values that equal $\rho^{-\frac{\alpha_{i}}{2}}$, $1\le
i\le 3$. Note that $\sum_{i=1}^{3}T_{i}$ may be smaller than $3T$.
According to this assumption a $\lf(\sum_{i=1}^{3}T_{i}\ri)$-complex
dimensional lattice is transmitted over $T$ channel uses, and the
average number of dimensions per channel use is
$K=\frac{\sum_{i=1}^{3}T_{i}}{T}$. The effective radius in the
receiver equals
\begin{equation}\label{eq:FiniteVsInfiniteConstellations8}
r_{\eff}\dot{=}\rho^{-\frac{r\cdot
T}{\sum_{i=1}^{3}T_{i}}-\frac{T_{1}\alpha_{1}+T_{2}\alpha_{2}+T_{3}\alpha_{3}}{\sum_{i=1}^{3}T_{i}}}.
\end{equation}
and the VNR equals
\begin{equation}\label{eq:FiniteVsInfiniteConstellations7}
\mu_{\rc}\dot{=}\rho^{1-\frac{r\cdot
T}{\sum_{i=1}^{3}T_{i}}-\frac{T_{1}\alpha_{1}+T_{2}\alpha_{2}+T_{3}\alpha_{3}}{\sum_{i=1}^{3}T_{i}}}.
\end{equation}
We are interested in the probability of the outage event, i.e. the
probability that $\mu_{\rc}\dot{=}1$. Essentially, we show that when
$K<\frac{9}{5}$ it is possible to attain maximal diversity order of
$9$ for $r=0$, but it is impossible to attain the line $9-5\cdot r$
for any $0<r\le\frac{9}{5}$. It results from the fact that
multiplexing gain $r>0$ requires \emph{scaling} each dimension by
$\rho^{-\frac{r}{2K}}=\rho^{-\frac{r\cdot
T}{2\sum_{i=1}^{3}T_{i}}}$, which decreases $r_{\eff}$ (and as a
consequence also decreases the lattice minimal distance) to such an
extent that it does not enable to attain the optimal DMT. On the
other hand when $K>\frac{9}{5}$ the \emph{channel} decreases
$r_{\eff}$ to such an extent that it does not enable to attain the
optimal DMT for $0\le r<1$. Hence, $K=\frac{9}{5}$ balances the
effect of the scaling and the channel and allows to attain the
optimal DMT in the range $0\le r\le 1$.

In order to attain the maximal diversity order $9$ when $r=0$, the
outage event $\mu_{\rc}\dot{=}1$ implies that the following
conditions need to be fulfilled
\begin{equation}\label{eq:FiniteVsInfiniteConstellations6}
\frac{T_{1}}{\sum_{i=1}^{3}T_{i}}\le\frac{1}{9},\qquad
\frac{T_{1}+T_{2}}{\sum_{i=1}^{3}T_{i}}\le\frac{4}{9}
\end{equation}
i.e. each singular value can not occur in more dimensions than the
relative effect it has on the PDF of the singular values. The
largest average number of dimensions per channel use that fulfils
\eqref{eq:FiniteVsInfiniteConstellations6} is $\frac{9}{5}$. In this
case for $T=5$ a 9-complex dimensional lattice is transmitted, and
the conditions are fulfilled with equality when $T_{1}=1$, $T_{2}=3$
and $T_{3}=5$. When $K<\frac{9}{5}$ the conditions in
\eqref{eq:FiniteVsInfiniteConstellations6} are still fulfilled and
therefore diversity order $9$ is still attained for $r=0$. However,
based on \eqref{eq:FiniteVsInfiniteConstellations8} we get for $r>0$
that $r_{\eff}$ decreases faster than the case of $K=\frac{9}{5}$.
Hence, for $K<\frac{9}{5}$ the diversity order is smaller than
$9-5\cdot r$ when $0<r\le\frac{9}{5}$.

So far we have shown that choosing $K<\frac{9}{5}$ leads to sub-optimal DMT.
Now, we wish to show that in the range $0\le r\le 1$ the DMT is
smaller than $9-5\cdot r$ also when $K>\frac{9}{5}$. First, for
$K>\frac{9}{5}$ the conditions in
\eqref{eq:FiniteVsInfiniteConstellations6} are not met. Hence, in
this case the diversity order is smaller than $9$ when $r=0$. For
$r=1$ and $K=\frac{9}{5}$ the diversity order equals $4$. Assume the
best assignment of lattice dimensions would enable to choose
$T_{3}=T$. In this case $\mu_{\rc}$ in
\eqref{eq:FiniteVsInfiniteConstellations7} is effected equally if
$r=1$, $\alpha_{3}=0$ or $r=0$, $\alpha_{3}=1$, i.e. the scaling inflicted by $r=1$ decreases $r_{\eff}$ in \eqref{eq:FiniteVsInfiniteConstellations8} as if the singular value $\rho^{-\frac{\alpha_{3}}{2}}=\rho^{-\frac{1}{2}}$. In both cases we get
\begin{equation}
\mu_{\rc}=\rho^{\frac{T_{1}+T_{2}-T_{1}\alpha_{1}-T_{2}\alpha_{2}}{T_{1}+T_{2}+T}}.
\end{equation}
The difference is that when $r=1$, $\alpha_{3}=0$ the PDF of the
singular values equals $\rho^{-\alpha_{1}-3\alpha_{2}}$ which leads
to smaller diversity order than the case $r=0$, $\alpha_{3}=1$. For
large $\rho$ and $r=1$, $\alpha_{3}=0$ is included in the most
dominant error event when $K\ge\frac{9}{5}$. Hence, diversity order
of $4$ is attained for $r=1$ and $K>\frac{9}{5}$ when the following
condition is met
\begin{equation}
\frac{T_{1}}{T_{1}+T_{2}}\le\frac{1}{4}
\end{equation}
which is exactly the condition for attaining maximal diversity order
of $4$ when $r=0$ in a channel with $2$ transmit and $2$ receive
antennas. This condition is met as long as $K\le \frac{7}{3}$.
Hence, for $\frac{9}{5}<K\le\frac{7}{3}$ the best diversity order is
smaller than $9$ when $r=0$, and equals $4$ when $r=1$. Since for
each $K$ the largest DMT is a straight line, the DMT for each
$0<K\le\frac{7}{3}$ in the range $0\le r\le 1$ is smaller than
$9-5\cdot r$. We are left with the case $\frac{7}{3}<K\le 2$. By
applying similar arguments, only this time considering $r=2$, it can
be shown that in the range $0\le r < 2$ the largest DMT for any
$\frac{7}{3}<K\le 2$ is smaller than $7-3\cdot r$. These arguments
also show that in the range $2\le r\le 3$ the optimal DMT equals
$2-r$. Hence, we get for $0\le r <1$ that the optimal DMT equals
$9-5\cdot r$, where for $1\le r<2$, $2\le r\le 3$ the optimal DMT
equals $7-3\cdot r$ and $2-r$ respectively.

\subsection{The Relation Between the Multiplexing Gains of an IC and a Finite Constellation}\label{subseec:LatticeVsLatticeBasedFCICMGtoFCMG}
In this paper we defined the multiplexing gain of IC's sequence as
the rate the IC's density increases \eqref{eq:ICMGDefinition}, i.e.
when $\gamma_{\mathrm{tr}}=\rho^{rT}$ the multiplexing gain is $r$.
We characterized the optimal DMT of IC's based on this definition of
the multiplexing gain. In practice a finite constellation is
transmitted, even when performing regular lattice decoding in the
receiver. Hence, in this subsection we show that finite
constellation with multiplexing gain $r$ can be carved from a
lattice with multiplexing gain $r$ (according to the definition
given in \eqref{eq:ICMGDefinition}), while maintaining the same
performance when performing regular lattice decoding in the
receiver.

Consider a lattice $\Lambda$ with density
$\gamma_{\mathrm{tr}}=\rho^{rT}$. In this case for each lattice
point the Voronoi region volume equals
\begin{equation*}
|V \lf(x\ri)|=|V|=\gamma_{\mathrm{tr}}^{-1}=\rho^{-rT}\quad \forall
x\in\Lambda.
\end{equation*}
In \cite{LoeligerAveragingBounds} it has been shown that for any
Jordan measurable bounded set $S$  with volume $|V \lf(S\ri)|$ there
exists a translate $u$ such that
\begin{equation}\label{eq:SubsectionICMGFCMG}
|\lf(\Lambda+u\ri)\cap S|\ge \frac{|V \lf(S\ri)|}{|V|}
\end{equation}
where $\Lambda+u$ is the translate of each lattice point by the
constant $u$, and $|\lf(\Lambda+u\ri)\cap S|$ is the number of words
of the translated lattice within the region $S$. Hence, for each
lattice in a sequence with multiplexing gain $r$, there exists a
translate such that the number of codewords within a sphere with
volume 1 is larger or equal to $\rho^{rT}$, i.e. the rate is $r\log
\lf(\rho\ri)$ where in this setting $\rho$ takes the role of $\SNR$.
Hence, it is possible to carve from the translated lattices sequence
a finite constellations sequence with multiplexing gain $r$
according to the definitions of finite constellations. When
performing regular lattice decoding the translate does not effect
the performance. Hence, the results we presented in this work also
apply when carving finite constellations with the corresponding
multiplexing gain from the lattices sequence, and performing regular
lattice decoding in the receiver.

\section{Summary}
This work investigates the DMT of IC's. A new tradeoff between the
IC average number of dimensions per channel use and the best DMT it
may attain is presented. Based on this tradeoff a transmission
scheme that enables to attain the optimal DMT of finite
constellations, by lattices with regular lattice decoding, is
presented.

\begin{appendices}
\section{Proof of Theorem \ref{Th:LowerBoundChannelReal}}\label{append:ProofFirstTheorem}
We prove the result for any IC with density $\gamma_{rc}$. The proof
outline is as follows. We prove the theorem by contradiction. First,
for a given IC with receiver density $\gamma_{rc}$, we assume an
average decoding error probability that equals to the lower bound we
wish to prove. Then, we derive a ``regular'' IC from the given IC
with the same density $\gamma_{rc}$ and the same average decoding
error probability. Regularizing the IC allows us to find a lower
bound on the IC maximal error probability that depends on its
density. We expurgate half of the codewords with the largest error
probability and get another regular IC with density
$\frac{\gamma_{rc}}{2}$. Based on the average decoding error
probability, we upper bound the expurgated IC maximal error
probability, and based on its density we lower bound the same
maximal error probability, and get a contradiction.

Let us consider a $KT$-complex dimensional IC in the receiver,
$S^{'}_{KT}(\rho)$, with receiver density $\gamma_{rc}$ and average
decoding error probability
\begin{equation}\label{eq:LowerBoundErrorProbAppendix}
\overline{P_{e}}(H,\rho)=
(1-\epsilon^{\ast})\frac{\ol{C}(KT)}{4}e^{-\mu_{\mathrm{rc}}\cdot
\ol{A}(KT)+(KT-1)\ln (\mu_{\mathrm{rc}})}
\end{equation}
where
$\ol{A}(KT)=(\frac{1}{(1-\epsilon_{1})(1-\epsilon_{2})})^{\frac{1}{KT}}e\cdot\Gamma(KT+1)^{\frac{1}{KT}}$,
$\ol{C}(KT)=(\frac{1}{(1-\epsilon_{1})(1-\epsilon_{2})})^{\frac{KT-1}{KT}}\frac{e^{KT-\frac{3}{2}}\Gamma(KT+1)^{\frac{KT-1}{KT}}}{2\cdot\Gamma(KT)}$
and $0<\epsilon_{1},\epsilon_{2}<1$.

Next we construct a regularized IC, $S^{''}_{KT}(\rho)$, from
$S^{'}_{KT}(\rho)$, whose Voronoi regions are bounded and have
finite volumes , i.e. there exists a finite radius $r$ such that
$V(x)\subset Ball(x,r)$, $\forall x\in S^{''}_{KT}(\rho)$, where
$Ball(x,r)$ is a $KT$-complex dimensional ball centered around $x$.
We construct $S^{''}_{KT}(\rho)$ in the following manner. Let us
define $C_{0}(\rho,H)=\{S^{'}_{KT}(\rho)\bigcap (H_{ex}\cdot
cube_{KT}(b))\}$, i.e. a finite constellation derived from
$S^{'}_{KT}(\rho)$. We turn this finite constellation into an IC by
tiling $C_{0}(\rho,H)$ in the following manner
\begin{equation}\label{eq:TilingS1IntoS2}
S^{''}_{KT}(\rho)=C_{0}(\rho,H)+(b+b^{'})\tilde{H}_{ex}\mathbb{Z}^{2KT}
\end{equation}
where for simplicity we assumed that
$cube_{KT}(b)\subset\mathbb{C}^{KT}$, i.e. contained within the
first $KT$ complex dimensions. Correspondingly, under this
assumption, $\tilde{H}_{ex}$ equals the first $KT$ complex columns
of $H_{ex}$. In this case, the tiling of $C_{0}(\rho,H)$ is done
according to the complex integer combinations of $\tilde{H}_{ex}$
columns. In general, $cube_{KT}(b)$ may be a rotated cube within
$\mathbb{C}^{MT}$. In this case the tiling is done according to some
$KT$ complex linearly independent vectors, consisting of linear
combinations of $H_{ex}$ columns. An alternative way to construct
$S^{''}_{KT}(\rho)$ is by considering the transmitter IC
$S_{KT}(\rho)$. In this case we can construct another IC in the
transmitter
\begin{equation}\label{eq:TilingInTransmitter}
\ol{S}_{KT}(\rho)=\{S_{KT}(\rho)\bigcap
cube_{KT}(b)\}+(b+b^{'})\mathbb{Z}^{2KT}
\end{equation}
where without loss of generality we assumed again that
$cube_{KT}(b)\in\mathbb{C}^{KT}$. In this case
$S^{''}_{KT}(\rho)=\{H_{ex}\cdot\ol{S}_{KT}(\rho) \}$.

Next we would like to set $b$ and $b^{'}$ to be large enough such
that $S^{''}_{KT}(\rho)$ has average decoding error probability
smaller or equal to $\frac{\ol{C}(KT)}{2}e^{-\mu_{\mathrm{rc}}\cdot
\ol{A}(KT)+(KT-1)\ln (\mu_{\mathrm{rc}})}$ and density larger or
equal to $\gamma_{rc}$. Due to the symmetry that results from the
tiling \eqref{eq:TilingS1IntoS2}, it is sufficient to upper bound
the average decoding error probability of the points $x\in
C_{0}(\rho,H)\subset S^{''}_{KT}(\rho)$ denoted by
$P_{e}^{S^{''}_{KT}}(C_{0})$ in order to upper bound the average
decoding error probability of the entire IC $S^{''}_{KT}(\rho)$ .
Hence $P_{e}^{S^{''}_{KT}}(C_{0})$ is also the average decoding
error probability for the IC $S^{''}_{KT}(\rho)$. We can upper bound
the error probability in the following manner
\begin{equation}\label{eq:UpperBoundOnS2TagErrorProbability}
P_{e}^{S^{''}_{KT}}(C_{0})\le
P_{e}(C_{0})+P_{e}(S^{''}_{KT}\setminus C_{0})
\end{equation}
where $P_{e}(C_{0})$ is the average decoding error probability of
the finite constellation $C_{0}(\rho,H)$ and
$P_{e}(S^{''}_{KT}\setminus C_{0})$ is the average decoding error
probability to points in the set $\{S^{''}_{KT}\setminus
C_{0}(\rho,h)\}$, i.e. the error probability inflicted by the
replicated codewords outside the set $C_{0}(\rho,H)$.


We begin by upper bounding $P_{e}(S^{''}_{KT}\setminus C_{0})$ by
choosing $b^{'}$ to be large enough. By the tiling at the
transmitter \eqref{eq:TilingInTransmitter} and the fact that we have
finite complex dimension $KT$, for a certain channel realization
$H_{ex}$ we get that there exists $\delta(H_{ex})$ such that any
pair of points $x_{1}\in C_{0}(\rho,H)$,
$x_{2}\in\{S^{''}_{KT}\setminus C_{0}(\rho,h)\}$ fulfils $\lVert
\underline{x}_{1}-\underline{x}_{2}\rVert\ge
2b^{'}\cdot\delta(H_{ex})$. The term $\delta(H_{ex})$ is a factor
that defines the minimal distance between these 2 sets for a given
channel realization. Note that also for the case $M>N$, there must
exist such $\delta(H_{ex})$, as we assumed that $S^{''}_{KT}(\rho)$
is $KT$-complex dimensional IC, i.e. the projected IC
$S^{''}_{KT}(\rho)=H_{ex}\ol{S}_{KT}(\rho)$ is also $KT$-complex
dimensional. Hence, we get that
$$P_{e}(S^{''}_{KT}\setminus C_{0})\le Pr(\lVert \underline{\tilde{n}}_{\mathrm{ex}}\rVert\ge b^{'}\delta(H_{ex}))$$
where $\underline{\tilde{n}}_{\mathrm{ex}}$ is the effective noise
in the $KT$-complex dimensional hyperplane where $S^{''}_{KT}(\rho)$
resides. By using the upper bounds from \cite{PoltirevJournal}, we
get that for $\frac{(b^{'}\delta(H_{ex}))^{2}}{2KT}>\sigma^{2}$
$$Pr(\lVert \underline{\tilde{n}}_{\mathrm{ex}}\rVert\ge b^{'}\delta(H_{ex}))\le
e^{-\frac{(b^{'}\delta(H_{ex}))^{2}}{2\sigma^{2}}}(\frac{(b^{'}\delta(H_{ex}))^{2}e}{2KT\sigma^{2}})^{KT}.$$
Hence, for $b^{'}$ large enough we get that
$$P_{e}(S^{''}_{KT}\setminus C_{0})\le (1-\epsilon^{\ast})\frac{\ol{C}(KT)}{4}e^{-\mu_{\mathrm{rc}}\cdot
\ol{A}(KT)+(KT-1)\ln (\mu_{\mathrm{rc}})}.$$

Now we would like to upper bound the error probability,
$P_{e}(C_{0})$, of the finite constellation $C_{0}(\rho,H)$.
According to the definition of the average decoding error
probability in \eqref{eq:AverageDecodingErrorProbability}, the
definition of $C_{0}(\rho,H)$ and the assumption in
\eqref{eq:LowerBoundErrorProbAppendix}, we get that
$$P_{e}(C_{0})\le \frac{(1-\epsilon^{\ast})(1+\epsilon(b))}{4}\ol{C}(KT)e^{-\mu_{\mathrm{rc}}\cdot
\ol{A}(KT)}\cdot\mu_{\mathrm{rc}}^{(KT-1)}$$ where
$lim_{b\to\infty}\epsilon(b)=0$. It results from the fact that in
\eqref{eq:AverageDecodingErrorProbability} we take the limit
supremum, and so for $b$ large enough the average decoding error
probability of the IC must be upper bounded by the aforementioned
term. Also, for any $b$ the average decoding error probability of
the finite constellation $C_{0}(\rho,H)$ is smaller or equal to the
error probability, defined in
\eqref{eq:AverageDecodingErrorProbability}, of decoding over the
entire IC. Based on the upper bound from
\eqref{eq:UpperBoundOnS2TagErrorProbability} we get the following
upper bound on the error probability of $S^{''}_{KT}(\rho)$
\begin{equation}\label{eq:UpperBoundOnS2TagErrorProbabilityExpression}
P_{e}^{S^{''}_{KT}}(C_{0})\le
\tfrac{(1-\epsilon^{\ast})(1+\epsilon(b))}{2}\ol{C}(KT)e^{-\mu_{\mathrm{rc}}\cdot
\ol{A}(KT)}\cdot\mu_{\mathrm{rc}}^{(KT-1)}.
\end{equation}

According to the definition of $\gamma_{rc}$ and due to the fact
that we are taking limit supremum: for any $0<\epsilon_{1}<1$ there
exists $b$ large enough such that
\begin{equation}\label{eq:C0Density}
\frac{|C_{0}(\rho,H)|}{vol\big(H_{ex}\cdot cube_{KT}(b)\big)}\ge
(1-\epsilon_{1})\gamma_{rc}.
\end{equation}
where $|C_{0}(\rho,H)|$ is the number of points in $C_{0}(\rho,H)$.
In fact there exists large enough $b$ that fulfils both
\eqref{eq:UpperBoundOnS2TagErrorProbabilityExpression} and
\eqref{eq:C0Density}.

In \eqref{eq:TilingS1IntoS2} we tiled by $b+b^{'}$. If we had tiled
$C_{0}(\rho,H)$ only by $b$, then for large enough $b$ we would
have got IC with density larger or equal to
$(1-\epsilon_{1})\gamma_{rc}$. However , as we tile by $b+b^{'}$,
we get for $b$ large enough that $S^{''}_{KT}(\rho)$ has
density greater or equal to
$\frac{1-\epsilon_{1}}{1+\frac{b^{'}}{b}}\gamma_{rc}$. Hence, for any
$0<\epsilon_{2}<1$ there exists $b$ large enough such that
\begin{equation}\label{eq:LowerBoundDensityS2Tag}
\gamma^{''}_{rc}\ge (1-\epsilon_{1})(1-\epsilon_{2})\gamma_{rc}.
\end{equation}
where $\gamma^{''}_{rc}$ is the density of $S^{''}_{KT}(\rho)$.
Again, there also must exist large enough $b$ that fulfils
\eqref{eq:UpperBoundOnS2TagErrorProbabilityExpression} and
\eqref{eq:LowerBoundDensityS2Tag} simultaneously. Hence, for large
enough $b$ we can derive from $S^{'}_{KT}(\rho)$ an IC
$S^{''}_{KT}(\rho)$ with density
$\gamma^{''}_{rc}\ge(1-\epsilon_{1})(1-\epsilon_{2})\gamma_{rc}$ and
average decoding error probability smaller or equal to
$\frac{(1-\epsilon^{\ast})(1+\epsilon(b))}{2}\ol{C}(KT)e^{-\mu_{\mathrm{rc}}\cdot
\ol{A}(KT)+(KT-1)\ln (\mu_{\mathrm{rc}})}$.

By averaging arguments we know that expurgating the worst half of
the codewords in $S^{''}_{KT}(\rho)$, yields an IC
$S^{'''}_{KT}(\rho)$ with density
\begin{equation}\label{eq:LowerBoundDensityS3Tag}
\gamma^{'''}_{rc}\ge(1-\epsilon_{1})(1-\epsilon_{2})\frac{\gamma_{rc}}{2}=\ol{\gamma_{rc}}
\end{equation}
and maximal decoding error probability
\ifthenelse{\equal{\singlecolumntype}{1}}
{\begin{equation}\label{eq:UpperBoundErrorProbS3Tag} sup_{x\in
S^{'''}_{KT}} P_{e}^{S^{'''}_{KT}}(x)\le
(1-\epsilon^{\ast})(1+\epsilon(b))\ol{C}(KT)e^{-\mu_{\mathrm{rc}}\cdot
\ol{A}(KT)}\mu_{\mathrm{rc}}^{KT-1}
\end{equation}}
{\begin{equation}\label{eq:UpperBoundErrorProbS3Tag}
\small{sup_{x\in S^{'''}_{KT}} P_{e}^{S^{'''}_{KT}}(x)}\le
\footnotesize{(1-\epsilon^{\ast})(1+\epsilon(b))\ol{C}(KT)}e^{-\mu_{\mathrm{rc}}\cdot
\ol{A}(KT)}\mu_{\mathrm{rc}}^{KT-1}
\end{equation}}
where $P_{e}^{S^{'''}_{KT}}(x)$ is the error probability of $x\in
S^{'''}_{KT}(\rho)$.

From the construction method of $S^{''}_{KT}(\rho)$, defined in
\eqref{eq:TilingS1IntoS2}, it can be easily shown that tiling
$C_{0}(\rho,H)$ yields bounded and finite volume Voronoi regions,
i.e. there exists a finite radius $r$ such that $V(x)\subset
Ball(x,r)$, $\forall x\in S^{''}_{KT}(\rho)$. Due to the symmetry
that results from $S^{''}_{KT}(\rho)$ construction
\eqref{eq:TilingS1IntoS2}, it also applies for $S^{'''}_{KT}(\rho)$.
Hence, there must exist a point $x_{0}\in S^{'''}_{KT}(\rho)$ that
satisfies $|V(x_{0})|\le
\frac{1}{\gamma^{'''}_{rc}}\le\frac{1}{\ol{\gamma_{rc}}}$. According
to the definition of the effective radius in
\eqref{eq:EffectiveRadius}, we get that $r_{\eff}(x_{0})\le
r_{\eff}(\ol{\gamma_{rc}})$. Hence, we get
\ifthenelse{\equal{\singlecolumntype}{1}} {\begin{equation}
sup_{x\in S^{'''}_{KT}} P_{e}^{S^{'''}_{KT}}(x)\ge
P_{e}^{S^{'''}_{KT}}(x_{0})> Pr\big(\lVert
\underline{\tilde{n}}_{\mathrm{ex}}\rVert \ge
r_{\mathrm{eff}}(x_{0})\big)\ge Pr\big(\lVert
\underline{\tilde{n}}_{\mathrm{ex}}\rVert\ge
r_{\mathrm{eff}}(\ol{\gamma_{rc}})\big)
\end{equation}}
{\begin{align} sup_{x\in S^{'''}_{KT}} P_{e}^{S^{'''}_{KT}}(x)&\ge
P_{e}^{S^{'''}_{KT}}(x_{0})>
\nonumber\\
Pr\big(\lVert \underline{\tilde{n}}_{\mathrm{ex}}\rVert &\ge
r_{\mathrm{eff}}(x_{0})\big)\ge Pr\big(\lVert \underline{\tilde{n}}_{\mathrm{ex}}\rVert\ge
r_{\mathrm{eff}}(\ol{\gamma_{rc}})\big)
\end{align}}
where the lower bound $P_{e}^{S^{'''}_{KT}}(x_{0})>Pr(\lVert
\underline{\tilde{n}}_{\mathrm{ex}}\rVert \ge
r_{\mathrm{eff}}(x_{0}))$ was proven in \cite{PoltirevJournal}. We
calculate the following lower bound
\ifthenelse{\equal{\singlecolumntype}{1}}
{\begin{equation}\label{eq:LoweerBoundForSigmaLess1}
\Pr\big(\lVert\underline{\tilde{n}}_{\mathrm{ex}}\rVert\ge
r_{\mathrm{eff}}(\ol{\gamma_{\mathrm{rc}}})\big)>
\int_{r_{\mathrm{eff}}^{2}}^{r_{\mathrm{eff}}^{2}+\sigma^{2}}\frac{r^{KT-1}e^{-\frac{r}{2\sigma^{2}}}}{\sigma^{2KT}2^{KT}
\Gamma(KT)}dr
\ge\frac{r_{\mathrm{eff}}^{2KT-2}e^{-\frac{r_{\mathrm{eff}}^{2}}{2\sigma^{2}}}}{\sigma^{2KT-2}2^{KT}
\Gamma(KT)\sqrt{e}}
\end{equation}}
{\begin{align}\label{eq:LoweerBoundForSigmaLess1}
\Pr\big(\lVert\underline{\tilde{n}}_{\mathrm{ex}}\rVert\ge
r_{\mathrm{eff}}(\ol{\gamma_{\mathrm{rc}}})\big)&>
\nonumber\\
\int_{r_{\mathrm{eff}}^{2}}^{r_{\mathrm{eff}}^{2}+\sigma^{2}}\frac{r^{KT-1}e^{-\frac{r}{2\sigma^{2}}}}{\sigma^{2KT}2^{KT}
\Gamma(KT)}&dr
\ge\frac{r_{\mathrm{eff}}^{2KT-2}e^{-\frac{r_{\mathrm{eff}}^{2}}{2\sigma^{2}}}}{\sigma^{2KT-2}2^{KT}
\Gamma(KT)\sqrt{e}}
\end{align}}
By assigning
$r_{\eff}^{2}=(\frac{\Gamma(KT+1)}{\ol{\gamma_{rc}}\pi^{KT}})^\frac{1}{KT}$
we get \ifthenelse{\equal{\singlecolumntype}{1}} {\begin{equation}
sup_{x\in S^{'''}_{KT}} P_{e}^{S^{'''}_{KT}}(x)> \ol{C}(KT)\cdot
e^{-\frac{\gamma_{\mathrm{rc}}^{-\frac{1}{KT}}}{2\pi
e\sigma^{2}}\ol{A}(KT)+(KT-1)\ln
(\frac{\gamma_{\mathrm{rc}}^{-\frac{1}{KT}}}{2\pi e\sigma^{2}})}.
\end{equation}}
{\begin{align} sup_{x\in S^{'''}_{KT}} P_{e}^{S^{'''}_{KT}}(x)&>
\nonumber\\
\ol{C}(KT)&\cdot
e^{-\frac{\gamma_{\mathrm{rc}}^{-\frac{1}{KT}}}{2\pi
e\sigma^{2}}\ol{A}(KT)+(KT-1)\ln
(\frac{\gamma_{\mathrm{rc}}^{-\frac{1}{KT}}}{2\pi e\sigma^{2}})}.
\end{align}}
Hence, for certain $\epsilon_{1}$ and $\epsilon_{2}$ we get
\ifthenelse{\equal{\singlecolumntype}{1}}
{\begin{equation}\label{eq:LoweBoundWithEpsEst} sup_{x\in
S^{'''}_{KT}} P_{e}^{S^{'''}_{KT}}(x)> \ol{C}(KT)\cdot
e^{-\mu_{rc}\ol{A}(KT)+(KT-1)\ln (\mu_{rc})}
\end{equation}}
{\begin{align}\label{eq:LoweBoundWithEpsEst} sup_{x\in S^{'''}_{KT}}
P_{e}^{S^{'''}_{KT}}(x)&>
\nonumber\\
\ol{C}(KT)&\cdot e^{-\mu_{rc}\ol{A}(KT)+(KT-1)\ln (\mu_{rc})}
\end{align}}
where $\mu_{rc}=\frac{\gamma_{\mathrm{rc}}^{-\frac{1}{KT}}}{2\pi
e\sigma^{2}}$. For $b$ large enough we get
$(1-\epsilon^{\ast})(1+\epsilon(b))<1$, and so
\eqref{eq:LoweBoundWithEpsEst} contradicts
\eqref{eq:UpperBoundErrorProbS3Tag}. As a result we get
contradiction of the initial assumption in
\eqref{eq:LowerBoundErrorProbAppendix}. This contradiction also
holds for any $\overline{P_{e}}(H,\rho)<
\frac{(1-\epsilon^{\ast})\ol{C}(KT)}{4}e^{-\mu_{\mathrm{rc}}\cdot
\ol{A}(KT)+(KT-1)\ln (\mu_{\mathrm{rc}})}$. Hence, we get that
\begin{equation}\label{eq:FiniteLowerBoundErrorProb}
\overline{P_{e}}(H,\rho)>
\frac{\ol{C}(KT)}{4}e^{-\mu_{\mathrm{rc}}\cdot \ol{A}(KT)+(KT-1)\ln
(\mu_{\mathrm{rc}})}.
\end{equation}
Note that the lower bound holds for any
$0<\epsilon_{1},\epsilon_{2},\epsilon^{\ast}<1$ and also that the
expressions in \eqref{eq:LowerBoundErrorProbAppendix},
\eqref{eq:FiniteLowerBoundErrorProb} are continuous. As a result we
can also set $\epsilon_{1}=\epsilon_{2}=\epsilon^{\ast}=0$ and get
the desired lower bound. Finally, note that we are interested in a
lower bound on the error probability of any IC for a given channel
realization. Hence, we are free to choose different values for $b$
and $b^{'}$ for each channel realization.
and $b^{'}$.

\section{Proof of the optimization problem in Theorem \ref{Th:UpperBoundDiversityOrder}}\label{Append:OptimizationProblemSolutionOfUpperBoundDivOrder}
We would like to solve the optimization problem in
\eqref{eq:DiversityOrderUpperBoundOptimizationProblem} for any value
of $K=B+\beta\le L$, where $B\in\mathbb{N}$ and $0< \beta \le 1$.
First we consider the case of $0<K\le 1$, i.e. the case where $B=0$.
In this case the constraint boils down to
$\alpha_{L}=1-\frac{r}{K}$. By assigning
$\alpha_{1}=\dots=\alpha_{L}=1-\frac{r}{K}$ we get that
$d_{KT}(r)\le MN(1-\frac{r}{K})$. Next we analyze the case where
$K> 1$. Due to the constraint, the minimal value must satisfy
$\alpha_{1}=\dots=\alpha_{L-B}$. From the constraint we also know
that
$\alpha_{L}=K-r-\sum_{i=1}^{B-1}\alpha_{L-i}-\beta\alpha_{L-B}$. By
assigning in \eqref{eq:DiversityOrderUpperBoundOptimizationProblem}
we get
\ifthenelse{\equal{\singlecolumntype}{1}}
{\begin{equation}\label{Eq:EquivalenOptimizationProblem}
\min_{\udl{\alpha}> 0}(K-r)(N+M-1)+\big((M-B)(N-B)
-\beta(N+M-1)\big)\alpha_{L-B}-\sum_{i=1}^{B-1}2i\cdot\alpha_{L-i}
\end{equation}}
{\begin{align}\label{Eq:EquivalenOptimizationProblem}
\min_{\udl{\alpha}> 0}(K-r)&(N+M-1)+\big((M-B)(N-B)
\nonumber\\
&-\beta(N+M-1)\big)\alpha_{L-B}-\sum_{i=1}^{B-1}2i\cdot\alpha_{L-i}
\end{align}}
where $\alpha>0$ signifies $\alpha_{1}\ge\dots\ge\alpha_{L}\ge 0$.
We would like to consider two cases. The case where
$\big((M-B)(N-B)-\beta(N+M-1)\big)>\sum_{i=1}^{B-1}2i$ and the case
where $\big((M-B)(N-B)-\beta(N+M-1)\big)\le\sum_{i=1}^{B-1}2i$. The
first case, where $\big((M-B)(N-B)-\beta(N+M-1)\big)>B(B-1)$, is
achieved for $K<\frac{MN}{N+M-1}$. In this case we use the following
Lemma in order to find the optimal solution
\begin{lem}\label{Lem:LemmaOfTheOptimizationProblem}
Consider the optimization problem
$$\min_{\udl{c}}B_{1}c_{1}-\sum_{i=2}^{D}B_{i}c_{i}$$
where: $(1).$ $c_{1}\ge\dots\ge c_{D}\ge 0$; $(2)$.
$B_{1}>\sum_{i=2}^{D}B_{i}$ and $B_{2}>\dots >B_{D}>0$; $(3)$.
$\beta c_{1}+\sum_{i=2}^{D}c_{i}=\delta>0$, where $0<\beta\le 1$. The
minimal value is achieved for $c_{1}=\dots
=c_{D}=\frac{\delta}{D-1+\beta}$.
\end{lem}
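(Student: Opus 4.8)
The plan is to treat this as a linear program and establish global optimality of the proposed point by an elementary summation-by-parts estimate rather than by invoking LP duality directly. First I would check feasibility of the candidate $c^{\ast}=(c,\dots,c)$ with $c=\frac{\delta}{D-1+\beta}$: it satisfies $c_{1}\ge\dots\ge c_{D}\ge 0$ trivially (all coordinates are equal and nonnegative, since $\delta>0$), and $\beta c+(D-1)c=\delta$, so it lies on the constraint hyperplane $(3)$. Its objective value is $c\bigl(B_{1}-\sum_{i=2}^{D}B_{i}\bigr)$. It then suffices to show $f(\udl{c})\ge f(c^{\ast})$ for every feasible $\udl{c}$, where $f(\udl{c})=B_{1}c_{1}-\sum_{i=2}^{D}B_{i}c_{i}$.

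Next I would pass to deviations $d_{i}=c_{i}-c$. Since $f$ is affine along this shift, constraint $(3)$ becomes $\beta d_{1}+\sum_{i=2}^{D}d_{i}=0$ and the ordering becomes $d_{1}\ge d_{2}\ge\dots\ge d_{D}$, while $f(\udl{c})-f(c^{\ast})=B_{1}d_{1}-\sum_{i=2}^{D}B_{i}d_{i}$ (the lower bound $c_{D}\ge 0$, i.e. $d_{i}\ge -c$, will not be needed). A short but crucial observation is that $d_{1}\ge 0$: the number $0$ is a weighted average of $d_{1},\dots,d_{D}$ with the positive weights $\beta,1,\dots,1$, and $d_{1}$ is the largest of the $d_{i}$, so $d_{1}$ cannot lie below the mean.

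The core of the argument is then to show $B_{1}d_{1}-\sum_{i=2}^{D}B_{i}d_{i}\ge 0$, and this is where the hypothesis $B_{2}>\dots>B_{D}>0$ enters through summation by parts. Setting $T_{i}=\sum_{j=2}^{i}d_{j}$ and $B_{D+1}=0$, Abel's identity gives $\sum_{i=2}^{D}B_{i}d_{i}=\sum_{i=2}^{D}(B_{i}-B_{i+1})T_{i}$, where each weight $B_{i}-B_{i+1}$ is nonnegative. Because the $d_{j}$ are nonincreasing with maximum $d_{1}$, we have $T_{i}\le (i-1)d_{1}$; combined with the telescoping identity $\sum_{i=2}^{D}(B_{i}-B_{i+1})(i-1)=\sum_{i=2}^{D}B_{i}$, this yields $\sum_{i=2}^{D}B_{i}d_{i}\le d_{1}\sum_{i=2}^{D}B_{i}$. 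Hence $f(\udl{c})-f(c^{\ast})\ge d_{1}\bigl(B_{1}-\sum_{i=2}^{D}B_{i}\bigr)\ge 0$ by hypothesis $(2)$ and $d_{1}\ge 0$, proving optimality of $c^{\ast}$.

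The step I expect to be the main obstacle is controlling the partial sums $T_{i}$ in the correct direction: one must bound them from above (not below) in order to lower-bound $-\sum_{i}(B_{i}-B_{i+1})T_{i}$, and it is precisely the clean estimate $T_{i}\le (i-1)d_{1}$ together with the telescoping identity that makes the two hypotheses $(2)$ dovetail. As a sanity check, the same conclusion follows from the Karush--Kuhn--Tucker conditions for this linear program: the multiplier of the equality constraint comes out negative, $\nu=-\frac{B_{1}-\sum_{i\ge2}B_{i}}{D-1+\beta}$, which forces the multipliers $\mu_{k}=\sum_{j>k}B_{j}-\nu\,(D-k)$ of the active ordering constraints to be positive, certifying that the all-equal point is the global minimizer.
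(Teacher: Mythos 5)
Your proof is correct, and it takes a genuinely different route from the paper's. The paper argues by induction on $D$: the base case $D=2$ is handled by an exchange argument (increasing $c_{1}$ while decreasing $c_{2}$ along the constraint only increases the objective), and the inductive step fixes $c_{D+1}=\frac{\delta}{D+\beta}-\epsilon$, applies the induction hypothesis to the first $D$ coordinates with the reduced budget $\delta-c_{D+1}$, and checks that the resulting value exceeds the all-equal value by $\bigl(B_{1}-\sum_{i=2}^{D}B_{i}\bigr)\frac{\epsilon}{D-1+\beta}+B_{D+1}\epsilon\ge 0$. Note that the paper's step silently uses the fact that $\epsilon\ge 0$ for every feasible point (this holds because $c_{D+1}$ is the smallest coordinate, hence cannot exceed the weighted mean $\frac{\delta}{D+\beta}$ --- the same weighted-average observation you make explicitly to get $d_{1}\ge 0$), and it also relaxes the ordering constraint $c_{D}\ge c_{D+1}$ when invoking the induction hypothesis, which is legitimate since relaxation only lowers the minimum. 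Your one-shot argument buys several things: it is global and non-inductive; the Abel-summation step isolates exactly how each hypothesis enters (monotonicity of the $B_{i}$ only through nonnegativity of the increments $B_{i}-B_{i+1}$, the dominance $B_{1}>\sum_{i=2}^{D}B_{i}$ only in the final line, and the ordering only through $d_{1}=\max_{i}d_{i}$); it shows the constraint $c_{D}\ge 0$ is not needed for the conclusion; and it proves the statement under the weaker hypothesis $B_{2}\ge\dots\ge B_{D}\ge 0$ with $B_{D}>0$ not required. The paper's induction, in exchange, is shorter to state and mirrors the recursive way the lemma is actually applied in Appendix B (peeling off one singular value at a time). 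Your KKT certificate is a valid independent confirmation, since for a linear program the KKT conditions with nonnegative multipliers on the active inequality constraints are sufficient for global optimality.
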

\begin{proof}
We prove by induction. First let us consider the case where $D=2$.
In this case we would like to find
\begin{equation}\label{Lem:InductionFirstSum}
\min_{\udl{c}}B_{1}c_{1}-B_{2}c_{2}.
\end{equation}
where $c_{1}\ge c_{2}\ge 0$, $\beta c_{1}+c_{2}=\delta>0$,
$B_{1}>B_{2}>0$ and $0<\beta\le 1$. It is easy to see that for this
case the minimum is achieved for $c_{1}=c_{2}$, as increasing
$c_{1}$ while decreasing $c_{2}$ to satisfy $\beta
c_{1}+c_{2}=\delta$ will only increase
\eqref{Lem:InductionFirstSum}.

Now let assume that for $D$ elements, the minimum is achieved for
$c_{1}=\dots=c_{D}=\frac{\delta}{D-1+\beta}$. Let us consider $D+1$
elements with constraint $\beta c_{1}+\sum_{i=2}^{D+1}c_{i}=\delta$.
If we take $c_{1}=\dots=c_{D+1}=\frac{\delta}{D+\beta}$ we get
\begin{equation}\label{eq:LemmaInductionAssumption}
(B_{1}-\sum_{i=2}^{D+1}B_{i})\frac{\delta}{D+\beta}.
\end{equation}
We would like to show that this is the minimal possible value for
this problem. Take $c^{'}_{D+1}=\frac{\delta}{D+\beta}-\epsilon\ge
0$. In this case $\beta
c^{'}_{1}+\sum_{i=2}^{D}c_{i}^{'}=\frac{(D-1+\beta)\delta+(D+\beta)\epsilon}{D+\beta}$
in order to satisfy $\beta
c_{1}^{'}+\sum_{i=2}^{D+1}c_{i}^{'}=\delta$. According to our
assumption $B_{1}c_{1}^{'}-\sum_{i=2}^{D}B_{i}c_{i}^{'}$ is minimal
for
$c^{'}_{1}=\dots=c_{D}^{'}=\frac{\delta}{D+\beta}+\frac{\epsilon}{D-1+\beta}$.
By assigning these values we get
$$(B_{1}-\sum_{i=2}^{D+1}B_{i})\frac{\delta}{D+\beta}+(B_{1}-\sum_{i=2}^{D}B_{i})\frac{\epsilon}{D-1+\beta}+B_{D+1}\epsilon$$
which is greater than \eqref{eq:LemmaInductionAssumption}. This
concludes the proof.
\end{proof}
For the case $\big((M-B)(N-B)-\beta(N+M-1)\big)>B(B-1)$, the
optimization problem coincides with Lemma
\ref{Lem:LemmaOfTheOptimizationProblem} as it fulfils the condition $B_{1}>\sum_{i=2}^{D}B_{i}$ in the lemma. Hence, the optimization
problem solution for $K<\frac{MN}{N+M-1}$ is $\alpha_{1}=\dots
=\alpha_{L-1}=\frac{K-r-\alpha_{L}}{K-1}=\alpha$. The minimum is
achieved when $\alpha_{L}=\alpha$, i.e. the maximal value
$\alpha_{L}$ can receive under the constraint $\alpha_{1}\ge\dots\ge\alpha_{L}\ge 0$. We get that $\alpha=1-\frac{r}{K}$, and the
optimization problem solution of
\eqref{eq:DiversityOrderUpperBoundOptimizationProblem} for the case $K<\frac{MN}{M+N-1}$ is
$d_{KT}(r)\le MN(1-\frac{r}{K})$, .

For the case $\big((M-B)(N-B)-\beta(N+M-1)\big)\le B(B-1)$, or
equivalently $K\ge\frac{MN}{N+M-1}$, we would like to show that the
optimal solution must fulfil $\alpha_{L}=0$. It results from the
fact that for the optimal solution, the term
$\big((M-B)(N-B)-\beta(N+M-1)\big)\alpha_{L-B}-\sum_{i=1}^{B-1}2i\cdot\alpha_{L-i}$
in \eqref{Eq:EquivalenOptimizationProblem} must be negative. This is
due to the fact that taking $\alpha_{1}=\dots=\alpha_{L-1}$ gives
negative value. Hence, for the optimal solution we would like to
maximize
$\sum_{i=1}^{B-1}\alpha_{L-i}-\beta\alpha_{L-B}=K-r-\alpha_{L}$. By
taking $\alpha_{L}=0$ the sum is maximized. Hence, the optimal
solution for $K\ge\frac{MN}{M+N-1}$ must have $\alpha_{L}=0$.

Now consider the general case. Assume that for
$K\ge\frac{(M-l+1)(N-l+1)}{N+M-1-2(l-1)}+l-1$ the optimal solution
must have $\alpha_{L}=\dots=\alpha_{L-l+1}=0$. First consider the
case where $1\le l\le B-1$. For this case the constraint is
$\sum_{i=l}^{B-1}\alpha_{L-i}+\beta\alpha_{L-B}=K-r$, i.e. the
constraint contains at least two singular values. We can rewrite
\eqref{eq:DiversityOrderUpperBoundOptimizationProblem} as follows
\ifthenelse{\equal{\singlecolumntype}{1}}
{\begin{equation}\label{Eq:EquivalenOptimizationProblemGeneralCase}
\min_{\udl{\alpha}>0}(K-r)(N+M-1-2\cdot l)+\big((M-B)(N-B)
-\beta(N+M-1-2\cdot
l)\big)\alpha_{L-B}-\sum_{i=l+1}^{B-1}2(i-l)\cdot\alpha_{L-i}.
\end{equation}}
{\begin{align}\label{Eq:EquivalenOptimizationProblemGeneralCase}
\min_{\udl{\alpha}>0}&(K-r)(N+M-1-2\cdot l)+\big((M-B)(N-B)
\nonumber\\
&-\beta(N+M-1-2\cdot
l)\big)\alpha_{L-B}-\sum_{i=l+1}^{B-1}2(i-l)\cdot\alpha_{L-i}.
\end{align}}
For the case $\big((M-B)(N-B)-\beta(N+M-1-2\cdot
l)\big)>(B-1-l)(B-l)$ we get that $K<\frac{(M-l)(N-l)}{N+M-1-2\cdot
l}+l$ and we also assumed that
$K\ge\frac{(M-l+1)(N-l+1)}{N+M-1-2(l-1)}+l-1$. For this case we can
use Lemma \ref{Lem:LemmaOfTheOptimizationProblem} and get that the
optimization problem solution is
$\alpha_{L-l-1}=\dots=\alpha_{L-B}=\frac{K-r-\alpha_{L-l}}{K-l-1}=\alpha$.
The minimum is achieved for $\alpha_{L-l}=\alpha$. We get that
$\alpha_{L}=\dots=\alpha_{L-l+1}=0$ and
$\alpha_{1}=\dots=\alpha_{L-l}=\frac{K-r}{K-l}$. Hence, for the case
$\frac{(M-l+1)(N-l+1)}{N+M-1-2(l-1)}+l-1\le
K<\frac{(M-l)(N-l)}{N+M-1-2\cdot l}+l$ the solution is $d_{KT}(r)\le
(N-l)(M-l)\frac{K-r}{K-l}$.

For the case $\big((M-B)(N-B)-\beta(N+M-1-2\cdot l)\big)\le
(B-1-l)(B-l)$, or equivalently $K\ge\frac{(M-l)(N-l)}{N+M-1-2\cdot
l}+l$, the term $\big((M-B)(N-B)-\beta(N+M-1-2\cdot
l)\big)\alpha_{L-B}-\sum_{i=l+1}^{B-1}2(i-l)\cdot\alpha_{L-i}$ in
\eqref{Eq:EquivalenOptimizationProblemGeneralCase} must be negative
for the optimal solution. This is due to the fact that by taking
$\alpha_{1}=\dots=\alpha_{L-l-1}$ we get a negative value. Hence we
would like to maximize the sum
$\sum_{i=l+1}^{B-1}\alpha_{L-i}+\beta\alpha_{L-B}=K-r-\alpha_{L-l}$.
The sum is maximized by taking $\alpha_{L-l}=0$. Hence the optimal
solution for the case $K\ge\frac{(M-l)(N-l)}{N+M-1-2\cdot l}+l$ must
have $\alpha_{L-l}=\dots=\alpha_{L}=0$. Note that for the case
$l=B-1$ we have only two terms in the constraint
$\alpha_{L-B+1}+\beta\alpha_{L-B}=K-r$. However, the solution
remains the same.

For the case $K\ge\frac{(M-l+1)(N-l+1)}{N+M-1-2(l-1)}+l-1$ and $l=B$
the constraint is of the form $\alpha_{L-B}=\frac{K-r}{K-l}$. Again
we assume that $\alpha_{L-B+1}=\dots=\alpha_{L}=0$. In this case the
solution is $\alpha_{1}=\dots=\alpha_{L-l}=\frac{K-r}{K-l}$ and so
$d_{KT}(r)\le (M-l)(N-l)\frac{K-r}{K-l}$. This concludes the proof.

\section{Proof of Lemma \ref{Lem:LemmaOfTheOccurrencesOfEachColumn}}\label{Append:NumberEachColumnOccurrences}
We begin by proving the case $N\ge M$. From the construction of
$G_{l}$ it can be seen that a set of columns
$\{\udl{h}_{j},\dots,\udl{h}_{i}\}$ may occur in $N-i+j$ blocks at
most. It results from the fact that we can only subtract $M-i$
columns to the right of $\udl{h}_{i}$
\eqref{eq:HeffBlocksForTheCaseNgeM2}, and $j-1$ columns to the left
of $\udl{h}_{j}$ \eqref{eq:HeffBlocksForTheCaseNgeM3}, and still get
a block that contains $\{\udl{h}_{j},\dots,\udl{h}_{i}\}$ (or even
more specifically a block that contains
$\{\udl{h}_{j},\udl{h}_{i}\}$). In addition, columns
$\{\udl{h}_{j},\dots,\udl{h}_{i}\}$  must occur in the first $N-M+1$
blocks, as these blocks equal to $H$
\eqref{eq:HeffBlocksForTheCaseNgeM1}. Hence, we can upper bound the
number of occurrences by $N-M+1+j-1+M-i=N-i+j$.

Next we prove the case  $M>N$. When $0\le i-j<N$, the set of columns
$\{\udl{h}_{j},\dots,\udl{h}_{i}\}$ may occur in $N-i+j$ blocks at
most. We divide the proof into four cases.
\begin{enumerate}
\item $i\le N$ and $j\ge M-N+1$. In this case the set of columns $\{\udl{h}_{j},\dots,\udl{h}_{i}\}$ occurs in the first $M-N+1$ blocks \eqref{eq:HeffBlocksForTheCaseMgN1}. As for the additional $N-1-l$ pairs of columns, the set of columns belongs both to the set $\{\udl{h}_{1},\dots,\udl{h}_{N}\}$ and $\{\udl{h}_{M-N+1},\dots,\udl{h}_{M}\}$. Hence, in the additional column pairs we can subtract $N-i$ columns to the right of $\udl{h}_{i}$ \eqref{eq:HeffBlocksForTheCaseMgN2} and $j-M+N-1$ columns to the left of $\udl{h}_{j}$ \eqref{eq:HeffBlocksForTheCaseMgN3}. Added together we observe that the number of occurrences can not exceed $N-i+j$.
\item $i\le N$ and $j<M-N+1$. In this case the set of columns can have only $j$ occurrences in the first $M-N+1$ blocks. In this case the set $\{\udl{h}_{j},\dots,\udl{h}_{i}\}$ occurs within $\{\udl{h}_{1},\dots,\udl{h}_{N}\}$ but does not occur within $\{\udl{h}_{M-N+1},\dots,\udl{h}_{M}\}$. Hence, the transmission scheme only subtracts columns to the right of $\udl{h}_{i}$ \eqref{eq:HeffBlocksForTheCaseMgN2}. In this case we can have $N-i$ subtractions and together we get $N-i+j$ occurrences at most.
\item $i>N$ and $j\ge M-N+1$. We have here $M-i+1$ occurrences in the first $M-N+1$ blocks. In this case the set $\{\udl{h}_{j},\dots,\udl{h}_{i}\}$ occurs within $\{\udl{h}_{M-N+1},\dots,\udl{h}_{M}\}$ but does not occur within $\{\udl{h}_{1},\dots,\udl{h}_{N}\}$. Hence we can subtract up to $j-M+N-1$ columns to the left of $\udl{h}_{j}$ \eqref{eq:HeffBlocksForTheCaseMgN3}. Together there are $N-i+j$ occurrences at most.
\item Last case, $i>N$ and $j<M-N+1$. Here the set of columns can only occur in the first $M-N+1$ blocks. In this case there are exactly $N-i+j$ occurrences in the first $M-N+1$ blocks.
\end{enumerate}
In case $i-j\ge N$, the set of columns does not occur in any block
as each column of $G_{l}$ does not have more than $N$ non-zero
entries.

\section{Proof of Theorem \ref{Th:UpperBoundErrorProb}}\label{Append:UpperBoundErrorProb}
Based on \cite{PoltirevJournal} we have the following upper bound on
the maximum-likelihood (ML) decoding error probability of each
$K_{l}T_{l}$-complex dimensional IC point $\udl{x}^{'}\in
S_{K_{l}T_{l}}$ \ifthenelse{\equal{\singlecolumntype}{1}}
{\begin{equation}\label{eq:MLPointErrorProbUpperBound}
P_{e}(\underline{x}^{'})\le
Pr(\lv\underline{\tilde{n}}_{\mathrm{ex}}\rv\ge R)+
\sum_{\underline{l}\in Ball(\underline{x}^{'},2R)\bigcap
S_{K_{l}T_{l}}, \udl{l}\ne \udl{x}^{'}} Pr(\lv
\udl{l}-\udl{x}^{'}-\udl{\tilde{n}}_{ex}\rv <
\lv\udl{\tilde{n}}_{ex}\rv)
\end{equation}}
{\begin{align}\label{eq:MLPointErrorProbUpperBound}
&P_{e}(\underline{x}^{'})\le
Pr(\lv\underline{\tilde{n}}_{\mathrm{ex}}\rv\ge R)+
\nonumber\\
&\sum_{\underline{l}\in Ball(\underline{x}^{'},2R)\bigcap
S_{K_{l}T_{l}}, \udl{l}\ne \udl{x}^{'}} Pr(\lv
\udl{l}-\udl{x}^{'}-\udl{\tilde{n}}_{ex}\rv <
\lv\udl{\tilde{n}}_{ex}\rv)
\end{align}}
where $Ball(\udl{x}^{'},2R)$ is a $K_{l}T_{l}$-complex dimensional
ball of radius $2R$ centered around $\udl{x}^{'}$, and
$\udl{\tilde{n}}_{\ex}$ is the effective noise in the
$K_{l}T_{l}$-complex dimensional hyperplane where the IC's resides.
Note that the second term in \eqref{eq:MLPointErrorProbUpperBound}
represents the pairwise error probability to points within
$Ball(\udl{x}^{'},2R)$, i.e. the decision region is at distance $R$
at most.

Next we upper bound the average decoding error probability of an
ensemble of constellations drawn uniformly within
$cube_{K_{l}T_{l}}(b)$. Each code-book contains $\lfloor
\gamma_{\mathrm{tr}}b^{2K_{l}T_{l}}\rfloor$ points, where each point
is drawn uniformly within $cube_{K_{l}T_{l}}(b)$. In the receiver,
the random ensemble is uniformly distributed within
$\{H_{\eff}^{(l)}\cdot\cube_{K_{l}T_{l}}(b)\}$. Let us consider a
certain point,
$\udl{x}^{'}\in\{H_{\eff}^{(l)}\cdot\cube_{K_{l}T_{l}}(b)\}$, from
the random ensemble in the receiver. We denote the ring around
$\udl{x}^{'}$ by
$Ring(\underline{x}^{'},i\Delta)=Ball(\underline{x}^{'},i\Delta)\setminus
Ball(\underline{x}^{'},(i-1)\Delta)$. The average number of points
within $Ring(\underline{x}^{'},i\Delta)$ of the random ensemble is
\ifthenelse{\equal{\singlecolumntype}{1}}
{\begin{equation}\label{eq:AverageNumberPoints}
Av(\underline{x}^{'},i\Delta) =
\gamma_{\mathrm{rc}}|H_{\eff}^{(l)}\cdot\cube_{K_{l}T_{l}}(b)\bigcap
Ring(\underline{x}^{'},i\Delta)| \le
\gamma_{\mathrm{rc}}|Ring(\underline{x}^{'},i\Delta)|\le
\frac{\gamma_{\mathrm{rc}}\pi^{K_{l}T_{l}}2K_{l}T_{l}}{\Gamma(K_{l}T_{l}+1)}(i\Delta)^{2K_{l}T_{l}-1}\Delta
\end{equation}}
{\begin{align}\label{eq:AverageNumberPoints}
Av(\underline{x}^{'},i\Delta) =
\gamma_{\mathrm{rc}}|H_{\eff}^{(l)}\cdot\cube_{K_{l}T_{l}}(b)\bigcap
Ring(\underline{x}^{'},i\Delta)|\nonumber \\
\le \gamma_{\mathrm{rc}}|Ring(\underline{x}^{'},i\Delta)|\le
\frac{\gamma_{\mathrm{rc}}\pi^{K_{l}T_{l}}2K_{l}T_{l}}{\Gamma(K_{l}T_{l}+1)}(i\Delta)^{2K_{l}T_{l}-1}\Delta
\end{align}}
where
$\gamma_{\mathrm{rc}}=\rho^{rT_{l}+\sum_{i=1}^{K_{l}T_{l}}\eta_{i}}$.
By using the upper bounds on the error probability
\eqref{eq:MLPointErrorProbUpperBound}, and the average number of
points within the rings \eqref{eq:AverageNumberPoints}, we get for a
certain channel realization the following upper bound on the average
decoding error probability of the finite constellations ensemble, at
point $\udl{x}^{'}$ \ifthenelse{\equal{\singlecolumntype}{1}}
{\begin{equation} \ol{P_{e}^{FC}}(\udl{x}^{'},\rho,\udl{\eta})\le
Pr(\lv\underline{\tilde{n}}_{\mathrm{ex}}\rv\ge R)+
\gamma_{\mathrm{rc}}Q(K_{l}T_{l})\sum_{i=1}^{\lceil
\frac{2R}{\Delta}\rceil}Pr(\tilde{n}_{ex,1}>\frac{(i-1)\Delta}{2})
\cdot(i\Delta)^{2K_{l}T_{l}-1}\Delta
\end{equation}}
{\begin{align} \ol{P_{e}^{FC}}(\udl{x}^{'},\rho,\udl{\eta})&\le
Pr(\lv\underline{\tilde{n}}_{\mathrm{ex}}\rv\ge R)+
\nonumber\\
\gamma_{\mathrm{rc}}Q(K_{l}T_{l})&\sum_{i=1}^{\lceil
\frac{2R}{\Delta}\rceil}Pr(\tilde{n}_{ex,1}>\frac{(i-1)\Delta}{2})
\cdot(i\Delta)^{2K_{l}T_{l}-1}\Delta
\end{align}}
where
$Q(K_{l}T_{l})=\frac{\pi^{K_{l}T_{l}}2K_{l}T_{l}}{\Gamma(K_{l}T_{l}+1)}$,
and $\tilde{n}_{ex,1}$ is the first component of
$\udl{\tilde{n}}_{ex}$ (the pairwise error probability has scalar
decision region). By taking $\Delta\to 0$ we get
\ifthenelse{\equal{\singlecolumntype}{1}}
{\begin{equation}\label{eq:IntErrorUpperBound}
\ol{P_{e}^{FC}}(\udl{x}^{'},\rho,\udl{\eta})\le
Pr(\lv\underline{\tilde{n}}_{\mathrm{ex}}\rv\ge R)+
\gamma_{\mathrm{rc}}Q(K_{l}T_{l})\int_{0}^{2R}
Pr(\tilde{n}_{ex,1}>\frac{x}{2})x^{2K_{l}T_{l}-1}dx.
\end{equation}}
{\begin{align}\label{eq:IntErrorUpperBound}
\ol{P_{e}^{FC}}(\udl{x}^{'},\rho,\udl{\eta})&\le
Pr(\lv\underline{\tilde{n}}_{\mathrm{ex}}\rv\ge R)+
\nonumber\\
\gamma_{\mathrm{rc}}Q(K_{l}T_{l})&\int_{0}^{2R}
Pr(\tilde{n}_{ex,1}>\frac{x}{2})x^{2K_{l}T_{l}-1}dx.
\end{align}}
Note that this upper bound applies for any value of $R\ge 0$ and
$b$, and does not depend on  $\udl{x}^{'}$, i.e.
$\ol{P_{e}^{FC}}(\udl{x}^{'},\rho,\udl{\eta})=\ol{P_{e}^{FC}}(\rho,\udl{\eta})$.

Now we divide the channel realization into two subsets:
$\mathcal{A}=\{\udl{\eta}\mid\sum_{i=1}^{K_{l}T_{l}}\eta_{i}\le
T_{l}(K_{l}-r), \eta_{i}\ge 0\}$, where
$\udl{\eta}=(\eta_{1},\dots,\eta_{K_{l}T_{l}})$ and
$\ol{\mathcal{A}}=\{\udl{\eta}\mid\sum_{i=1}^{K_{l}T_{l}}\eta_{i}>
T_{l}(K_{l}-r), \eta_{i}\ge 0\}$. For each set we upper bound the
error probability. We begin with the case $\udl{\eta}\in
\mathcal{A}$. For this case we upper bound the terms in
\eqref{eq:IntErrorUpperBound} and find an upper bound on the error
probability as a function of the receiver VNR,
$\mu_{rc}=\rho^{1-\frac{r}{K_{l}}-\frac{\sum_{i=1}^{K_{l}T_{l}}{\eta_{i}}}{K_{l}T_{l}}}$.
We begin by upper bounding the integral of the second term in
\eqref{eq:IntErrorUpperBound}. Note that
\begin{equation}
Pr(\tilde{n}_{\mathrm{ex},1}\ge \frac{x}{2})\le
e^{-\frac{x^{2}}{8\sigma^{2}}}. \nonumber
\end{equation}
Hence, the integral in the second term in
\eqref{eq:IntErrorUpperBound} can be upper bounded by
$$\sigma^{2K_{l}T_{l}}\Gamma(K_{l}T_{l})2^{3K_{l}T_{l}-2}\int_{0}^{2R}\frac{e^{-\frac{x^{2}}{8\sigma^{2}}}x^{2K_{l}T_{l}-1}}
{\sigma^{2K_{l}T_{l}}\Gamma(K_{l}T_{l})2^{3K_{l}T_{l}-2}}dx$$ where
$\int_{0}^{2R}\frac{e^{-\frac{x^{2}}{8\sigma^{2}}}x^{2K_{l}T_{l}-1}}
{\sigma^{2K_{l}T_{l}}\Gamma(K_{l}T_{l})2^{3K_{l}T_{l}-2}}dx=Pr(\lv
\tilde{n}_{\mathrm{ex}}\rv\le 2R)\le 1$. As a result we get the
following upper bound
\begin{equation}
\int_{0}^{2R}
Pr(\tilde{n}_{ex,1}>\frac{x}{2})x^{2K_{l}T_{l}-1}dx\le\sigma^{2K_{l}T_{l}}\Gamma(K_{l}T_{l})2^{3K_{l}T_{l}-2}.
\end{equation}
By assigning this upper bound in the second term of
\eqref{eq:IntErrorUpperBound} we get
\ifthenelse{\equal{\singlecolumntype}{1}}
{\begin{equation}\label{eq:UpperBoundSecondTerm}
\gamma_{\mathrm{rc}}Q(K_{l}T_{l})\int_{0}^{2R}
Pr(\tilde{n}_{ex,1}>\frac{x}{2})x^{2K_{l}T_{l}-1}dx
\le\frac{\gamma_{\mathrm{rc}}\sqrt{\pi}^{2K_{l}T_{l}}2K_{l}T_{l}\sigma^{2K_{l}T_{l}}\Gamma(K_{l}T_{l})2^{3K_{l}T_{l}-2}}{\Gamma(K_{l}T_{l}+1)}
=\rho^{-T_{l}(K_{l}-r)+\sum_{i=1}^{K_{l}T_{l}}\eta_{i}}\cdot\frac{4^{K_{l}T_{l}}}{2e^{K_{l}T_{l}}}.
\end{equation}}
{\begin{align}\label{eq:UpperBoundSecondTerm}
&\gamma_{\mathrm{rc}}Q(K_{l}T_{l})\int_{0}^{2R}
Pr(\tilde{n}_{ex,1}>\frac{x}{2})x^{2K_{l}T_{l}-1}dx\nonumber\\
&\le\frac{\gamma_{\mathrm{rc}}\sqrt{\pi}^{2K_{l}T_{l}}2K_{l}T_{l}\sigma^{2K_{l}T_{l}}\Gamma(K_{l}T_{l})2^{3K_{l}T_{l}-2}}{\Gamma(K_{l}T_{l}+1)}\nonumber\\
&=\rho^{-T_{l}(K_{l}-r)+\sum_{i=1}^{K_{l}T_{l}}\eta_{i}}\cdot\frac{4^{K_{l}T_{l}}}{2e^{K_{l}T_{l}}}.
\end{align}}
Next we upper bound $Pr(\lv \tilde{n}_{\mathrm{ex}}\rv\ge R)$, the
first term in \eqref{eq:IntErrorUpperBound}. We choose
$$R^{2}=R_{\eff}^{2}=\frac{2K_{l}T_{l}}{{2\pi
e}}\gamma_{rc}^{-\frac{1}{K_{l}T_{l}}}=\frac{2K_{l}T_{l}}{2\pi
e}\rho^{-\frac{r}{K_{l}}-\sum_{i=1}^{K_{l}T_{l}}\frac{\eta_{i}}{K_{l}T_{l}}}.$$
For $\udl{\eta}\in \mathcal{A}$ we get that
$$\frac{R_{\eff}^{2}}{2K_{l}T_{l}\cdot\sigma^{2}}=\rho^{1-\frac{r}{K_{l}}-\sum_{i=1}^{K_{l}T_{l}}\frac{\eta_{i}}{K_{l}T_{l}}}\ge
1.$$ By using the upper bounds from \cite{PoltirevJournal}, we know
that for the case
$\frac{R_{\eff}^{2}}{2K_{l}T_{l}\cdot\sigma^{2}}\ge 1$, $Pr(\lv
\tilde{n}_{\mathrm{ex}}\rv\ge R_{\eff})\le
e^{-\frac{R_{\eff}^{2}}{2\sigma^{2}}}(\frac{R_{\eff}^{2}e}{2K_{l}T_{l}\sigma^{2}})^{K_{l}T_{l}}$.
Hence we get \ifthenelse{\equal{\singlecolumntype}{1}}
{\begin{equation} \label{eq:UpperBoundFirstTerm} Pr(\lv
\tilde{n}_{\mathrm{ex}}\rv\ge R_{\eff})\le
e^{-K_{l}T_{l}\rho^{1-\frac{r}{K_{l}}-\sum_{i=1}^{K_{l}T_{l}}\frac{\eta_{i}}{K_{l}T_{l}}}}\cdot
\rho^{T_{l}(K_{l}-r)-\sum_{i=1}^{K_{l}T_{l}}\eta_{i}}\cdot
e^{K_{l}T_{l}}.
\end{equation}}
{\begin{align}\label{eq:UpperBoundFirstTerm}
&Pr(\lv \tilde{n}_{\mathrm{ex}}\rv\ge R_{\eff})\le\nonumber\\
&e^{-K_{l}T_{l}\rho^{1-\frac{r}{K_{l}}-\sum_{i=1}^{K_{l}T_{l}}\frac{\eta_{i}}{K_{l}T_{l}}}}\cdot
\rho^{T_{l}(K_{l}-r)-\sum_{i=1}^{K_{l}T_{l}}\eta_{i}}\cdot
e^{K_{l}T_{l}}.
\end{align}}
The fact that $\udl{\eta}\in \mathcal{A}$ has two significant
consequences: the VNR is greater or equal to 1, and as $\rho$
increases the maximal VNR in the set also increases. For very large
VNR in the receiver, the upper bound of the first term,
\eqref{eq:UpperBoundFirstTerm}, is negligible compared to the upper
bound on the second term, \eqref{eq:UpperBoundSecondTerm}. On the
other hand, the set of rather small VNR values is fixed for
increasing $\rho$ (the VNR is grater or equal to 1). Hence there
must exist a coefficient $D^{'}(K_{l}T_{l})$ that gives us
\begin{equation}\label{eq:FinalUpperBoundForFiniteConstellation}
\ol{P_{e}^{FC}}(\rho,\udl{\eta})\le
D^{'}(K_{l}T_{l})\rho^{-T_{l}(K_{l}-r)+\sum_{i=1}^{K_{l}T_{l}}\eta_{i}}
\end{equation}
for any $\rho$ and $\udl{\eta}\in \mathcal{A}$, where
$\ol{P_{e}^{FC}}(\rho,\udl{\eta})$ is the average decoding error
probability of the ensemble of constellations, for a certain channel
realizations.

Note that we could also take $R\ge R_{\eff}$, as the upper bound in
\eqref{eq:UpperBoundSecondTerm} does not depend on $R$ and the upper
bound in \eqref{eq:UpperBoundFirstTerm} would only decrease in this
case. It results from the fact that we are interested in the
exponential behavior of the error probability, and we consider a
fixed VNR (as a function of $\rho$) as an outage event. This allows
us to take cruder bounds than \cite{PoltirevJournal} in
\eqref{eq:UpperBoundSecondTerm}, that do not depend on $R$.

For the case $\udl{\eta}\in\ol{\mathcal{A}}$, we get
$$\rho^{-T_{l}(K_{l}-r)+\sum_{i=1}^{K_{l}T_{l}}\eta_{i}}\ge 1.$$
Hence, we can upper bound the error probability for
$\udl{\eta}\in\ol{\mathcal{A}}$ by 1. We can also upper bound the
error probability for this case by the upper bound from equation
\eqref{eq:FinalUpperBoundForFiniteConstellation}, as long as we
state that $D^{'}(K_{l}T)\ge 1$. Hence, the upper bound from
\eqref{eq:FinalUpperBoundForFiniteConstellation} applies for
$\eta_{i}\ge 0$, $1\le i\le K_{l}T_{l}$.

So far we upper bounded the average decoding error probability of
the ensemble of finite constellations. We extend now these finite
constellations into an ensemble of IC's with density $\gamma_{tr}$,
and show that the upper bound on the average decoding error
probability does not change. Let us consider a certain finite
constellation, $C_{0}(\rho,b)\subset cube_{K_{l}T_{l}}(b)$, from the
random ensemble. We extend it into IC

\begin{equation}\label{eq:ExtendingFinteConstoICInTheTransmitter}
IC(\rho,K_{l}T_{l})=C_{0}(\rho,b)+(b+b^{'})\cdot\mathbb{Z}^{2K_{l}T_{l}}
\end{equation}
where without loss of generality we assumed that
$cube_{K_{l}T_{l}}(b)\in\mathbb{C}^{K_{l}T_{l}}$. In the receiver we
have
\begin{equation}\label{eq:ExtendingICInTheReceiver}
IC(\rho,K_{l}T_{l},H_{\eff}^{(l)})=H_{\eff}^{(l)}\cdot
C_{0}(\rho,b)+(b+b^{'})H_{\eff}^{(l)}\cdot\mathbb{Z}^{2K_{l}T_{l}}.
\end{equation}
By extending each finite constellation in the ensemble into an IC
according to the method presented in
\eqref{eq:ExtendingFinteConstoICInTheTransmitter}, we get a new
ensemble of IC's. We would like to set $b$ and $b^{'}$ to be large
enough such that the IC's ensemble average decoding error
probability has the same upper bound as in
\eqref{eq:FinalUpperBoundForFiniteConstellation}, and a density that
equals $\gamma_{rc}$ up to a coefficient. First we would like to set
a value for $b^{'}$. Increasing $b^{'}$ decreases the error
probability inflicted by the codewords outside the set
$\{H_{\eff}^{(l)}\cdot C_{0}(\rho,b)\}$. Without loss of generality,
we upper bound the error probability of the points $x\in
\{H_{\eff}^{(l)}\cdot C_{0}(\rho,b)\}\subset
IC(\rho,K_{l}T_{l},H_{\eff}^{(l)})$, denoted by
$P_{e}^{IC}(H_{\eff}^{(l)}\cdot C_{0})$. Due to the tiling symmetry,
$P_{e}^{IC}(H_{\eff}^{(l)}\cdot C_{0})$ is also the average decoding
error probability of the entire IC. We begin with $\udl{\eta}\in
\mathcal{A}$. For this case, we upper bound the IC error probability
in the following manner
$$P_{e}^{IC}(H_{\eff}^{(l)}\cdot C_{0})\le P_{e}^{FC}(H_{\eff}^{(l)}\cdot C_{0})+P_{e}\big(H_{\eff}^{(l)}\cdot(IC\setminus C_{0})\big)$$
where $P_{e}^{FC}(H_{\eff}^{(l)}\cdot C_{0})$ is the error
probability of the finite constellation $\{H_{\eff}^{(l)}\cdot
C_{0}\}$, and $P_{e}\big(H_{\eff}^{(l)}\cdot(IC\setminus
C_{0})\big)$ is the average decoding error probability to points in
the set $\{H_{\eff}^{(l)}\cdot(IC\setminus C_{0})\}$. For the case
$\udl{\eta}\in \mathcal{A}$, we know that $0\le \eta_{i}\le
T_{l}(K_{l}-r)$. Hence, the constriction caused by the channel in
each dimension can not be smaller than
$\rho^{-\frac{T_{l}}{2}(K_{l}-r)}$. As a result, for any $x_{1}\in
\{H_{\eff}^{(l)}\cdot C_{0}\}$ and $x_{2}\in
\{H_{\eff}^{(l)}\cdot(IC\setminus C_{0})\}$ we get $\lVert
\underline{x}_{1}-\underline{x}_{2}\rVert\ge
2b^{'}\cdot\rho^{-\frac{T_{l}}{2}(K_{l}-r)}$. By choosing
$b^{'}=\sqrt{\frac{K_{l}T_{l}}{\pi
e}}\rho^{\frac{T_{l}}{2}(K_{l}-r)+\epsilon}$, we get for
$\udl{\eta}\in \mathcal{A}$ that $\lVert
\underline{x}_{1}-\underline{x}_{2}\rVert\ge
2\sqrt{\frac{K_{l}T_{l}}{\pi e}}\rho^{\epsilon}$. Hence we get
$$P_{e}\big(H_{\eff}^{(l)}\cdot(IC\setminus C_{0})\big)\le Pr(\lVert \underline{\tilde{n}}_{\mathrm{ex}}\rVert\ge \sqrt{\frac{K_{l}T_{l}}{\pi e}}\rho^{\epsilon}).$$ For $\rho\ge 1$ we get according to the bounds in \cite{PoltirevJournal} that
$$Pr(\lVert \underline{\tilde{n}}_{\mathrm{ex}}\rVert\ge \sqrt{\frac{K_{l}T_{l}}{\pi e}}\rho^{\epsilon}))\le
e^{-K_{l}T_{l}\rho^{1+\epsilon}}\rho^{K_{l}T_{l}(1+\epsilon)}e^{K_{l}T_{l}}.$$
As a result, there exists a coefficient $D^{''}(K_{l}T_{l})$ such
that
$$P_{e}\big(H_{\eff}^{(l)}\cdot(IC\setminus C_{0})\big)\le
D^{''}(K_{l}T_{l})\rho^{-T_{l}(K_{l}-r)+\sum_{i=1}^{K_{l}T_{l}}\eta_{i}}$$
for $\udl{\eta}\in \mathcal{A}$ and $\rho\ge 1$. This bound applies
for any IC in the ensemble. From
\eqref{eq:FinalUpperBoundForFiniteConstellation} we can state that
$\ol{P_{e}^{FC}}(\rho,\udl{\eta})=E_{C_{0}}\big(P_{e}^{FC}(H_{\eff}^{(l)}\cdot
C_{0})\big)\le
D^{'}(K_{l}T_{l})\rho^{-T_{l}(K_{l}-r)+\sum_{i=1}^{K_{l}T_{l}}\eta_{i}}$.
Hence
\begin{equation}\label{eq:ICUpperBoundErrorProbForAllCases}
\ol{P_{e}}(\rho,\udl{\eta})\le
D(K_{l}T_{l})\rho^{-T_{l}(K_{l}-r)+\sum_{i=1}^{K_{l}T_{l}}\eta_{i}}
\end{equation}
where
$\ol{P_{e}}(\rho,\udl{\eta})=E_{C_{0}}\big(P_{e}^{IC}(H_{\eff}^{(l)}\cdot
C_{0})\big)$ is the average decoding error probability of the
ensemble of IC's defined in \eqref{eq:ExtendingICInTheReceiver}, and
$D=2\max(D^{'},D^{''})>1$.

Next, we set the value of $b$ to be large enough such that each IC
density from the ensemble in \eqref{eq:ExtendingICInTheReceiver},
$\gamma_{rc}^{'}$, equals $\gamma_{rc}$ up to a factor of 2. By
choosing $b=b^{'}\cdot\rho^{\epsilon}$ we get
$$\gamma_{rc}^{'}=\gamma_{rc}(\frac{b}{b+b^{'}})^{2K_{l}T}=\gamma_{rc}\frac{1}{1+\rho^{-\epsilon}}.$$
For each value $\rho\ge 1$, we get
$\frac{1}{2}\gamma_{rc}\le\gamma_{rc}^{'}\le\gamma_{rc}$. As a
result we have
$$\mu_{rc}\le\mu_{rc}^{'}=\frac{(\gamma_{rc}^{'})^{-\frac{1}{K_{l}T}}}{2\pi e\sigma^{2}}\le
2\mu_{rc}.$$ Note that in our proof we referred to a matrix of
dimension $NT_{l}\times K_{l}T_{l}$. However these results apply for
any full rank matrix with number of rows which is greater or equal
to the number of columns.

\section{Proof of theorem \ref{Th:LowerBoundDiversityOrder}}\label{Append:LowerBoundDiversityOrder}
Specifically, we first lower bound the contribution of $\udl{h}_{j}$
to the determinant \eqref{eq:TheContributionOfhjIntermsofbetta}, by
upper bounding $\sum_{k=0}^{\min(j,L)-1}b_{j}(k)a(k,\udl{\xi}_{j})$.
Based on Lemma \ref{Lem:LemmaOfTheOccurrencesOfEachColumn}, and the
fact that when two columns of $H$ occur together in a block of
$H_{\eff}^{(l)}$, all the columns of $H$ between them must also
occur in the same block, we get
\begin{equation}\label{eq:InequalityAsResuktFromLemmaOfNumberColumns}
\sum_{s=k}^{\min(j,L)-1}b_{j}(s)\le N-k\qquad 0\le k\le \min(j,L)-1.
\end{equation}
where $\sum_{s=k}^{\min(j,L)-1}b_{j}(s)$ is the number of
occurrences of $\{\udl{h}_{j},\dots,\udl{h}_{j-k}\}$ in the blocks
of $H_{\eff}^{(l)}$. Hence, we can state that
\ifthenelse{\equal{\singlecolumntype}{1}}{$$\sum_{s=0}^{\min(j,L)-1}b_{j}(s)\le
N$$} {$\sum_{s=0}^{\min(j,L)-1}b_{j}(s)\le N,$} by assigning $k=0$
in \eqref{eq:InequalityAsResuktFromLemmaOfNumberColumns}. Also note
that for $l=0$, the sum
$\sum_{s=0}^{\min(j,L)-1}b_{j}(s)a(s,\udl{\xi}_{j})$ is larger than
for any other $1\le l\le L-1$. From the inequalities in
\eqref{eq:InequalityOfTheNumberOfContributions}, and the fact that
for $l=0$ we get $b_{j}(k)>0$ for any $1\le k\le\min(j,L)-1$, we can
state that \ifthenelse{\equal{\singlecolumntype}{1}}
{\begin{equation}\label{UpperBoundOnhjContExponent}
\sum_{s=0}^{\min(j,L)-1}b_{j}(s)a(s,\udl{\xi}_{j})\le\sum_{s=0}^{\min(j,L)-2}a(s,\udl{\xi}_{j})
+(N-\min(j,L)+1)a(\min(j,L)-1,\udl{\xi}_{j})=c(j).
\end{equation}}
{\begin{align}\label{UpperBoundOnhjContExponent}
\sum_{s=0}^{\min(j,L)-1}b_{j}(s)a(s,\udl{\xi}_{j})\le\sum_{s=0}^{\min(j,L)-2}a(s,\udl{\xi}_{j})\nonumber\\
+(N-\min(j,L)+1)a(\min(j,L)-1,\udl{\xi}_{j})=c(j).
\end{align}}
Using \eqref{eq:TheContributionOfhjIntermsofbetta} and
\eqref{UpperBoundOnhjContExponent} we can state that for a vector
$\udl{\xi}_{j}$, whose PDF is proportional to
$\rho^{-\sum_{i=1}^{N}\xi_{i,j}}$, we can lower bound the
contribution of $\udl{h}_{j}$ to
$|H_{\eff}^{(l)\dagger}H_{\eff}^{(l)}|$ by
\begin{equation}\label{eq:LowerNoundOnTheContributionOfjj}
\lv\udl{h}_{j}\rv^{2b_{j}(0)}\prod_{k=1}^{\min(j,L)-1}\lv\udl{h}_{j\perp
j-1,\dots,j-k}\rv^{2b_{j}(k)}\ge \rho^{-c(j)}.
\end{equation}
By taking into account the contribution of each column $\udl{h}_{j}$
to the determinant we get that
\ifthenelse{\equal{\singlecolumntype}{1}} {\begin{equation}
|H_{\eff}^{(l)\dagger}H_{\eff}^{(l)}|=
\prod_{j=1}^{M}\lv\udl{h}_{j}\rv^{2b_{j}(0)}\prod_{k=1}^{\min(j,L)-1}\lv\udl{h}_{j\perp
j-1,\dots,j-k}\rv^{2b_{j}(k)}.
\end{equation}}
{\begin{align}
&|H_{\eff}^{(l)\dagger}H_{\eff}^{(l)}|=\nonumber\\
&\prod_{j=1}^{M}\lv\udl{h}_{j}\rv^{2b_{j}(0)}\prod_{k=1}^{\min(j,L)-1}\lv\udl{h}_{j\perp
j-1,\dots,j-k}\rv^{2b_{j}(k)}.
\end{align}}
By considering the set of vectors
$\udl{\xi}_{1},\dots,\udl{\xi}_{M}$, whose PDF is proportional to
$\rho^{-\sum_{j=1}^{M}\sum_{i=1}^{N}\xi_{i,j}}$, and by using the
lower bound from \eqref{eq:LowerNoundOnTheContributionOfjj} we get
\begin{equation}\label{eq:LowerNoundOnThedeterminantHeff}
|H_{\eff}^{(l)\dagger}H_{\eff}^{(l)}|\ge\rho^{-\sum_{j=1}^{M}c(j)}
\end{equation}

The upper bound on the error probability presented in Theorem
\ref{Th:UpperBoundErrorProb} is proportional to
\begin{equation}\label{eq:TheUpperBoundErrorProbWithHEntries}
\rho^{-T_{l}(K_{l}-r)}\cdot|H_{\eff}^{(l)\dagger}H_{\eff}^{(l)}|^{-1}=\rho^{-T_{l}(K_{l}-r)+\sum_{i=1}^{K_{l}T}\eta_{i}}
\end{equation}
for $\eta_{i}\ge 0$ and $1\le i\le K_{l}T_{l}$, where
$\rho^{-\frac{\eta_{i}}{2}}$ are the singular values of
$H_{\eff}^{(l)}$. Hence, in order to use the upper bound from
Theorem \ref{Th:UpperBoundErrorProb} in our analysis, we need to
show that by taking $\xi_{i,j}\ge 0$, $1\le i\le N$, $1\le j\le M$
we also get that $\eta_{i}\ge 0$, $1\le i\le K_{l}T_{l}$. Note that
the entries of $H_{\eff}^{(l)}$ are elements of the channel matrix
$H$. Also, all the columns of $H$ must appear in $H_{\eff}^{(l)}$.
Hence, from trace considerations we get
$$\frac{\rho^{-\min_{i,j}(\xi_{i,j})}}{K_{l}T_{l}}\le\rho^{-\min_{s}(\eta_{s})}\le N\cdot K_{l}T_{l}^{2}\rho^{-\min_{i,j}(\xi_{i,j})}.$$
As a result $\min_{i,j}(\xi_{i,j})\ge 0$ if and only if
$\min_{s}(\eta_{s})\ge 0$, and so $\eta_{s}\ge 0$ for every $1\le
s\le K_{l}T_{l}$. As the upper bound on the error probability in
\eqref{eq:TheUpperBoundErrorProbWithHEntries} applies for
$\eta_{i}\ge 0$, $1\le i\le K_{l}T_{l}$, this upper bound also
applies whenever $\xi_{i,j}\ge 0$, $1\le i\le N$ and $1\le j\le M$.
In equation \eqref{eq:LowerNoundOnThedeterminantHeff} we found a
lower bound on the determinant. We use this lower bound to upper
bound the determinant of the matrix inverse
$|H_{\eff}^{(l)\dagger}H_{\eff}^{(l)}|^{-1}$
\begin{equation}\label{eq:UpperBoundInverseDetByhjTerms}
|H_{\eff}^{(l)\dagger}H_{\eff}^{(l)}|^{-1}\le\rho^{\sum_{j=1}^{M}c(j)}.
\end{equation}
and as a consequence we can upper bound the error probability.

We can express the average decoding error probability over the
ensemble of IC's for large $\rho$ as follows
\ifthenelse{\equal{\singlecolumntype}{1}}
{\begin{equation}\label{eq:UpperBoundErrorProbDueExact}
\ol{P_{e}}(\rho)=\int_{H}P_{e}(\rho,H)f(H)dH\dot{=}
\int_{\xi_{\udl{i},\udl{j}}\ge
0}P_{e}(\rho,\xi_{\udl{i},\udl{j}})f(\xi_{\udl{i},\udl{j}})d\xi_{\udl{i},\udl{j}}
\end{equation}}
{\begin{align}\label{eq:UpperBoundErrorProbDueExact}
\ol{P_{e}}(\rho)=\int_{H}P_{e}(\rho,H)f(H)dH\dot{=}\nonumber\\
\int_{\xi_{\udl{i},\udl{j}}\ge
0}P_{e}(\rho,\xi_{\udl{i},\udl{j}})f(\xi_{\udl{i},\udl{j}})d\xi_{\udl{i},\udl{j}}
\end{align}}
where $P_{e}(\rho,H)=P_{e}(\rho,\xi_{\udl{i},\udl{j}})$ is the
ensemble average decoding error probability per channel realization,
and $\xi_{\udl{i},\udl{j}}\ge 0$ means $\xi_{i,j}\ge 0$ for $1\le
i\le N$ and $1\le j\le M$. We divide the integration range into two
sets:
$\mathcal{A}=\{\xi_{\udl{i},\udl{j}}\mid\sum_{i=1}^{N}\sum_{j=1}^{M}\xi_{i,j}\le
T_{l}(K_{l}-r);\xi_{\udl{i},\udl{j}}\ge 0\}$ and
$\ol{\mathcal{A}}=\{\xi_{\udl{i},\udl{j}}\mid\sum_{i=1}^{N}\sum_{j=1}^{M}\xi_{i,j}>
T_{l}(K_{l}-r);\xi_{\udl{i},\udl{j}}\ge 0\}$. Hence, we can write
the average decoding error probability as follows
\ifthenelse{\equal{\singlecolumntype}{1}}
{\begin{equation}\label{eq:UpperBoundErrorProbDueExact2Ranges}
\ol{P_{e}}(\rho)\dot{=}\int_{\xi_{\udl{i},\udl{j}}\in \mathcal{A}
}P_{e}(\rho,\xi_{\udl{i},\udl{j}})f(\xi_{\udl{i},\udl{j}})d\xi_{\udl{i},\udl{j}}+
\int_{\xi_{\udl{i},\udl{j}}\in
\ol{\mathcal{A}}}P_{e}(\rho,\xi_{\udl{i},\udl{j}})f(\xi_{\udl{i},\udl{j}})d\xi_{\udl{i},\udl{j}}.
\end{equation}}
{\begin{align}\label{eq:UpperBoundErrorProbDueExact2Ranges}
\ol{P_{e}}(\rho)\dot{=}\int_{\xi_{\udl{i},\udl{j}}\in \mathcal{A} }P_{e}(\rho,\xi_{\udl{i},\udl{j}})f(\xi_{\udl{i},\udl{j}})d\xi_{\udl{i},\udl{j}}+\nonumber\\
\int_{\xi_{\udl{i},\udl{j}}\in
\ol{\mathcal{A}}}P_{e}(\rho,\xi_{\udl{i},\udl{j}})f(\xi_{\udl{i},\udl{j}})d\xi_{\udl{i},\udl{j}}.
\end{align}}
We begin by upper bounding the first term of the error probability
in \eqref{eq:UpperBoundErrorProbDueExact2Ranges}. Based on Theorem
\ref{Th:UpperBoundErrorProb}, the average decoding error probability
per channel realization is upper bounded by
$P_{e}(\rho,H)\le\rho^{-T_{l}(K_{l}-r)+\sum_{i=1}^{K_{l}T_{l}}\eta_{i}}$.
Using the upper bound on the determinant
\eqref{eq:UpperBoundInverseDetByhjTerms} and the fact that
$|H_{\eff}^{(l)\dagger}H_{\eff}^{(l)}|^{-1}=\rho^{\sum_{i=1}^{K_{l}T_{l}}\eta_{i}}$,
we get that the first term of the error probability
\eqref{eq:UpperBoundErrorProbDueExact2Ranges} is upper bounded by
\begin{equation}\label{eq:UpperBoundAverageErrorProbabilityFirstTerm}
\int_{\xi_{\udl{i},\udl{j}}\in
\mathcal{A}}\rho^{-T_{l}(K_{l}-r)+\sum_{j=1}^{M}(c(j)-\sum_{i=1}^{N}\xi_{i,j})}d\xi_{\udl{i},\udl{j}}.
\end{equation}
Now we prove a Lemma that shows that the exponent of the integrand
in the upper bound from
\eqref{eq:UpperBoundAverageErrorProbabilityFirstTerm} is negative
for $\xi_{\udl{i},\udl{j}}\ge 0$.
\begin{lem}\label{lem:ErrorIntegrandAlwaysNegative}
consider $\xi_{i,j}\ge 0$ for $1\le i\le N$ and $1\le j\le M$. The
sum
$$c(j)-\sum_{i=1}^{N}\xi_{i,j}\le 0$$
for every $1\le j\le M$.
\end{lem}
\begin{proof}
See appendix \ref{Append:ErrorProbUpperBoundPositive}.
\end{proof}
In a similar manner to \cite{TseDivMult2003},
\cite{ElGamalLAST2004}, for a very large $\rho$ and a finite
integration range, we can approximate the integral by finding the
most dominant exponential term in
\eqref{eq:UpperBoundAverageErrorProbabilityFirstTerm}. Based on
Lemma \ref{lem:ErrorIntegrandAlwaysNegative} we know that the
exponent of the integrand is always negative. Hence, we can
approximate the upper bound by finding
$$\min_{\xi_{\udl{i},\udl{j}}\in
\mathcal{A}}T_{l}(K_{l}-r)+\sum_{j=1}^{M}(\sum_{i=1}^{N}\xi_{i,j}-c(j)).$$
As $\sum_{i=1}^{N}\xi_{i,j}-c(j)\ge 0$ the minimum is achieved when
$\sum_{i=1}^{N}\xi_{i,j}-c(j)=0$ for $1\le j\le M$. This can be
achieved for instance by taking $\xi_{i,j}=0$ for $1\le i\le N$,
$1\le j\le M$. In this case we get that the diversity order equals
$T_{l}(K_{l}-r)$ which is the best diversity order possible for IC's
of complex dimension $K_{l}T_{l}$.

Next we upper bound the second term of the error probability from
\eqref{eq:UpperBoundErrorProbDueExact2Ranges}. For
$\xi_{\udl{i},\udl{j}}\in\ol{\mathcal{A}}$ we upper bound the
average decoding error probability per channel realization by 1. In
this case we get
$$\int_{\xi_{\udl{i},\udl{j}}\in \ol{\mathcal{A}}}\rho^{-\sum_{j=1}^{M}\sum_{i=1}^{N}\xi_{i,j}}d\xi_{\udl{i},\udl{j}}.$$
Again we approximate this integral by calculating the most dominant
exponential term, i.e.
$\min_{\xi_{\udl{i},\udl{j}}\in\ol{\mathcal{A}}}\sum_{i=1}^{N}\sum_{j=1}^{M}\xi_{i,j}$.
The minimal value for this case is also $T_{l}(K_{l}-r)$. Hence, we
get a diversity order $T_{l}(K_{l}-r)$ for the second term. As a
result we can state that for both terms in
\eqref{eq:UpperBoundErrorProbDueExact2Ranges} we get the same
diversity order, and the transmission scheme diversity order is
upper bounded by $T_{l}(K_{l}-r)$. The proof is concluded.

\section{Proof of Lemma \ref{lem:ErrorIntegrandAlwaysNegative}}\label{Append:ErrorProbUpperBoundPositive}
We know that
\ifthenelse{\equal{\singlecolumntype}{1}}
{\begin{equation*}
c(j)=\sum_{s=0}^{\min(j,L)-2}a(s,\udl{\xi}_{j})+
(N-\min(j,L)+1)a(\min(j,L)-1,\udl{\xi}_{j})
\end{equation*}}
{\begin{align*}
&c(j)=\sum_{s=0}^{\min(j,L)-2}a(s,\udl{\xi}_{j})\\+
&(N-\min(j,L)+1)a(\min(j,L)-1,\udl{\xi}_{j})
\end{align*}}
where
$$a(k,\udl{\xi}_{j})=\min_{s\in\{k+1,\dots,N\}}\xi_{s,j}\qquad 0\le k\le \min(j,L)-1$$
and by definition
$$a(\min(j,L)-1,\udl{\xi}_{j})\ge\dots\ge a(0,\udl{\xi}_{j})\ge 0.$$
In order to prove the Lemma we begin with
$a(\min(j,L)-1,\udl{\xi}_{j})$. We know that
\begin{equation}
\sum_{s=\min(j,L)}^{N}\xi_{s,j}\ge
(N-\min(j,L)+1)\cdot\min_{s}\xi_{s,j}
\end{equation}
where $s\in\{\min(j,L),\dots,N\}$. We can also see that
\begin{equation}
\xi_{k+1,j}\ge \min_{s\in\{k+1,\dots,N\}}\xi_{s,j}
\end{equation}
for $0\le k\le \min(j,L)-2$. Hence we get
$$c(j)-\sum_{i=1}^{N}\xi_{i,j}\le 0.$$
This concludes the proof.

\section{Proof of Theorem \ref{Th:LowerBoundDiversityOrderLattices}}\label{Append:LatticesDiversityOrder}
We prove that there exists a sequence of
$2K_{l}T_{l}$-real dimensional lattices (as a function of $\rho$)
that attains the same diversity order as in Theorem
\ref{Th:LowerBoundDiversityOrder}. By using the
\emph{Minkowski-Hlawaka-Siegel} Theorem
\cite{PoltirevJournal},\cite{LekkerkerkerGeomety}, we upper bound
the error probability of the ensemble of lattices, for each channel
realization. This upper bound equals to the upper bound derived in
Theorem \ref{Th:UpperBoundErrorProb}. Then we average the upper
bound over all channel realizations, and receive the desired
diversity order.

We consider a $2K_{l}T_{l}$-real dimensional ensemble of lattices,
transmitted using the transmission scheme defined in subsection
\ref{subsec:TheTransmissionScheme}. We spread the first $K_{l}T_{l}$
dimensions of the lattice on the real part of the non-zero entries
of $G_{l}$, and the other $K_{l}T_{l}$ dimensions of the lattice on
the imaginary part of the non-zero entries of $G_{l}$. Each lattice
in the ensemble has transmitter density $\gamma_{tr}=\rho^{rT_{l}}$,
i.e. multiplexing gain $r$. We begin by analyzing the performance of
the ensemble of lattices in the receiver, for each channel
realization. We assume a certain channel realization that induces a
receiver VNR
$\mu_{rc}=\rho^{1-\frac{r}{K_{l}}-\sum_{i=1}^{K_{l}T_{l}}\frac{\eta_{i}}{K_{l}T_{l}}}$,
where $\udl{\eta}\ge 0$. For each lattice in the ensemble we get
that the channel realization induces a new lattice in the receiver,
$H_{eff}^{(l)}\cdot\udl{x}$, with density $\gamma_{rc}$ in
accordance with \eqref{eq:ExtendedChannelModel} and subsection
\ref{subsec:TheEffectiveChannel}. For lattices with regular lattice
decoding, the error probability is equal among all codewords. Hence,
it is sufficient to analyze the lattice's zero codeword error
probability. We define the indication function
$$I_{Ball(0,2R)}(\udl{x})=\left\{ \begin{array}{ll}
1, & \lv\udl{x}\rv\le 2R\\
0, & else
\end{array}\right..$$
In a similar manner to \eqref{eq:MLPointErrorProbUpperBound} we can
state that for each lattice induced in the receiver,
$\Lambda_{\rc}$, the lattice zero codeword error probability is
upper bounded by
\ifthenelse{\equal{\singlecolumntype}{1}}
{\begin{equation}\label{eq:LatticesUpperBoundErrorProbCrudeForm}
\sum_{\udl{x}\in\Lambda_{\rc},\udl{x}\neq
0}I_{{Ball(0,2R_{\eff})}}(\udl{x})\cdot Pr(\rv
\udl{\tilde{n}}_{\ex}\lv>\rv \udl{x}-\udl{\tilde{n}}_{\ex}\lv)
+Pr(\lVert\underline{\tilde{n}}_{\mathrm{ex}}\rVert\ge R_{\eff})
\end{equation}}
{\begin{align}\label{eq:LatticesUpperBoundErrorProbCrudeForm}
\sum_{\udl{x}\in\Lambda_{\rc},\udl{x}\neq
0}I_{{Ball(0,2R_{\eff})}}(\udl{x})\cdot Pr(\rv
\udl{\tilde{n}}_{\ex}\lv>\rv \udl{x}-\udl{\tilde{n}}_{\ex}\lv)\nonumber\\
+Pr(\lVert\underline{\tilde{n}}_{\mathrm{ex}}\rVert\ge R_{\eff})
\end{align}}
where $\frac{R_{\eff}^{2}}{2K_{l}T_{l}\sigma^{2}}=\mu_{rc}$, and
$\udl{\tilde{n}}_{\ex}$ is the effective noise in the
$K_{l}T_{l}$-complex hyperplane where $\Lambda_{rc}$ resides in. By
defining $f_{rc}(\udl{x})=I_{{Ball(0,2R_{\eff})}}(\udl{x})\cdot
Pr(\rv \udl{\tilde{n}}_{\ex}\lv>\rv
\udl{x}-\udl{\tilde{n}}_{\ex}\lv)$, we can rewrite the upper bound
on the error probability from
\eqref{eq:LatticesUpperBoundErrorProbCrudeForm}
\begin{equation}\label{eq:LatticeUpperBoundErrorProb}
\sum_{\udl{x}\in\Lambda_{\rc},\udl{x}\neq
0}f_{\mathrm{rc}}(\udl{x})+Pr(\lVert\underline{\tilde{n}}_{\mathrm{ex}}\rVert\ge
R_{\eff}).
\end{equation}
Note that
\begin{equation}\label{eq:LatticesEnsembleAverageDecodingErrorProb}
\gamma_{rc}\int_{\mathbb{R}^{2K_{l}T_{l}}}f_{rc}(\udl{x})d\udl{x}+Pr(\lVert\underline{\tilde{n}}_{\mathrm{ex}}\rVert\ge
R_{\eff})
\end{equation}
is equal to the expression in \eqref{eq:IntErrorUpperBound}, where
$\gamma_{rc}$ is the density of the lattice induced in the receiver
$\Lambda_{\rc}$, as defined above.

We need to show that there exists a single probability measure for
all channel realizations, that gives an average decoding error
probability over the ensemble, which is upper bounded by
\eqref{eq:LatticesEnsembleAverageDecodingErrorProb}. Hence, we
consider the ensemble of lattices in the transmitter which is fixed
for each channel realization. For this reason we define
\begin{equation}\label{eq:TheEffectiveChannelModelWithTheTransmitter}
\udl{y}_{\ex}^{'}=\big(H_{\eff}^{(l)\dagger}\cdot
H_{\eff}^{(l)}\big)^{-1}H_{\eff}^{(l)\dagger}\cdot\udl{y}_{\ex}.
\end{equation}
Note that the operation in
\eqref{eq:TheEffectiveChannelModelWithTheTransmitter} does not
change the error probability of the lattice when we use regular
lattice decoding. Each lattice in the ensemble has density
$\gamma_{tr}=\rho^{rT_{l}}$. Now we define the following indication
function
$$I_{ellipse(H,2R)}(\udl{x})=\left\{ \begin{array}{ll}
1, & \rv H\cdot\udl{x}\lv\le 2R\\
0, & else
\end{array}\right.,$$
that is the function is one if $\udl{x}$ is within the ellipse and zero otherwise. Let us denote the error probability of a lattice in the ensemble for certain channel realization $\udl{\eta}$ by
$P_{e}^{(\nu)}(\udl{\eta},\rho)$, where $\nu$ is a random variable
that represents a certain lattice in the ensemble. Using regular
lattice decoding, we get the following upper bound on the error
probability for each lattice codeword
\ifthenelse{\equal{\singlecolumntype}{1}}
{\begin{equation}\label{eq:LatticeTransmiterEnsembleUpperBoundErrorProbCrudeForm}
P_{e}^{(\nu)}(\udl{\eta},\rho)\le\sum_{\udl{x}\in\Lambda_{\tr},\udl{x}\neq
0}I_{{ellipse(H_{\eff}^{(l)},2R_{\eff})}}(\udl{x})\cdot Pr\big(\rv
A\cdot\udl{\hat{n}}_{\ex}\lv> \rv A\cdot
(\udl{x}-\udl{\hat{n}}_{\ex})\lv\big) +Pr(\lVert
A\cdot\underline{\hat{n}}_{\mathrm{ex}}\rVert\ge R_{\eff})
\end{equation}}
{\begin{align}\label{eq:LatticeTransmiterEnsembleUpperBoundErrorProbCrudeForm}
P_{e}^{(\nu)}(\udl{\eta},\rho)\le Pr(\lVert
A\cdot\underline{\hat{n}}_{\mathrm{ex}}\rVert\ge R_{\eff})&+\nonumber\\
\sum_{\udl{x}\in\Lambda_{\tr},\udl{x}\neq
0}I_{{ellipse(H_{\eff}^{(l)},2R_{\eff})}}(\udl{x})\cdot Pr\big(\rv
A &\cdot\udl{\hat{n}}_{\ex}\lv \nonumber\\
>\rv A \cdot(\udl{x}-\udl{\hat{n}}_{\ex})\lv\big)
\end{align}}
where $A$ is a $K_{l}T_{l}$x$K_{l}T_{l}$ matrix that satisfies
$A^{\dagger}A=H_{\eff}^{(l)\dagger}H_{\eff}^{(l)}$, $\Lambda_{tr}$
is the lattice from the ensemble that corresponds to $\nu$ and
$\udl{\hat{n}}_{\ex}\sim
CN\big(0,(H_{\eff}^{(l)\dagger}H_{\eff}^{(l)})^{-1}\big)$. Note that
\eqref{eq:LatticeTransmiterEnsembleUpperBoundErrorProbCrudeForm} is
equal to \eqref{eq:LatticeUpperBoundErrorProb}, and the
corresponding terms in the expressions are also equal.

Let us define
$g_{rc}(\udl{x})=I_{{ellipse(H_{\eff}^{(l)},2R_{\eff})}}(\udl{x})\cdot
Pr\big(\rv A\udl{\hat{n}}_{\ex}\lv> \rv A
(\udl{x}-\udl{\hat{n}}_{\ex})\lv\big)$. We get that
\begin{equation}\label{eq:EqualityBetweenElipseAndBallIntegral}
\gamma_{\tr}\int_{\mathbb{R}^{2K_{l}T_{l}}}g_{\rc}(\udl{x})d\udl{x}=\gamma_{\rc}\int_{\mathbb{R}^{2K_{l}T_{l}}}f_{\rc}(\udl{x})d\udl{x}.
\end{equation}

Next we show that by averaging the upper bound in
\eqref{eq:LatticeTransmiterEnsembleUpperBoundErrorProbCrudeForm}
over the ensemble of lattices in the transmitter, with the correct
probability measure, we get
\begin{equation}\label{eq:TheAverageLatticesUpperBound}
E_{\nu}\{P_{e}^{(\nu)}(\udl{\eta},\rho)\}\le
\gamma_{rc}\int_{\mathbb{R}^{2K_{l}T_{l}}}f_{rc}(\udl{x})d\udl{x}+Pr(\lVert\underline{\tilde{n}}_{\mathrm{ex}}\rVert\ge
R_{\eff}).
\end{equation}
We prove \eqref{eq:TheAverageLatticesUpperBound} by using the
\emph{Minkowski-Hlawaka-Siegel} theorem \cite{PoltirevJournal}:
\begin{theorem}
(\label{Th:MinHal}Minkowski-Hlawaka-Siegel Theorem) In the set of
all the lattices of density $\gamma$ in $\mathbb{R}^{2K_{l}T_{l}}$,
there exists a probability measure $\nu$ such that for any Riemann
integrable function $f(\udl{x})$ which vanishes outside some bounded
region we have
\begin{equation}\label{eq:MinHalTh}
E_{\nu}\{\sum_{\udl{x}\in\Lambda}g(\udl{x})\}=\gamma\int_{\mathbb{R}^{2K_{l}T_{l}}}g(\udl{x})d\udl{x}
\end{equation}
where $E_{\nu}\{\cdot\}$ represents the expectation with respect to
the measure $\nu$.
\end{theorem}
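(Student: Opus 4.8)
The plan is to realize the measure $\nu$ as a normalized Haar measure on the space of lattices and to identify the mean-value functional by an invariance argument. Write $n=2K_{l}T_{l}$ for the real dimension. Since density $\gamma$ corresponds to covolume $\gamma^{-1}$, and scaling a lattice by $\gamma^{-1/n}$ carries covolume $1$ to covolume $\gamma^{-1}$, it suffices to treat the unimodular case (density $1$) and then rescale: the Jacobian of the scaling map produces exactly the factor $\gamma$ on the right-hand side of \eqref{eq:MinHalTh}. For the unimodular case I would identify the set of covolume-$1$ lattices with the homogeneous space $X_{n}=SL(n,\mathbb{R})/SL(n,\mathbb{Z})$, each such lattice being $\Lambda=\phi\,\mathbb{Z}^{n}$ for some $\phi\in SL(n,\mathbb{R})$ determined up to right multiplication by $SL(n,\mathbb{Z})$, and take $\nu$ to be the $SL(n,\mathbb{R})$-invariant Haar measure pushed down to $X_{n}$, normalized to total mass $1$.

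The key step is the mean-value identity itself. For $g$ Riemann integrable and vanishing outside a bounded region, and with $g(0)=0$ (which holds in the application, so summing over $\Lambda$ and over $\Lambda\setminus\{0\}$ coincides), define the \emph{Siegel transform} $\widehat{g}(\Lambda)=\sum_{\udl{x}\in\Lambda,\,\udl{x}\neq 0} g(\udl{x})$ and set $I(g)=\int_{X_{n}}\widehat{g}\,d\nu$. Then $I$ is a positive linear functional on functions on $\mathbb{R}^{n}\setminus\{0\}$. Because $\nu$ is $SL(n,\mathbb{R})$-invariant and the action of $\phi\in SL(n,\mathbb{R})$ on $X_{n}$ merely relabels the points of each lattice, one obtains $I(g\circ\phi^{-1})=I(g)$ for every $\phi$, so $I$ is $SL(n,\mathbb{R})$-invariant. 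Since $SL(n,\mathbb{R})$ acts transitively on $\mathbb{R}^{n}\setminus\{0\}$ for $n\geq 2$, and Lebesgue measure is, up to a positive scalar, the unique $SL(n,\mathbb{R})$-invariant measure on this orbit, it follows that $I(g)=c\int_{\mathbb{R}^{n}}g(\udl{x})\,d\udl{x}$ with $c$ independent of $g$. To pin down $c$ I would evaluate both sides on a convenient radial test function, using either Rogers' explicit formula for the moments of $\widehat{g}$ or a direct unfolding, obtaining $c=1$ for the normalized probability measure on unimodular lattices. Rescaling by $\gamma^{-1/n}$ then yields $E_{\nu}\{\sum_{\udl{x}\in\Lambda}g(\udl{x})\}=\gamma\int_{\mathbb{R}^{n}}g(\udl{x})\,d\udl{x}$, which is the claim.

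The main obstacle is not the invariance/uniqueness step, which is soft, but the two analytic facts underpinning it. First, one must know that $\nu(X_{n})<\infty$, i.e. that $SL(n,\mathbb{Z})$ is a lattice in $SL(n,\mathbb{R})$, so that Haar measure can be normalized to a probability measure; this is the content of reduction theory (Siegel domains, Minkowski reduction) and is the genuinely hard input. Second, one must verify that $\widehat{g}\in L^{1}(X_{n},\nu)$ for bounded compactly supported $g$, so that $I(g)$ is finite and the interchange of summation and integration in the invariance step is legitimate; this needs a tail bound on the number of short lattice vectors, uniform enough to integrate over the cusp of $X_{n}$. Granting these two facts, the remainder is formal. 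I would also note as a fallback the elementary route of Hlawka, Rogers and Macbeath, which replaces $X_{n}$ by an explicit finite family of lattices indexed by a prime $p$ and averages $\widehat{g}$ over that family: as $p\to\infty$ the averages converge to $\int_{\mathbb{R}^{n}}g$, giving both the existence of the measure as a weak limit and the identity, at the cost of an extra limiting argument rather than the finiteness-of-covolume input.
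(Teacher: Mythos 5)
This statement is one the paper does not prove at all: it is the classical Siegel mean value theorem, quoted in Appendix E with citations to \cite{PoltirevJournal} and \cite{LekkerkerkerGeomety} and then used as a black box to bound the ensemble-average error probability. So there is no in-paper proof to match; what you have reconstructed is precisely the standard proof found in the cited literature (Siegel's original argument, in Weil's streamlined form): realize $\nu$ as normalized Haar measure on $SL(n,\mathbb{R})/SL(n,\mathbb{Z})$, observe that the Siegel transform $g\mapsto\widehat{g}$ yields an $SL(n,\mathbb{R})$-invariant positive linear functional, invoke transitivity of $SL(n,\mathbb{R})$ on $\mathbb{R}^{n}\setminus\{0\}$ and uniqueness of the invariant measure on that orbit to get proportionality to Lebesgue measure, normalize the constant, and rescale by $\gamma^{-1/n}$ to pass from covolume $1$ to density $\gamma$. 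Your outline is correct, and two of your side remarks are genuinely needed and easy to miss: the observation that the identity as stated (sum over all of $\Lambda$, including the origin) can only hold when $g(0)=0$ --- which is satisfied by the paper's $f_{\rc}$ and $g_{\rc}$, since the pairwise error probability at $\udl{x}=0$ vanishes --- and the correct identification of the rescaling Jacobian as the source of the factor $\gamma$.

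Two caveats, both of which you partially flag yourself. First, the proposal is a sketch rather than a proof: the two inputs you defer --- finiteness of $\nu(X_{n})$ via reduction theory, and the $L^{1}$ bound on $\widehat{g}$ over the cusp --- are exactly where the real work lives, and the normalization $c=1$ also needs a concrete argument (e.g., taking $g$ the indicator of a ball of radius $R$, dividing by its volume, and letting $R\to\infty$ with dominated convergence, which again uses the $L^{1}$ bound). Second, your fallback route (Hlawka/Rogers/Macbeath averaging over a finite family indexed by a prime $p$) gives the identity for each fixed $g$ only in the limit $p\to\infty$; extracting from it a \emph{single} measure satisfying the exact identity for all admissible $g$ simultaneously --- which is what the paper's Appendix E actually requires --- needs tightness or uniform integrability, since $\widehat{g}$ is unbounded near the cusp, so the weak-limit step is not automatic. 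Finally, the theorem asks for Riemann integrable $g$, not just continuous $g$; this extension is routine in your framework (sandwich $g$ between continuous compactly supported $g_{-}\le g\le g_{+}$ with small $L^{1}$ gap and use positivity of the functional), but it deserves a sentence.
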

Note that considering a $2K_{l}T_{l}$-real dimensional lattices enables us to use this theorem. Hence, by
choosing $\gamma=\gamma_{\tr}$, $g(\udl{x})=g_{\rc}(\udl{x})$, and considering \eqref{eq:LatticeTransmiterEnsembleUpperBoundErrorProbCrudeForm}, \eqref{eq:EqualityBetweenElipseAndBallIntegral} we
get the desired upper bound \eqref{eq:TheAverageLatticesUpperBound}. As a
result, we can upper bound the ensemble average decoding error
probability for each channel realization by the upper bound from
Theorem \ref{Th:UpperBoundErrorProb}
\eqref{eq:ICUpperBoundErrorProbForAllCases}.

Now we are ready to lower bound the diversity order. According to
Theorem \ref{Th:MinHal} there exists a single probability measure
that satisfies \eqref{eq:MinHalTh}, for any Riemann integrable
function that vanishes outside some bounded region. Based on
\eqref{eq:LowerNoundOnThedeterminantHeff} and Lemma
\ref{lem:ErrorIntegrandAlwaysNegative}, we get for the set
$\{\xi_{\udl{i},\udl{j}}|\sum_{i=1}^{N}\sum_{j=1}^{M}\xi_{i,j}\le
T_{l}(K_{l}-r);\xi_{\udl{i},\udl{j}}\ge 0\}$ a set of functions,
$g_{\rc}(\udl{x})$, which are bounded. As a result we can upper
bound the ensemble average decoding error probability for this set
by the expression from \eqref{eq:ICUpperBoundErrorProbForAllCases}.
For the set of events
$\{\xi_{\udl{i},\udl{j}}|\sum_{i=1}^{N}\sum_{j=1}^{M}\xi_{i,j}>
T_{l}(K_{l}-r);\xi_{\udl{i},\udl{j}}\ge 0\}$ we upper bound the
ensemble average decoding error probability by 1. This bounds are
the exact same bounds we used in order to average over the channel
realizations in Theorem \ref{Th:LowerBoundDiversityOrder}. Hence, by
averaging over the channel realizations we get for the ensemble the
same lower bound on the diversity order as in Theorem
\ref{Th:LowerBoundDiversityOrder}. This concludes the proof.

\section{Proof of Corollary
\ref{Cor:SequenceICAttainsTheEntireLine}}\label{append:SequenceICAttainsTheEntireLine}
Let $P_{e}(S(\rho),r)$ denote the average decoding error probability
of the IC $S(\rho)$ with density $\gamma_{tr}=\rho^{rT}$. Since
$S_{KT}(\rho)$ has density $\gamma_{tr}=1$ for every $\rho$, this
IC's sequence has multiplexing gain $r=0$. Hence, in accordance with
our definitions, we denote $S_{KT}(\rho)$ average decoding error
probability by $P_{e}(S_{KT}(\rho),0)$. Assume
$$P_{e}(S_{KT}(\rho),0)=A^{'}(\rho)\rho^{-d}$$
where $-\lim_{\rho\to\infty}\log_{\rho}P_{e}(S_{KT}(\rho),0)=d$,
i.e. $S_{KT}(\rho)$ has diversity order $d$. By scaling the sequence
of IC's such that
$$\ol{S}_{KT}(\rho)=S_{KT}(\rho)\cdot\rho^{-\frac{r}{2K}}\qquad 0\le r\le K,$$
i.e., scaling $S_{KT}(\rho)$ by a factor of $\rho^{-\frac{r}{2K}}$,
we get that $\ol{S}_{KT}(\rho)$ has density $\gamma_{tr}=\rho^{rT}$,
multiplexing gain $r$ and so its error probability
$$P_{e}(\ol{S}_{KT}(\rho),r)=P_{e}(S_{KT}(\rho^{1-\frac{r}{K}}),0)=A^{'}(\rho^{1-\frac{r}{K}})\rho^{-d(1-\frac{r}{K})}.$$
As a result we get
$-\lim_{\rho\to\infty}\log_{\rho}P_{e}(\ol{S}_{KT}(\rho),r)=d(1-\frac{r}{K})$,
i.e. $\ol{S}_{KT}(\rho)$ has diversity order $d(1-\frac{r}{K})$.

\section{Proof of Corollary
\ref{cor:SingleICAttainsTheOptimalDMT}}\label{append:SingleICAttainsTheOptimalDMT}
The proof of this corollary relies heavily on Theorem
\ref{Th:UpperBoundErrorProb}. We begin by describing the $L$
ensembles of IC's and how they are transmitted. Then we use
averaging arguments in order to show that there exists a singe
sequence of IC's that attains the optimal DMT.

We begin by considering a sequence of $K_{0}T_{0}$-complex dimensional IC's with multiplexing gain $r=0$, i.e. the transmitter density $\gamma_{tr}=1$ for any $\rho$. In a similar manner to Theorem \ref{Th:UpperBoundErrorProb}, we first consider an ensemble of finite constellations drawn uniformly within $\cube_{K_{0}T_{0}}(b)\subset\mathbb{C}^{K_{0}T_{0}}$. Each code-book contains $\lfloor
\gamma_{\mathrm{tr}}b^{2K_{0}T_{0}}\rfloor=\lfloor
b^{2K_{0}T_{0}}\rfloor$ points, where each point
is drawn uniformly within $cube_{K_{0}T_{0}}(b)$. Let us denote a certain finite constellation in the ensemble by $C_{FC}(\rho,K_{0}T_{0},b)\subset\cube_{K_{0}T_{0}}(b)$. We extend each finite constellation in the ensemble into an IC in a similar manner to \eqref{eq:ExtendingFinteConstoICInTheTransmitter}
\begin{equation}\label{eq:ICEnsembleForTheCaseK0T0}
IC(\rho,K_{0}T_{0})=C_{FC}(\rho,K_{0}T_{0},b)+(b+b^{'})\cdot\mathbb{Z}^{2K_{0}T_{0}}.
\end{equation}
By choosing $b=\sqrt{\frac{K_{0}T_{0}}{\pi e}}\rho^{\frac{K_{0}T_{0}}{2}+2\epsilon}$ and $b^{'}=\sqrt{\frac{K_{0}T_{0}}{\pi e}}\rho^{\frac{K_{0}T_{0}}{2}+\epsilon}$, we get a sequence of ensembles of IC's with multiplexing gain $r=0$. For a certain channel realization $\udl{\eta}\ge 0$ we get in accordance with Theorem \ref{Th:UpperBoundErrorProb}
\begin{equation}
\ol{P_{e}}(\rho,\udl{\eta},K_{0}T_{0})\le
D(K_{0}T_{0})\rho^{-T_{0}K_{0}+\sum_{i=1}^{K_{0}T_{0}}\eta_{i}}
\end{equation}
where $\ol{P_{e}}(\rho,\udl{\eta},K_{0}T_{0})$ is the average
decoding error probability of the $K_{0}T_{0}$-complex dimensional
ensemble of IC's. From Theorem \ref{Th:LowerBoundDiversityOrder} we
know that by transmitting the ensemble of IC's over the transmission
matrix $G_{0}$, and averaging over the channel realizations, we get
diversity order $d_{K_{0}}=MN$. Transmitting over $G_{0}$ gives us a
$K_{0}T_{0}$-complex dimensional ensemble of IC's within
$\mathbb{C}^{MT_{0}}$.

Next we derive from the $K_{0}T_{0}$-complex dimensional ensemble of IC's, another $K_{l}T_{l}$-complex dimensional ensemble of IC's, where $l=1,\dots,L-1$. For each IC, $IC(\rho,K_{0}T_{0})$, in the ensemble we take the first $\lfloor
b^{2K_{l}T_{l}}\rfloor$ points in $C_{FC}(\rho,K_{0}T_{0},b)$. We take the components of these points inside $\cube_{K_{l}T_{l}}(b)$, and denote this
new finite constellation as $C_{FC}(\rho,K_{l}T_{l},b)$. Then we replicate these points in a similar manner to \eqref{eq:ICEnsembleForTheCaseK0T0}. In this case we get
a new $K_{l}T_{l}$-complex dimensional IC
\begin{equation}
IC(\rho,K_{l}T_{l})=C_{FC}(\rho,K_{l}T_{l},b)+(b+b^{'})\cdot\mathbb{Z}^{2K_{l}T_{l}}.
\end{equation}
By doing it to each IC in the ensemble, we get a new $K_{l}T_{l}$-complex dimensional ensemble of IC's. This new ensemble is equivalent to ensemble of IC's generated by drawing uniformly $\lfloor
b^{2K_{l}T_{l}}\rfloor$ points inside $\cube_{K_{l}T_{l}}(b)$, and then replicate these points according to $(b+b^{'})\mathbb{Z}^{2K_{l}T_{l}}$. Each IC sequence in this ensemble has multiplexing gain $r=0$. Since $b>\sqrt{\frac{K_{l}T_{l}}{\pi e}}\rho^{\frac{K_{l}T_{l}}{2}+2\epsilon}$ and $b^{'}>\sqrt{\frac{K_{l}T_{l}}{\pi e}}\rho^{\frac{K_{l}T_{l}}{2}+\epsilon}$, we get in accordance with Theorem \ref{Th:UpperBoundErrorProb} that for a certain channel realization $\udl{\eta}\ge 0$
\begin{equation}
\ol{P_{e}}(\rho,\udl{\eta},K_{l}T_{l})\le
D(K_{l}T_{l})\rho^{-T_{l}K_{l}+\sum_{i=1}^{K_{l}T_{l}}\eta_{i}}
\end{equation}
where $\ol{P_{e}}(\rho,\udl{\eta},K_{l}T_{l})$ is the average
decoding error probability of the $K_{l}T_{l}$-complex dimensional
ensemble of IC's. By transmitting this ensemble of IC's on the
transmission matrix $G_{l}$, and averaging over the channel
realizations, we get diversity order $d_{K_
{l}}=(M-l)(N-l)+l(N+M-2\cdot l-1)$. Transmitting over $G_{l}$ gives
us a $K_{l}T_{l}$-complex dimensional ensemble of IC's within
$\mathbb{C}^{MT_{l}}$.

From the sequential structure of the transmission scheme we get that omitting the $2\cdot l$ rightmost columns of $G_{0}$ yields $G_{l}$. Hence we can derive from the $K_{0}T_{0}$-complex dimensional ensemble of IC's, that attains diversity order $d_{K_{0}}$, another $K_{l}T_{l}$-complex dimensional ensemble of IC's the attains diversity order $d_{K_{l}}$, where $l=1,\dots,L-1$. We attain it by diluting the points of each $K_{0}T_{0}$-complex dimensional IC in the ensemble in the aforementioned manner, and then reducing its dimensionality by dropping the $2\cdot l$ rightmost columns of $G_{0}$.

So far we have shown the connection between the ensembles. Now we would like to show that there exists a certain sequence of $K_{0}T_{0}$-complex dimensional IC's, that gives us the desired diversity orders by diluting its points and adapting its dimensionality. We denote the average decoding error probability of the $K_{l}T_{l}$-complex dimensional ensemble of IC's by $A_{l}(\rho)\rho^{-d_{K_{l}}}$, where $\lim_{\rho\to\infty}\frac{\log(A_{l}(\rho))}{\log(\rho)}=0$. We also define $I_{l,\rho}$ as the event where a $K_{l}T_{l}$-complex dimensional IC in the ensemble has average decoding error probability which is smaller or equal to $(L+1)A_{l}(\rho)\rho^{-d_{K_{l}}}$, where $l=0,\dots,L-1$. From averaging arguments we know that $Pr(I_{l,\rho})\ge\frac{L}{L+1}$. We wish to show that the probability of the event $\{I_{0,\rho}\cap I_{1,\rho}\dots\cap I_{L-1,\rho}\}$ is bounded away from zero. From averaging arguments we know that
$$Pr(I_{0,\rho}\cap I_{1,\rho}\dots\cap I_{L-1,\rho})\ge1-\sum_{i=0}^{L-1}Pr(I_{i,\rho})\ge\frac{1}{L+1}.$$
Hence there must exist a sequence of $K_{0}T_{0}$-complex dimensional IC's that attains diversity order $d_{K_{0}}$ and has multiplexing gain $r=0$, from which we can derive for each $l=1,\dots,L-1$, a sequence of $K_{l}T_{l}$-complex dimensional IC's with multiplexing gain $r=0$ and diversity order $d_{K_{l}}$.

Next we show that these $L$ sequences attain the optimal DMT. Consider a sequence of $K_{l}T_{l}$-complex dimensional IC's, that has multiplexing gain $r=0$ and attains diversity order $d_{K_{l}}$. From Corollary \ref{Cor:SequenceICAttainsTheEntireLine} we know that scaling this sequence by a scalar $\rho^{-\frac{r}{2K_{l}}}$ yields a new sequence of IC's with multiplexing gain $r$ and diversity order
$$d_{K_{l}}(r)=(M-l)(N-l)-(r-l)(N+M-2\cdot l-1)$$
where $0\le r\le K_{l}$ and $l=0,\dots,L-1$. Each of the $L$ straight lines $d_{K_{l}}(r)$, $l=0,\dots,L-1$, coincides with a different segment out of the $L$ segments of the optimal DMT. This concludes the proof.
\end{appendices}

\section*{Acknowledgment}
The authors wish to thank Joseph J. Boutros for interesting
discussions regarding this work, and also to Or Ordentlich for
fruitful discussions on subsection
\ref{subsec:LatticeVsLatticeBasedFC}.

\bibliographystyle{IEEEtran}
\bibliography{IEEEabrv,YairRef}

\end{document}